\Crefname{algocf}{Algorithm}{Algorithms}
\tikzset{edge from parent/.style={draw, edge from parent path=
    {(\tikzparentnode) -- (\tikzchildnode)}}
   ,level distance={3em},sibling distance={.5in}}
\pgfplotsset{compat=1.13}
\newtheorem{theorem}{Theorem}
\newtheorem{lemma}[theorem]{Lemma}
\newtheorem{corollary}[theorem]{Corollary}
\newtheorem{definition}{Definition}
\newtheorem{example}{Example}
\theoremstyle{remark}
\newtheorem*{remark}{Remark}
\newcommand{\nop}[1]{}
\newcommand{\inttimes}{\mathop{\raisebox{0.2ex}{\ooalign{\(\bigcap\)\cr\hspace{0.3mm}\raisebox{-0.2mm}{\({\times}\)}}}}}
\newcommand{\ctd}{\ensuremath{\textsc{CandidateTD}}\xspace}
\newcommand{\hw}{\ensuremath{\mathit{hw}}\xspace}
\newcommand{\ghw}{\ensuremath{\mathit{ghw}}\xspace}
\newcommand{\fhw}{\ensuremath{\mathit{fhw}}\xspace}
\newcommand{\tw}{\ensuremath{\mathit{tw}}\xspace}
\newcommand{\mw}{\ensuremath{\mathit{mw}}\xspace}
\newcommand{\width}{\ensuremath{\mathit{width}}\xspace}
\newcommand{\detk}{\texttt{det-k-decomp}\xspace}
\newcommand{\logk}{\texttt{log-k-decomp}\xspace}
\newcommand{\newdetk}{\texttt{new-det-k-decomp}\xspace}
\newcommand{\optk}{\texttt{opt-k-decomp}\xspace}
\newcommand{\softhw}{\ensuremath{\mathit{shw}}\xspace}
\newcommand{\shw}{\ensuremath{\mathit{shw}}\xspace}
\newcommand{\softbags}{\ensuremath{\mathsf{Soft}}\xspace}
\newcommand{\softbagshk}{\ensuremath{\softbags_{H,k}}\xspace}
\newcommand{\pone}{\textbf{I}\xspace}
\newcommand{\ptwo}{\textbf{II}\xspace}
\newcommand{\pthree}{\textbf{III}\xspace}
\newcommand{\irmw}{\ensuremath{\mathit{irmw}}\xspace}
\begin{document}

\author{Matthias Lanzinger}
\orcid{0000-0002-7601-3727}
\affiliation{%
  \institution{Institute of Logic and Computation, TU Wien}
  \streetaddress{Favoritenstraße 9-11}
  \city{1040 Wien}
  \country{Austria}
}
\email{matthias.lanzinger@tuwien.ac.at}

\author{Cem Okulmus}
\orcid{0000-0002-7742-0439}
\affiliation{%
  \institution{Institute of Computer Science, Paderborn University}
  \streetaddress{Fürstenallee 11 }
  \city{33102 Paderborn}
  \country{Germany}
}

\email{cem.okulmus@uni-paderborn.de}

\author{Reinhard Pichler}
\orcid{0000-0002-1760-122X}
\affiliation{%
  \institution{Institute of Logic and Computation, TU Wien}
  \streetaddress{Favoritenstraße 9-11}
  \city{1040 Wien}
  \country{Austria}
}

\email{pichler@dbai.tuwien.ac.at}

\author{Alexander Selzer}
\email{alexander.selzer@tuwien.ac.at}
\orcid{0000-0002-6867-5448}
\affiliation{%
  \institution{Institute of Logic and Computation, TU Wien}
  \streetaddress{Favoritenstraße 9-11}
  \city{1040 Wien}
  \country{Austria}
}

\author{Georg Gottlob}
\orcid{0000-0002-2353-5230}
\affiliation{%
  \institution{Dipartimento di Matematica e Informatica, University of Calabria}
  \streetaddress{Via P. Bucci - Edificio 30B}
  \city{87036 Arcavacata di Rende}
  \country{Italy}
}
\email{georg.gottlob@sjc.ox.ac.uk}

\title{Soft and Constrained Hypertree Width}

\begin{CCSXML}
<ccs2012>
       <concept_id>10002951.10002952.10003197.10010822</concept_id>
       <concept_desc>Information systems~Relational database query languages</concept_desc>
       <concept_significance>300</concept_significance>
       </concept>
   <concept>
       <concept_id>10002950.10003624.10003633.10003637</concept_id>
       <concept_desc>Mathematics of computing~Hypergraphs</concept_desc>
       <concept_significance>300</concept_significance>
       </concept>
       
<concept>
<concept_id>10003752.10010070.10010111.10011711</concept_id>
<concept_desc>Theory of computation~Database query processing and optimization (theory)</concept_desc>
<concept_significance>500</concept_significance>
</concept>
</ccs2012>
\end{CCSXML}

\ccsdesc[500]{Theory of computation~Database query processing and optimization (theory)}
\ccsdesc[300]{Information systems~Relational database query languages}
\ccsdesc[300]{Mathematics of computing~Hypergraphs}

\begin{abstract}
Hypertree decompositions provide a way to evaluate  Conjunctive Queries (CQs) in polynomial time, where the  exponent of this polynomial is determined by the width of the decomposition. 
In theory, the goal of efficient CQ evaluation therefore has to be a minimisation of the width.
However, in practical settings, it turns out that there are also other properties of a decomposition that influence the performance of query evaluation. It is therefore of interest to restrict the computation of decompositions by constraints and to guide this computation by preferences. 
To this end, we propose a novel framework based on candidate tree decompositions, which allows us to introduce soft hypertree width (shw). This width measure is a relaxation of hypertree width (hw); it is never greater than hw and, in some cases, shw may actually be lower than hw.
Most importantly, shw preserves the tractability of deciding if a given CQ is below some fixed bound,  while offering more algorithmic flexibility. 
In particular, it provides a natural way to incorporate preferences and constraints into the computation of decompositions. A prototype implementation and preliminary experiments confirm that this novel framework  can indeed have a practical impact on query evaluation.
\end{abstract}

\keywords{hypergraph decomposition, hypertree width, constrained decompositions}

\maketitle

\section{Introduction}
\label{sec:introduction}

Since their introduction nearly 25 years ago, hypertree decompositions (HDs) and hypertree width (\hw) have emerged as a cornerstone in the landscape of database research. Their enduring relevance lies in their remarkable ability to balance theoretical elegance with practical applicability. Hypertree width generalises $\alpha$-acyclicity, 
a foundational concept for the efficient evaluation of conjunctive queries (CQs)~\cite{DBLP:conf/vldb/Yannakakis81}. 
Indeed, CQs of bounded hypertree width 
can be evaluated in polynomial time. Crucially, 
determining whether a CQ has $\hw \leq k$ for fixed $k \geq 1$ and, if so, constructing an HD of width $\leq k$
can also be achieved in polynomial time~\cite{GLS}.

Over the decades, hypertree width has inspired the development of generalised hypertree width (\ghw)~\cite{ghw3} and fractional hypertree width (\fhw)~\cite{fhw}, each extending the  boundaries of tractable CQ evaluation. These generalisations induce larger classes of tractable CQs.
Notably, the hierarchy
\[
\fhw(q) \leq \ghw(q) \leq \hw(q)
\]
holds for any CQ q, with the respective width measure determining the exponent of the polynomial bound on the query evaluation time. 
Consequently, practitioners naturally gravitate towards width measures that promise the smallest possible decomposition widths.
Despite this motivation, HDs  and $\hw$ retain a crucial distinction: 
$\hw$ is, to date,  the most general width measure for it can be decided in polynomial time~\cite{GLS} if the width of a given query is below some fixed  bound $k$. 
By contrast, generalised and fractional hypertree width, while often yielding smaller width values, 
suffer from computational intractability in their exact determination~\cite{ghw3,JACM}.
In this work, we introduce a novel width notion—soft hypertree width (\(\softhw\)) -- that overcomes this trade-off by retaining polynomial-time decidability while potentially reducing widths compared to \(\hw\).

But this is only a first step towards a larger goal.
Query evaluation based on decomposition methods works via a transformation of a given CQ with low width 
into an acyclic CQ by computing 
``local'' joins for the bags at each node of the decomposition. The acyclic CQ is then evaluated by Yannakakis' 
algorithm~\cite{DBLP:conf/vldb/Yannakakis81}. From a theoretical perspective, the complexity of this approach 
solely depends on the width, while the actual cost of the local joins and the size of the relations produced by these joins is ignored.
To remedy this short-coming, Scarcello et al.~\cite{DBLP:journals/jcss/ScarcelloGL07} introduced {\em weighted HDs} 
to incorporate costs into HDs. More precisely, among the HDs with minimal width, 
they aimed at an HD that minimises the cost of the local joins needed to transform the CQ into an acyclic one
and the cost of the (semi-)joins required by Yannakakis' algorithm. Promising first empirical results were obtained with this approach.

However, there is more to decomposition-based query optimisation than minimising the  width.
For instance, $\hw = 2$ means that we have to compute joins of  two relations to transform the 
given query into an acyclic one. Regarding width, it makes no difference if the 
join is a Cartesian product of two completely unrelated relations or a comparatively cheap join along a 
foreign key relationship. This phenomenon is, in fact, confirmed by Scarcello et al.~\cite{DBLP:journals/jcss/ScarcelloGL07}
when they report on decompositions with higher width but lower cost in their experimental results. Apart from the pure join costs 
at individual nodes or between neighbouring nodes in the decomposition, there may be yet other reasons why a decomposition with slightly higher
width should be preferred. For instance, in a distributed environment, it makes a huge difference, if the required joins 
and semi-joins are between relations on the same
server or on different servers.
In other words, 
apart from incorporating costs of the joins at a node and between neighbouring nodes in the HD,
a more general approach that considers constraints and preferences is called for. 

In this work, we propose a novel framework based on
the notion of {\em candidate tree decompositions} (CTDs) 
from \cite{JACM}. A CTD is a tree decomposition  (TD)
whose bags have to be chosen from a specified set of vertex sets (= the {\em candidate bags}).  
One can then reduce the problem of finding a particular decomposition (in particular, 
generalised or fractional hypertree decomposition) for  given CQ $q$
to the problem of finding a CTD for an appropriately defined set of candidate bags.
In \cite{JACM}, this idea was used  (by specifying  appropriate sets of candidate bags)
to identify tractable fragments of deciding, for given CQ $q$ and fixed $k$, 
if $\ghw(H) \leq k$ or $\fhw(H) \leq k$ holds. 
However, as was observed in 
\cite{JACM}, it is unclear how the idea of CTDs could be applied to $\hw$-computation, where 
parent/child relationships between nodes in a decomposition are critical. 

We will overcome this problem by 
introducing a relaxed notion of HDs and $\hw$, which we will refer to as  {\em soft HDs} and {\em soft $\hw$} ($\softhw$).
The crux of this approach will be a relaxation of the so-called ``special condition'' of HDs (formal definitions of all concepts 
mentioned in the introduction will be given in Section~\ref{sec:preliminaries}): it will be restrictive enough 
to guarantee a polynomial upper bound
on the number of candidate bags and relaxed enough to make the approach oblivious of 
any  parent/child relationships of nodes in the decomposition.
We thus create a novel framework based on CTDs, that paves the way for several improvements over the original  
approach based on HDs and $\hw$, while retaining the crucial tractability of deciding, for fixed $k$,
if the width is $\leq k$ and, if so, computing a decomposition of the desired width.

First of all, $\softhw$ constitutes an improvement in terms of the width, i.e., it is never greater than $\hw$ and there
exist CQs $q$ with strictly smaller $\softhw$ then $\hw$. The definition of soft HDs and soft $\hw$ 
lends itself to a natural generalisation by specifying rules for extending a given set of candidate bags in a controlled manner.
We thus define a whole hierarchy of width measures 
$(\softhw^i)_{i \geq 0}$ with $\softhw^0 = \softhw$. Moreover, will prove that 
$\softhw^\infty = \ghw$ holds. In other words, we extend the above mentioned hierarchy of width measures to 
the following extended hierarchy: 
\[
\fhw(q) \leq \ghw(q)  = \softhw^\infty(q) \leq \softhw^i(q) \leq \softhw^0(q) = 
\softhw(q) \leq \hw(q),
\]
which holds for every CQ $q$. Basing our new width notion $\softhw$ on CTDs gives us a lot of computational flexibility. 
In particular, in \cite{JACM}, a straightforward algorithm was presented for computing a CTD for a given 
set of candidate bags. We will show how various types of constraints 
(such as disallowing Cartesian products) and 
preferences (which includes the minimisation of costs functions in \cite{DBLP:journals/jcss/ScarcelloGL07} as an important special case) 
can be incorporated into the computation of CTDs without destroying the polynomial time complexity. 
We emphasise that even though our focus in this work is on soft HDs and $\shw$, the incorporation of 
constraints and preferences into the CTD computation equally applies to any other use of CTDs -- such as the above  mentioned computation of generalised and fractional hypertree decomposition  in certain settings proposed in \cite{JACM}.
We have also carried out first experimental results with a prototype implementation of our approach. In short,
basing query evaluation on soft HDs and incorporating  constraints and preferences 
into their construction
may indeed lead to significant performance gains. Moreover the computation of soft HDs in a bottom-up fashion 
via CTDs instead of the top-down construction~\cite{DBLP:journals/jea/FischlGLP21,detk} 
or parallel computation of HDs~\cite{tods} is in the order of milliseconds and does not create a new bottleneck.

\paragraph{Contributions} 
In summary, our main contributions are as follows:
\begin{itemize}
    \item We introduce a new type of decompositions plus associated width measure for CQs -- namely soft hypertree decompositions 
    and soft hypertree width. We show that this new width measure $\shw$  retains the tractability of $\hw$, it can never get 
    greater than $\hw$ but, in some cases, it may get smaller. Moreover, we generalise $\shw$ to a whole
    hierarchy $(\shw^i)_{i \geq 0}$ of width measures that interpolates the space between $\hw$ and $\ghw$.
    \item The introduction of our new width measures is made possible by dropping the ``special condition'' of HDs.
    This allows us the use of candidate TDs, which brings a lot of computational flexibility. 
    We make use of this added flexibility by 
    incorporating constraints and preferences in a principled manner into the computation of decompositions. These ideas of 
    extending the computation of decompositions is not restricted to $\shw$ but is equally beneficial to $\ghw$ and $\fhw$ computation.
    
    \item We have implemented our framework. Preliminary experiments with our prototype implementation have produced promising results. In particular, the use of soft HDs and
    the incorporation of constraints and preferences into the computation of such decompositions may 
    indeed lead to a significant performance gain in query evaluation. 
\end{itemize}

\paragraph{Structure} 
The rest of this paper is structured as follows: After recalling some basic definitions and results on hypergraphs 
in Section~\ref{sec:preliminaries} we revisit the 
notion of candidate tree decompositions from \cite{JACM} in Section~\ref{sec:ctds}. In Section \ref{sec:softhw}, we introduce
the new notion of soft hypertree width $\softhw$, which is then iterated to $\softhw^i$ in Section~\ref{sec:softer}.
The integration of preferences and constraints into soft hypertree decompositions is discussed in Section~\ref{sec:constraints} and 
first experimental results with these decompositions are reported in Section~\ref{sec:exp}. Further details are provided in the appendix.

\section{Preliminaries}
\label{sec:preliminaries}

Conjunctive queries (CQs) are first-order formulas whose only connectives are existential quantification and conjunction,  i.e., $\forall, \vee$,
and $\neg$ 
are disallowed. Likewise, Constraint Satisfaction Problems (CSPs) are given in this form. Hypergraphs have proved to be a useful abstraction of 
CQs and CSPs. Recall that a {\em hypergraph} $H$ is a pair $(V(H), E(H))$ where $V(H)$ is called the set of vertices and $E(H)\subseteq 2^{V(H)}$ is the set of \emph{(hyper)edges}. Then the hypergraph $H(\phi)$ corresponding to a CQ or CSP given by a logical formula $\phi$ 
has as vertices the variables occurring in $\phi$. Moreover, 
every collection of variables jointly occurring in an atom of $\phi$ forms an edge in $H(\phi)$. 
From now on, we will mainly talk about hypergraphs with the understanding that all results equally apply to CQs and CSPs.

We write $I(v)$ for the set $ \{e\in E(H)\mid v\in e\}$ of edges incident to vertex $v$.
W.l.o.g., we assume that  hypergraphs have no isolated vertices, i.e., every $v\in V(H)$ is in some edge. 
A set of vertices $U \subseteq V(H)$ induces the \emph{induced subhypergraph} $H[U]$ with vertices $U$ and edges $\{e \cap U\mid e\in E(H)\}\setminus \{ \emptyset \}$. 

Let $H$ be a hypergraph 
and let $u, v$ be two distinct vertices in $H$. A path from $u$ to $v$ is a sequence of vertices $w_1, \dots, w_m$ in $H$, such that 
$w_0 = u$, $w_m = v$, and, for every $j \in \{0, \dots, m-1\}$, the vertices $w_j, w_{j+1}$ jointly occur in some edge of $H$.
Now let $S \subseteq V(H)$. We say that two vertices $u,v$ are \emph{$[S]$-connected} if they are not in $S$ and there is a path from $u$ to $v$ without any vertices of $S$. Two edges $e,f$ are $[S]$-connected, if there are $[S]$-connected vertices 
$u \in e, v\in f$.
An \emph{$[S]$-component} is a maximal set of pairwise $[S]$-connected edges. 
For a set of edges $\lambda \subseteq E(H)$, we will simply speak of $[\lambda]$-components in place of $[\bigcup \lambda]$-components. 
We note that the term ``$[S]$-connected'' is slightly misleading, since -- contrary to what one might expect -- it means 
that the vertices of $S$ are actually {\em disallowed} in a path that connects two vertices or edges. However, this terminology has been commonly used in the literature 
on hypertree decompositions since their introduction in \cite{GLS}. To avoid confusion, we have decided to stick to it also here.

A \emph{tree decomposition} (TD) of hypergraph $H$ is a tuple $(T,B)$ where $T$ is a tree and $B : V(T) \to 2^{V(H)}$ such that the following hold:
\begin{enumerate}[topsep=3pt,noitemsep]
    \item for every $e \in E(H)$, there is a node $u$ in $T$ such that $e \subseteq B(u)$,
    \item and for every $v \in V(H)$, $\{ u \in V(T) \mid v \in B(u) \}$ induces a non-empty subtree of $T$.
\end{enumerate}
The vertex sets $B(u)$ are referred to as {\em bags} of the TD.
The latter condition is referred to as the {\em connectedness condition}. 
We sometimes assume that a TD is rooted. In this case, we write $T_u$ for $u \in V(T)$ to refer to the subtree rooted at $u$. For a subtree $T'$ of $T$, we write $B(T')$ for $\bigcup_{u \in V(T')} B(u)$.

A {\em generalised hypertree decomposition} (GHD) of a hypergraph $H$ is a triple 
$(T, \lambda,B)$, such that  $(T,B)$ is a tree decomposition and 
$\lambda \colon V(T) \rightarrow 2^{E(H)}$ satisfies $B(u) \subseteq \bigcup \lambda(u)$ for every $u \in V(T)$. 
That is, every $\lambda(u)$ is an {\em edge cover} of $B(u)$.
A {\em hypertree decomposition}
(HD) of a hypergraph $H$ is a GHD $(T,\lambda, B)$, where the tree $T$ is rooted and the so-called {\em special condition} holds, i.e., 
for every $u \in V(T)$, we have $B(T_u) \cap \bigcup \lambda(u) \subseteq B(u)$. Actually, in the literature on (generalised) hypertree decompositions, it is more common to use $\chi$ instead of $B$. Moreover, rather than explicitly 
stating $B$, $\chi$, and $\lambda$ as functions, they are usually considered as {\em labels} -- writing $B_u, \lambda_u$, and
$\chi_u$ rather than $B(u)$, $\chi(u)$, and $\lambda(u)$,
and referring to them as  $B$-, $\chi$-, and $\lambda$-label of node $u$.

All these decompositions give rise to a notion of {\em width}: The width of a TD $(T,B)$ is defined as 
$\max(\{|B(u)| : u$ is a node in $T \} -1$, and the width of a (G)HD $(T,\lambda,\chi)$ is defined as 
$\max(\{|\lambda(u)| : u$ is a node in $T \}$. Then the {\em treewidth} $\tw(H)$, 
the {\em hypertree-width} $\hw(H)$, and  the {\em generalised hypertree-width} $\ghw(H)$ of a hypergraph $H$ are
defined as the minimum width over all TDs, HDs, and GHDs, respectively, of $H$.
The problems of finding the $\tw(H)$, $\hw(H)$, or $\ghw(H)$ for given hypergraph $H$ (strictly speaking, the 
decision problem of deciding whether any of these width notions is $\leq k$ for given $k$) are NP-complete~\cite{ArnborgEtAl,DBLP:journals/ai/GottlobSS02}. 
Deciding $\tw \leq k$ or $\hw(H) \leq k$ becomes tractable, if $k$ is a fixed constant. In contrast, 
deciding $\ghw(H) \leq k$ remains NP-complete even if we fix $k = 2$~\cite{ghw3,JACM}.

We note that, in the literature on tree decompositions, it is common
practice to define TDs for graphs (rather than hypergraphs) in the first place.
The TDs of a hypergraph $H = (V(H), E(H))$ are then defined as TDs of the so-called {\em Gaifman graph} of $H$, i.e., the graph 
$G = (V(G), E(G))$, with $V(G) = V(H)$, such that $E(G)$ contains an edge between two vertices $u,v$ if and only if $u,v$ jointly occur in an edge in $E(H)$.
Clearly, every edge of a hypergraph $H$ gives rise to a clique in the Gaifman graph $G$.  Moreover, it is easy to verify that 
every clique of a graph has to be contained in some bag of a TD. Hence, for TDs, the Gaifman graph contains all the relevant information 
of the hypergraph. In contrast, for HDs and GHDs, we additionally have to keep track of the edges of the hypergraph, since only these are allowed to be 
used in the $\lambda$-labels. We have, therefore, preferred to define also TDs
directly for hypergraphs (rather than via the Gaifman graph).

\section{Revisiting Candidate Tree Decompositions}
\label{sec:ctds}

In this section, we revisit the problem of constructing tree decompositions by selecting the bags from a given set of candidate bags. 
This leads us to the notion of candidate tree decompositions (CTDs) and the \ctd problem, which we both define next.

\begin{definition}
\label{def:candidateTDs}
Let $H$ be a hypergraph and let $\mathbf{S} \subseteq 2^{V(H)}$ be a set of vertex sets of $H$  
(the so-called {\em ``candidate bags''}). 
A {\em candidate tree decomposition (CTD)} of $\mathbf{S}$ is a tree decomposition $(T,B)$ of $H$, such that, for every node $u$  of $T$, 
$B(u) \in \mathbf{S}$. 

In the  \ctd problem, we are given 
a hypergraph $H$ and a set $\mathbf{S}\subseteq 2^{V(H)}$ (= the set of candidate bags),
and we have to decide whether there exists a tree decomposition $(T,B)$ of $H$,
such that, for every node $u \in T$, we have $B(u) \in \mathbf{S}$. 
\end{definition}

The idea of generating TDs from sets of candidate bags was heavily used in a series of papers by 
Bouchitt{\'{e}} and Todinca~\cite{DBLP:conf/esa/BouchitteT98,DBLP:conf/stacs/BouchitteT99,DBLP:conf/stacs/BouchitteT00,%
DBLP:journals/siamcomp/BouchitteT01,DBLP:journals/tcs/BouchitteT02} 
in a quest to identify a large class of graphs for which the treewidth (and also the so-called minimum fill-in, for details see any of the cited papers)
can be computed in polynomial time. More specifically, this class of graphs $G$ is characterised by having a polynomial number of minimal separators (i.e., 
subsets $S$ of $V(G)$, such that two vertices $u,v$ of $G$ are in different connected components of the induced subgraph $G[V(G) \setminus S]$
but they would be in the same connected component of $G[V(G) \setminus S']$ for every proper subset $S' \subset S$).
The key to this tractability result was to precisely identify and compute in polynomial time 
the set of candidate bags via the minimal separators and the so-called potential maximal cliques of a given graph. 
Ravid et al.~\cite{DBLP:conf/pods/RavidMK19} applied these ideas to efficiently enumerate TDs under a monotonic cost measure, such as treewidth.

It was shown in~\cite{JACM}, that the problem of deciding whether a hypergraph has width $\leq k$ for various notions of decompositions 
(in particular, the generalised hypertree width $\ghw$ and the so-called fractional hypertree width $\fhw$)
can be reduced to the $\ctd$ problem by an appropriate choice of the set $S$ of candidate bags. However, it was also shown in \cite{JACM} that 
some care is required as far as the form of the sought after TDs is concerned. In particular, it was shown in \cite{JACM}, that, in the unrestricted form 
given in Definition~\ref{def:candidateTDs}, the \ctd problem is NP-complete. 
In order to ensure tractability, the following restriction on TDs was formally defined in~\cite{JACM}:

\begin{definition}
\label{def:CompNF}
    A rooted tree decomposition $(T,B)$ is in \emph{component normal form (CompNF)}, if for each node $u\in V(T)$, and for each child $c$ of $u$, there is exactly one $[B(u)]$-component $C_c$ such that $B(T_c) = \bigcup C_c \cup (B(u) \cap B(c))$. 
\end{definition}

Alternatively, we could  define CompNF
by making use of the fact that the TDs of a hypergraph $H$ are exactly the TDs of its Gaifmann graph $G(H)$
(see Section~\ref{sec:preliminaries}).
The CompNF condition given in Definition \ref{def:CompNF} can then be rephrased as requiring that, for a node $u$ and child node $c$ in a rooted TD $(T,B)$,
that the vertices in $B(T_c)$ must not be separated by the vertex set $B(u) \cap B(c)$.

It was shown in~\cite{JACM} that the \ctd problem becomes tractable if we ask for the 
existence of a CTD  {\em in CompNF}.
We note that also the algorithms presented in the works of 
Bouchitt{\'{e}} and Todinca (see, in particular, \cite{DBLP:conf/stacs/BouchitteT00,%
DBLP:journals/siamcomp/BouchitteT01}) as well as in \cite{DBLP:conf/pods/RavidMK19} implicitly only consider 
TDs in CompNF.
From now on, we consider the \ctd problem only in this restricted form. By slight abuse of notation, we will 
simply refer to it as the \ctd problem,  with the understanding that we are only looking for CTDs in CompNF.
Gottlob et al.~\cite{JACM} presented a poly-time algorithm,
which we recall  in a slightly adapted form in \Cref{alg:ctd}, for deciding the \ctd problem.

\begin{algorithm}[t]
\SetKwData{Left}{left}\SetKwData{This}{this}\SetKwData{Up}{up}
\SetKwFunction{Union}{Union}\SetKwFunction{FindCompress}{FindCompress}
\SetKw{Halt}{Halt}\SetKw{Reject}{Reject}\SetKw{Accept}{Accept}
\SetKw{Continue}{Continue}\SetKw{And}{and}
\SetAlgoSkip{bigskip}

\SetKwData{N}{N}

\DontPrintSemicolon
\SetKwInput{Output}{output}
\SetKwInput{Input}{input}

\Input{Hypergraph $H$ and a set $\mathbf{S} \subseteq 2^{V(H)}$.}
\Output{``Accept'', if there is a CompNF CTD for $\mathbf{S}$ \linebreak      ``Reject'', otherwise.}
\BlankLine
\SetKwProg{Fn}{Function}{}{}
\SetKwFunction{HasBase}{HasMarkedBasis}

    $blocks =$  all blocks headed by any $S \in \mathbf{S} \cup \{\emptyset \}$ \;
    \ForEach{$(S,C) \in blocks$}{
    \lIf{$C = \emptyset$}{
            $basis(S,C) \gets \emptyset$ 
        }
        \lElse{
        $basis(S,C) \gets \bot $ \tcc*[f]{a block is satisfied iff its basis is not $\bot$} 
        }
    }
    \BlankLine
    \Repeat{no blocks changed}{
      \ForEach{$(S,C) \in blocks$ that are not satisfied}{
        \ForEach{$X \in \mathbf{S} \setminus \{S\}$}{
          \If{$X$ is a basis of $(S,C)$}{
                 $basis(S,C) \gets X$ \;
        }
        }
      }
    }
    \BlankLine
          \If{$basis(\emptyset, V(H)) \neq \bot$\label{line:ctd.accept}}{
        \KwRet \Accept \;
      }
    \KwRet \Reject \;
\caption{CompNF Candidate Tree Decomposition of $\mathbf{S}$ }\label{alg:ctd}
\end{algorithm}

To discuss \Cref{alg:ctd} in detail, we first have to define the terminology used in the algorithm, namely
the notions of a ``block'', a ``basis'', and what it means that a block is ``satisfied''.
We call a pair $(S,C)$ of \emph{disjoint} subsets of $V(H)$ a
\emph{block} if $C$ is a maximal set of $[S]$-connected vertices of $H$ or 
$C = \emptyset$. We say that $(S,C)$ is \emph{headed} by $S$.  For two blocks $(S,C)$  and $(X,Y)$ define $(X,Y) \leq (S,C)$ if
$X \cup Y \subseteq S \cup C$ and $Y \subseteq C$.
Note that the notion of a \emph{block} was already introduced in the works of 
Bouchitt{\'{e}} and Todinca~\cite{DBLP:conf/esa/BouchitteT98,DBLP:conf/stacs/BouchitteT99,DBLP:conf/stacs/BouchitteT00,%
DBLP:journals/siamcomp/BouchitteT01,DBLP:journals/tcs/BouchitteT02} and later used by Ravid et al.~\cite{DBLP:conf/pods/RavidMK19}. 
In this paper (following \cite{JACM}), blocks are defined slightly more generally in the sense that Bouchitt{\'{e}} and Todinca only 
considered blocks $(S,C)$ where $S$ is a minimal separator of the given graph (rather than an arbitrary, possibly even empty, subset of the vertices) and $C$ 
is not allowed to be empty.

	A block $(S, C)$ is \emph{satisfied} if a CompNF TD of $H [S \cup C]$ exists with root bag $S$. If $C = \emptyset$, satisfaction is trivial.
Finally, for a block $(S, C)$ and  $X \subseteq V(H)$ with $X \neq S$, 
let  $(X,Y_1), \dots, (X,Y_\ell)$ be all
the 
blocks headed by $X$ with $(X,Y_i) \leq  (S,C)$.
Then we say that
{\em $X$ is a basis of $(S, C)$} if 
the following conditions hold:
\begin{enumerate}%
\item \label{cond:basis1} $C \subseteq X \cup \bigcup_{i=1}^\ell Y_i$.
\item \label{cond:basis2} For each $e \in E(H)$ such that $e \cap C \neq \emptyset$,
$e \subseteq X \cup \bigcup_{i=1}^\ell Y_i$.
\item  \label{cond:basis3} For each $i \in [\ell]$, the block $(X, Y_i)$ is satisfied.
\end{enumerate}

The concepts of a {\em block} and of a {\em basis of a block} are related to a CTD (in CompNF) in the following way: 
Recall from Definition~\ref{def:CompNF} that if $u$, $c$ are parent-child nodes in a TD, then 
there is exactly one $[B(u)]$-component $C_c$ such that $B(T_c) = \bigcup C_c \cup (B(u) \cap B(c))$. 
Now consider the subtree $T'$ of $T$ that consists of the node $u$ plus the entire subtree $T_c$. 
Moreover, consider the block $(S,C)$ with $S = B[u]$ and $C = \bigcup C_c \setminus B(u)$. 
Then this block is indeed {\em satisfied} by taking as CTD of $H[S \cup C]$ 
the subtree $T'$ of $T$ (where the bag of each node in $T'$ is exactly the bag of the corresponding node in $T$).
The goal of \Cref{alg:ctd} is to satisfy progressively larger blocks, increasing $|S \cup C|$. If $(\emptyset, V(H))$ is satisfied, a CTD for $H$ exists, as checked in Line 11.

The relationship between the notions of {\em block},  {\em basis}, and {\em satisfaction} of a block is summarised by the following property that was proved in \cite{JACM}: {\em Let 
$(S,C)$ be a block and let $X$ be a basis of $(S,C)$, then $(S,C)$ is satisfied.}
The proof idea of this property is as follows:  
let  $(X,Y_1), \dots, (X,Y_\ell)$ be all
the blocks headed by $X$ with $(X,Y_i) \leq  (S,C)$. By the third condition of a basis, the block $(X, Y_i)$ is satisfied  for each $i \in [\ell]$. Hence, there exists
a CTD $(T_i,B_i)$ of each subhypergraph $H[X \cup Y_i]$, such that  
the bag of the root is $X$ for all theses CTDs. We can merge all these root nodes into a single node to form a single TD $(T',B')$ with 
the $T_i$'s as subtrees.
A CTD for $H[S \cup C]$ is then obtained by
taking as root a node with bag $S$ and 
appending the root of $T'$ (with bag $X$) plus the
entire tree $T'$.
The aim of 
the repeat-loop in 
\Cref{alg:ctd} (Lines 5-10) is precisely to check if 
we can mark yet another block $(S,C)$ as satisfied 
by identifying a basis for it. Initially, 
in the foreach-loop on Lines 2-4, only the
blocks with empty $C$-part are assigned the trivial basis $\emptyset$ and thus marked as satisfied.
Finally, on Lines 11-13, the algorithm returns ``Accept'' (i.e., a CTD of $H$ exists) if
the 
block $(\emptyset, V(H))$ is marked as 
satisfied via a non-empty basis. Otherwise, it returns ``Reject''.
The bottom-up approach in \Cref{alg:ctd}, constructing a CTD by combining subtrees, also appears similarly in prior work~\cite{DBLP:journals/siamcomp/BouchitteT01,DBLP:conf/pods/RavidMK19}.

For the polynomial-time complexity of~\Cref{alg:ctd}, the crucial observation is that the number $b$ of blocks $(S,C)$ is bounded by 
$|\mathbf{S}| * |V(H)|$, i.e., for each $S \in \mathbf{S}$, there cannot be more components $C$ than vertices in $H$.
As a coarse-grained upper bound on the complexity of~\Cref{alg:ctd}, we thus get $b^4 * ||H||$, i.e., each of the 3 levels of nested loops 
has at most $b$  iterations and the cost of checking the basis-property on Line 8 can be bounded by $b$ times the size 
of (some representation of) $H$.

\section{Soft Hypertree Width}
\label{sec:softhw}
So far, it is not known whether there is a set of candidate bags $\mathbf{S}_{H,k}$ for a hypergraph $H$ such that there is a CTD for $\mathbf{S}_{H,k}$ if and only if $\hw(H)\leq k$. However, in~\cite{tods}, it was shown that there always exists a HD of minimal width such that all bags of nodes $c$ with parents $p$ are of the form $B_c = \left(\bigcup \lambda_c \right) \cap \left( \bigcup C_p \right)$ where $C_p$ is a $[\lambda_p]$-component of $H$. In principle, this gives us a concrete way to enumerate a sufficient list of candidate bags. The number of all such bags is clearly polynomial in $H$: there are at most $|E(H)|^{k+1}$ sets of at most $k$ edges, and each of them cannot split $H$ into more than $|E(H)|$ components. The only point that is unclear when enumerating such a list of candidate bags, is how to decide beforehand whether two sets of edges $\lambda_c, \lambda_p$ are in a parent/child relationship (and if so, which role they take). However, we observe that the parent/child roles are irrelevant for
the polynomial bound on the number of candidate bags. Hence, we may drop this restriction
and instead simply consider all such combinations induced by any two sets of at most $k$ edges.
Concretely, this leads us to the following definitions.

\begin{definition}[The set $\softbagshk$]
\label{def:softbags}
For hypergraph $H$, we define $\softbagshk$ as the set that contains all sets of the form
\begin{equation}
B = \left( \bigcup \lambda_1 \right)\cap  \left (\bigcup C \right)
\label{bagform}
\end{equation}
where  $C$ is a $[\lambda_2]$-component of $H$ and $\lambda_1, \lambda_2$ are sets of at most $k$ edges of $H$.
\end{definition}

\begin{definition}[Soft Hypertree Width]
\label{def:softhw}
A \emph{soft hypertree decomposition} of width $k$ for hypergraph $H$ is a candidate tree decomposition for $\softbagshk$.
    The \emph{soft hypertree width (\softhw)} of hypergraph $H$ is the minimal $k$ for which there exists a soft hypertree decomposition of $H$.
\end{definition}

This measure naturally generalises the notion of hypertree width in a way that remains tractable to check (for fixed $k$), but removes the need for the special condition.

\begin{theorem}
\label{theorem:softhw:logcfl}
    Let $k\geq 1$. Deciding, for given hypergraph $H$, 
    whether $\softhw(H) \leq k$ holds, is feasible in polynomial time in the size of $H$.
    The problem even lies in the highly parallelisable class \mbox{\textsf{LogCFL}}. 
\end{theorem}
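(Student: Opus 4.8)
The plan is to reduce the problem to the \ctd problem already shown tractable (and, as we will argue, in \textsf{LogCFL}) in Section~\ref{sec:ctds}. Concretely, for a given hypergraph $H$ and fixed $k$, we want to (i) compute the candidate-bag set $\softbagshk$, (ii) verify that its size and the cost of producing it are polynomial in $\|H\|$, and (iii) invoke (a \textsf{LogCFL} implementation of) \Cref{alg:ctd} on the instance $(H, \softbagshk)$. By Definition~\ref{def:softhw}, the algorithm accepts iff a soft hypertree decomposition of width $k$ exists, i.e.\ iff $\softhw(H) \leq k$.

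First I would bound $|\softbagshk|$. Each bag has the form $B = (\bigcup \lambda_1) \cap (\bigcup C)$ where $\lambda_1, \lambda_2$ range over the at most $|E(H)|^k$ subsets of $E(H)$ of size $\leq k$, and $C$ ranges over the $[\lambda_2]$-components of $H$, of which there are at most $|V(H)|$ for each fixed $\lambda_2$. Hence $|\softbagshk| \leq |E(H)|^{2k} \cdot |V(H)|$, which is polynomial for fixed $k$; moreover each such $B$ is computable from $\lambda_1,\lambda_2$ by a connectivity computation (the $[\lambda_2]$-components are just connected components of $H$ after deleting $\bigcup\lambda_2$) followed by an intersection, so the whole set $\softbagshk$ can be produced within the stated bounds. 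Plugging $|\mathbf{S}| = |\softbagshk|$ into the complexity analysis of \Cref{alg:ctd} given at the end of Section~\ref{sec:ctds} (where the running time was bounded by $b^4 \cdot \|H\|$ with $b \le |\mathbf{S}| \cdot |V(H)|$) yields the claimed polynomial-time bound.

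For the \textsf{LogCFL} claim, I would argue that each ingredient stays inside \textsf{LogCFL} (equivalently $\textsf{SAC}^1$), which is closed under composition. Enumerating the index tuples $(\lambda_1,\lambda_2,C)$ and testing hyperedge incidences is $\textsf{AC}^0$; computing $[\lambda_2]$-components and the $[S]$-connectivity predicates needed throughout (in particular the basis conditions~\ref{cond:basis1}--\ref{cond:basis3}, which reduce to undirected reachability after vertex deletions) is in \textsf{L} $\subseteq$ \textsf{LogCFL}. The core of the argument is that deciding whether the fixpoint computation of \Cref{alg:ctd} eventually marks $(\emptyset, V(H))$ as satisfied is itself an \textsf{LogCFL} problem: ``satisfiability'' of a block is a monotone, bottom-up derivability relation over polynomially many blocks, each derivation step having a single ``basis'' premise that in turn unfolds into polynomially many already-satisfied sub-blocks $(X,Y_i)$ — exactly the shape of an $\textsf{AND}/\textsf{OR}$ (equivalently, semi-unbounded fan-in) derivation of polynomial size and logarithmic depth after suitable balancing. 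This mirrors the known membership of \texttt{check}-type hypertree-decomposition problems in \textsf{LogCFL} from \cite{GLS}; here it is if anything simpler because the bags come from an explicitly given polynomial-size set rather than being guessed.

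The main obstacle I anticipate is the \textsf{LogCFL} part, specifically giving a clean certificate/derivation-tree argument for the fixpoint in \Cref{alg:ctd}: one must show that whenever $(\emptyset,V(H))$ is satisfiable there is a derivation of it whose underlying tree has polynomial size and $O(\log \|H\|)$ depth, so that an $\textsf{SAC}^1$ circuit (or an alternating logspace machine with polynomial tree-size) can verify it. The natural derivation tree built from the basis relation can be unbalanced, so some care is needed — e.g.\ appealing to the standard fact that the CompNF CTD, viewed as the tree $T$, can be taken to be balanced, or to the tree-contraction technique used for the \textsf{LogCFL} bound in \cite{GLS} — to argue that a balanced derivation always exists. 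The polynomial-time statement, by contrast, is essentially immediate once the bound on $|\softbagshk|$ is in hand.
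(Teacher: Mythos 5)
Your proof is correct in substance, but it takes a genuinely different route from the paper for the \textsf{LogCFL} claim. The paper does not go through \Cref{alg:ctd} at all: it directly adapts the top-down alternating Turing machine of \cite{GLS} for $\hw(H)\le k$, replacing the existential guess of a $\lambda$-label by an existential guess of a bag from $\softbagshk$, and the whole argument rests on the single observation that such a bag is representable in logarithmic space by $O(k)$ pointers to edges (namely $\lambda_1$, $\lambda_2$, and one edge identifying the $[\lambda_2]$-component). Your bottom-up reduction to the \ctd problem buys a cleaner, self-contained polynomial-time bound (the paper only gets \textsf{PTIME} as a corollary of \textsf{LogCFL} membership), and your observation that the block-satisfaction relation is a monotone derivability relation over a logspace-indexable universe is the bottom-up mirror image of the paper's ATM. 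One remark on the obstacle you flag: the balancing issue is a non-problem once you commit to the characterisation you mention only parenthetically, namely Ruzzo's characterisation of \textsf{LogCFL} as logspace ATMs with \emph{polynomial accepting-tree size} (this is exactly what \cite{GLS} uses). There the depth of the computation tree may well be linear; only its size must be polynomial, and that follows because the universal branching of a basis step is over blocks $(X,Y_i)$ whose components $Y_i$ are pairwise disjoint subsets of $C$, so the total work across a level shrinks and the whole tree has polynomially many nodes. No balancing of the derivation tree (and no appeal to balanced CTDs, which need not exist among CompNF CTDs for a given $\mathbf{S}$) is required. With that substitution your argument closes completely.
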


\begin{proof}[Proof Sketch]
In \cite{GLS}, it is shown that checking if $\hw(H) \leq k$ holds for fixed $k \geq 1$ is in 
\textsf{LogCFL}. The key part of the proof is the construction of an alternating Turing machine (ATM)
that runs in 
\textsf{Logspace} and \textsf{Ptime}. The ATM constructs an HD in a top-down 
fashion. In the existential steps, one guesses a $\lambda$-label of the next node in the HD. Given the 
label $\lambda_c$ of the current node and the label $\lambda_p$  of its parent node, 
one can compute the set of $[\lambda_c]$-components that lie inside a $[\lambda_p]$-component. In the following 
universal step, the ATM has to check recursively if all these $[\lambda_c]$-components admit an HD of width $\leq k$.

This ATM can be easily adapted to an ATM for checking if $\softhw(H) \leq k$ holds. The main difference is, that rather
than guessing the label $\lambda_c$ of the current node (i.e., a collection of at most $k$ edges), we now simply guess
an element from $\softbagshk$ as the bag $B_c$ of the current node. 
The universal step is then again a recursive check for 
all components of $B_c$ that lie inside a $[B_p]$-component, if they admit a candidate TD of width 
$\leq k$. The crucial observation is that we only need 
\textsf{Logspace} to represent the bags $B \in \softhw(H)$, because every such bag is uniquely determined
by the label $\lambda_c$ and a $[\lambda_p]$-component.  Clearly, $\lambda_c$ can be represented by 
up to $k$ (pointers to) edges in $E(H)$, and a $[\lambda_p]$-component can be represented by up 
to $k+1$ (pointers to) edges in $E(H)$, i.e., up to $k$ edges in $\lambda_p$ plus 1 edge from the  $[\lambda_p]$-component. Clearly, the latter uniquely identifies a component, since no edge can be contained in 2 components.
\end{proof}

The main argument in the proof of Theorem~\ref{theorem:softhw:logcfl} was that the ATM for checking 
$\hw(H) \leq k$
can be easily adapted 
to an algorithm for checking $\softhw(H) \leq k$.
 Actually, the straightforward adaptation of existing $\hw$-algorithms to $\softhw$-algorithms is 
 by no means restricted to the rather theoretical ATM of \cite{GLS}. 
 The parallel algorithm log-k-decomp from \cite{tods} performs significant additional effort to control orientation of subtrees in order to guarantee the special condition.
 This is unnecessary for 
 $\softhw$-computation, where the orientation of subtrees is irrelevant.
 Instead, we may follow the philosophy of the much simpler BalancedGo algorithm in 
 \cite{DBLP:journals/constraints/GottlobOP22}
 for $\ghw$-computation. By a standard argument (see e.g.,~\cite[Lemma~3.14]{tods}), a CTD for $\softbags$ always contains a \emph{balanced separator}. Hence, one can simply adapt BalancedGo algorithm to only consider separators in $\softbags$ to obtain another algorithm for checking $\softhw$ that is suitable for parallelisation.
 In Section~\ref{sec:constraints}, we will extend the CTD-framework by constraints
 and preferences. It will turn out that we can thus also capture the opt-k-decomp approach from \cite{DBLP:journals/jcss/ScarcelloGL07},
 that integrates a cost function into the computation of HDs.
 To conclude, the key techniques for modern decomposition algorithms are also 
 applicable to $\softhw$-computation. Additionally, just as with other popular hypergraph width measures, it is possible to relate \softhw to a variant of the Robber and Marshals game (see~\Cref{app:game}).

The relationship of $\softhw(H)$ with $\hw(H)$ and $\ghw$ is characterised by the following result: 
\begin{theorem}
    For every hypergraph $H$, the relationship  $\ghw(H) \leq \softhw(H) \leq \hw(H)$ holds. Moreover, there exist
    hypergraphs $H$ with  $\softhw(H) < \hw(H)$.     
\end{theorem}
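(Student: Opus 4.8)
We prove the two inequalities separately and then exhibit a hypergraph witnessing strictness. \emph{From $\softhw$ to $\ghw$:} I would show that every soft HD of width $k$ induces a GHD of width at most $k$. Let $(T,B)$ be a soft HD of $H$ of width $k$, i.e.\ a CompNF CTD for $\softbagshk$. For every node $u$ we have $B(u)\in\softbagshk$, so by \Cref{def:softbags} we may fix a set $\lambda_1^u\subseteq E(H)$ with $|\lambda_1^u|\le k$ such that $B(u)=(\bigcup\lambda_1^u)\cap(\bigcup C)$ for some component $C$; in particular $B(u)\subseteq\bigcup\lambda_1^u$. Setting $\lambda(u):=\lambda_1^u$, the triple $(T,\lambda,B)$ is a GHD of $H$ (the pair $(T,B)$ is already a tree decomposition and each $\lambda(u)$ is an edge cover of $B(u)$) of width $\max_u|\lambda_1^u|\le k$, so $\ghw(H)\le k$; taking $k=\softhw(H)$ gives $\ghw(H)\le\softhw(H)$.

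\emph{From $\hw$ to $\softhw$:} Here I would invoke the normal form recalled before \Cref{def:softbags}: by \cite{tods} there is an HD $(T,\lambda,B)$ of $H$ of width $\hw(H)$ in which every non-root node $c$ with parent $p$ satisfies $B(c)=(\bigcup\lambda(c))\cap(\bigcup C_p)$ for some $[\lambda(p)]$-component $C_p$, and which we may moreover take to be in the normal form of \cite{GLS}; then the underlying tree decomposition $(T,B)$ is in CompNF (this is exactly the condition of \Cref{def:CompNF}). It remains to check that every bag lies in $\softbagshk$ for $k:=\hw(H)$. For a non-root node $c$ this is immediate, taking $\lambda_1=\lambda(c)$, $\lambda_2=\lambda(p)$ and $C=C_p$ in \Cref{def:softbags} (both edge sets have size at most $k$). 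For the root $r$ we have $B(T_r)=V(H)$, so the special condition gives $\bigcup\lambda(r)=V(H)\cap\bigcup\lambda(r)\subseteq B(r)\subseteq\bigcup\lambda(r)$, i.e.\ $B(r)=\bigcup\lambda(r)$; assuming $H$ connected (otherwise decompose each connected component separately under a shared empty root, using $\emptyset\in\softbagshk$ via $\lambda_1=\emptyset$), the unique $[\emptyset]$-component $C$ has $\bigcup C=V(H)$, so $B(r)=(\bigcup\lambda(r))\cap(\bigcup C)\in\softbagshk$. Hence $(T,B)$ is a soft HD of width $\hw(H)$, and $\softhw(H)\le\hw(H)$.

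\emph{Strictness:} For the last claim I would exhibit a concrete hypergraph $H$ with $\ghw(H)=2$, $\hw(H)=3$ and $\softhw(H)=2$; a natural starting point is a hypergraph known in the literature to separate $\ghw$ from $\hw$ (there exist such $H$ with $\ghw(H)=2$ and $\hw(H)=3$), possibly after a small modification. Three things must be verified: (a) $\hw(H)=3$, i.e.\ no HD of $H$ has width $2$, because for any choice of $\lambda$-labels near the root the special condition propagates enough vertices of large edges into some descendant bag that three edges are needed there; (b) $\ghw(H)=2$, so that, together with the first inequality, $\softhw(H)\ge 2$; and (c) a soft HD of $H$ of width $2$ exists, which one attempts to build by re-expressing the bags of the width-$2$ GHD from (b) in the form $(\bigcup\lambda_1)\cap(\bigcup C)$ with $|\lambda_1|\le 2$ and $C$ a $[\lambda_2]$-component, $|\lambda_2|\le 2$, and checking CompNF --- unlike for $\ghw$, this last step is not automatic. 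I expect (a) together with (c) to be the crux: $H$ must be engineered so that it simultaneously admits a width-$2$ decomposition within the comparatively restrictive candidate bags of \Cref{def:softbags} and yet has no width-$2$ HD --- that is, so that the only obstruction to $\hw(H)=2$ is the special condition, which soft HDs do not impose.
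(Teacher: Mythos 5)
Both inequalities are proved correctly and by essentially the same argument as the paper: $\ghw(H)\le\softhw(H)$ because every bag in $\softbagshk$ is covered by the (at most $k$) edges of its defining $\lambda_1$, and $\softhw(H)\le\hw(H)$ via the normal form of \cite{tods} in which every bag of a minimal-width HD already has the form of \Cref{bagform}. Your explicit treatment of the root bag (using the special condition to get $B(r)=\bigcup\lambda(r)$ and then $\lambda_2=\emptyset$) and of CompNF is a welcome bit of care that the paper glosses over.

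The one genuine gap is the strictness claim: you correctly identify the strategy (start from a hypergraph with $\ghw=2$ and $\hw=3$ and re-express the bags of its width-$2$ GHD as elements of $\softbagshk$), and you correctly flag that this re-expression is the crux, but you do not carry it out, so the second assertion of the theorem remains unproved in your write-up. The paper closes this in \Cref{ex:adler:hw2:ghw3}: it takes the Adler--Gottlob--Grohe hypergraph $H_2$ \emph{unmodified} (no engineering is needed) and verifies bag by bag that the width-$2$ GHD is a CTD for $\softbags_{H_2,2}$ --- e.g.\ the bag $\{2,6,7,a,b\}$ arises as $\bigl(\bigcup\{\{2,3,b\},\{6,7,a\}\}\bigr)\cap\bigl(\bigcup C\bigr)$ where $C$ is the unique $[\{\{3,4\},\{2,3,b\}\}]$-component, and $\{2,5,6,a,b\}$ arises analogously from $\lambda_2=\{\{1,8\},\{1,2,a\}\}$; the remaining two bags are unions of two edges and hence trivially in the candidate set. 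To complete your proof you would need to supply exactly this verification (or an equivalent one for another concrete hypergraph); note also that point (a) of your checklist, $\hw(H_2)=3$, need not be re-proved since it is already established in the cited literature.
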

\begin{proof}
By~\cite{tods}, if $\hw(H)=k$, then there exists a hypertree decomposition of width $k$ such that every bag is of the form from \Cref{bagform}. That is, for every HD $(T, \lambda,\chi)$,
we immediately get a candidate TD $(T, \chi)$ for $\softbagshk$. Hence, $\softhw(H) \leq \hw(H)$. Furthermore, every bag in $\softbagshk$ is subset of a union of $k$ edges, hence $\rho(B) \leq k$ for any $B$ from \Cref{bagform}. Thus also $\ghw(H)\leq \softhw(H)$.
In the example below, we will present a hypergraph $H$ with   $\softhw(H) <  \hw(H)$.
\end{proof}

We leave as an interesting open question for future work if the gap between $\softhw$ and $\hw$ can become arbitrarily large. However, in the next section, we will introduce a natural iteration of 
$\softhw$ to get a whole hierarchy $(\softhw^i)_{i \geq 0}$ of width notions with $\softhw^0 = \softhw$
and $\softhw^\infty = \ghw$.
We are able to show that the gap between
$\softhw^i$ and $\hw$ can become arbitrarily big for every $i\geq 1$, however it remains open whether this already holds for the case $i=0$, i.e., $\softhw$ itself.

\begin{figure}[t]
    \centering
    \hfill
    \begin{subfigure}[t]{0.45\textwidth}
        \centering
        \includegraphics[width=0.6\textwidth]{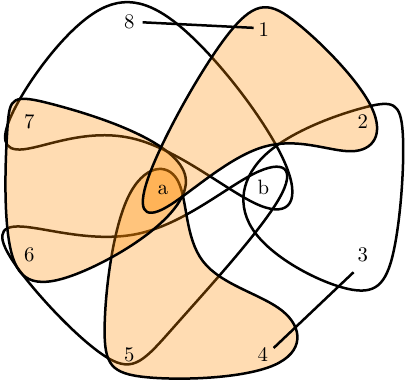}
        \caption{Hypergraph $H_2$ with $\ghw(H_2)=\softhw(H_2)=2$ and $\hw(H_2)=3$. Some edges are coloured for visual clarity.}
        \label{adler1}
    \end{subfigure}%
    \hfill
    \begin{subfigure}[t]{0.45\textwidth}
        \centering
        \begin{tikzpicture}[every tree node/.style={draw,rectangle,minimum width=8em,
        minimum height=2em,align=center},scale=.8]
            \Tree [.\node (1) {$2,6,7,a,b$};
                    \edge node [auto=left] {};
                    [\node (1l) {$1,2,7,8,a,b$};]
                    \edge node [auto=right] {};
                    [.\node (3) {$2,5,6,a,b$};
                      \edge node {};
                      [.\node (4) {$2,3,4,5,a,b$};]
                    ]
            ]
        \end{tikzpicture}
        \caption{A soft hypertree decomposition of $H_2$ with width 2.}
        \label{decomp.adler1}
    \end{subfigure}
    \caption{(a) A hypergraph $H_2$ and (b) its soft hypertree decomposition.}
    \label{combined_figure}
\end{figure} 
\begin{example}
\label{ex:adler:hw2:ghw3}    
Let us revisit the  hypergraph $H_2$ from~\cite{DBLP:journals/ejc/AdlerGG07}, which was 
presented there to show  that $\ghw$ can be strictly smaller than $\hw$.
The hypergraph is shown in~\Cref{adler1}. It consists of the edges 
$
\{1,8\}, \{3,4\}, \{1,2, a\}, \{4,5,a\}, \{6,7,a\}, \{2,3,b\}, \{5,6,b\}, \{7,8,b\}
$
and no isolated vertices. It is shown in~\cite{DBLP:journals/ejc/AdlerGG07} that 
$\ghw(H_2)=2$ and $\hw(H_2)=3$. We now show that also 
$\softhw(H_2)=2$ holds.
 
A candidate tree decomposition for $\softbags_{H_2,2}$ is shown in~\Cref{decomp.adler1}.
Let us check that $\softbags_{H_2,2}$ contains all the bags in the decomposition. The bags $\{1,2,7,8,a,b\}$ and $\{2,3,4,5,a,b\}$ are the union of 2 edges and thus clearly in the set. The bag $\{2,6,7,a,b\}$ is induced by $\lambda_2 = \{\{3,4\}, \{2,3,b\}\}$. There is only one 
$[\lambda_2]$-component $C$, which contains all edges of
$E(H) \setminus \lambda_2$. Hence, $\bigcup C$ contains all vertices  in $V(H) \setminus \{3\}$. 
We thus get the bag  $\{2,6,7,a,b\}$ as  $\big( \bigcup \lambda_1\big) \cap \big( \bigcup C \big)$
with $\lambda_1 = \{ \{2,3,b\},  \{6,7,a\} \}$. 
The remaining bag $\{2,5,6,a,b\}$ is obtained similarly from $\lambda_2 = \{\{1,8\}, \{1,2,a\}\}$ which
yields a  single component $C$ with $\bigcup C = H(H) \setminus \{ 1 \}$. 
We thus get the bag  $\{2,5,6,a,b\}$ as  $\big( \bigcup \lambda_1\big) \cap \big( \bigcup C \big)$
with $\lambda_1 = \{ \{1,2,a\},  \{5,6,b\} \}$. 
\end{example}

We have chosen the hypergraph $H_2$ above,
since it has been the standard (and only) example in the literature of a 
hypergraph with $\ghw =2$ and $\hw =3$. It illustrates that the relaxation to \softhw introduces useful new candidate bags. 
A more elaborate hypergraph (from \cite{adlermarshals}) 
with $\ghw(H)=\softhw(H)=3$ and $\hw(H)=4$ is provided  in \Cref{app:softhw}.

\section{Even Softer Hypertree Width}
\label{sec:softer}
In \Cref{def:softbags}, we have introduced a formalism to define a collection $\softbagshk$ of
bags that one could use in TDs towards the definition of a soft hypertree width
of hypergraph $H$. 
We now show how this process could be iterated in a natural way to get ``softer and softer'' notions of 
hypertree width. The key idea is to extend, for a given hypergraph $H$,  the set $E(H)$ of edges by subedges 
that one could possibly use in a $\lambda$-label of a soft HD to produce the desired $\chi$-labels 
as $\chi = \bigcup \lambda$. 
It will turn out that we thus get a kind of interpolation between \softhw and \ghw.

We first introduce the following useful notation: 

\begin{definition}
\label{def:setintersect}
Let $A,B$ be two sets of sets. We write $A \inttimes B$ as shorthand for the set of all pairwise intersections 
of sets from $A$ with sets from $B$, i.e.:
$
   A\inttimes B :=  \{ \, a \cap b \mid a\in A, b\in B \, \} 
$
\end{definition}

We now define an iterative process of obtaining sets $E^{(i)}$ of subedges of the edges in $E(H)$ and the 
corresponding collection $\softbagshk^{i}$ of bags in TDs that one can construct with these subedges.

\begin{definition}
\label{def:softer.bags}
Consider a hypergraph $H$. For $i \geq 0$, we define sets $E^{(i)}$ and $\softbagshk^i$ as follows: 
For $i = 0$, we set $E^{(0)} = E(H)$ and  $\softbagshk^0 = \softbagshk$ as defined in \Cref{def:softbags}.
For $i > 0$, we set $E^{(i)} = E^{(i-1)} \inttimes \softbagshk^{i-1}$ and we define 
$\softbagshk^i$ as the set that contains all sets of the form
\begin{equation*}
B = \left( \bigcup \lambda_1 \right)\cap  \left (\bigcup C \right)
\label{bagform_softer}
\end{equation*}
\noindent
where  $C$ is a $[\lambda_2]$-component of $H$, $\lambda_1$ is a set of at most $k$ elements of $E^{(i)}$, 
and $\lambda_2$ is a set of at most $k$ elements of $E(H)$.

We can then extend \Cref{def:softhw} as follows:
A {\em soft hypertree decomposition of order $i$} of width $k$ of $H$ 
is a 
candidate tree decomposition for $\softbagshk^i$. Also, 
the {\em soft hypertree width of order $i$} of $H$ (denoted $\softhw^i(H)$) is the minimal $k$, s.t.
there is a soft hypertree decomposition of order $i$ of width $k$ of $H$. 
\end{definition}

Intuitively, $E^{(i)}$ represents all the ``interesting'' subedges of the edges in $E(H)$ relative to the possible bags in $\softbagshk^{i-1}$,
i.e., every element $B \in \softbagshk^{i-1}$ can be obtained as $B = e'_1 \cup \dots \cup e'_\ell$ with $\ell \leq k$ and $e'_j \in E^{(i)}$ for 
every $j$.
In this way, we can iteratively refine the set of considered bags in a targeted way to arrive at an  interpolation of \ghw. 
This idea will be made precise  in \Cref{thm:shw.ghw} below. First, we establish certain monotonicity properties of $E^{(i)}$ and $\softbagshk^i$.

\begin{lemma}
\label{lem:softer.hierarchy}
Let $H$ be a hypergraph and $i \geq 0$. Moreover, let $E^{(i)}$ and  $\softbagshk^i$ be defined according to \Cref{def:softer.bags}. 
Then the following subset relationships hold: 
    \[E^{(i)} \subseteq E^{(i+1)},  \quad  E^{(i)} \subseteq \softbagshk^i, \quad \text{and } \quad \softbagshk^i \subseteq \softbagshk^{i+1} \]
\end{lemma}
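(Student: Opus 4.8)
The plan is to prove the three claimed inclusions in a convenient order, since they depend on one another. I would first establish $E^{(i)} \subseteq \softbagshk^i$, then use it (together with a simple observation about unions of subedges) to get $E^{(i)} \subseteq E^{(i+1)}$, and finally derive $\softbagshk^i \subseteq \softbagshk^{i+1}$. All three arguments are essentially unfoldings of \Cref{def:softer.bags}, so the work is bookkeeping rather than a deep idea; the only mildly delicate point is to be careful about which ambient edge set ($E(H)$ versus $E^{(i)}$) the covers $\lambda_1$ and $\lambda_2$ range over.

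For $E^{(i)} \subseteq \softbagshk^i$: take $e \in E^{(i)}$. Since $E(H) = E^{(0)} \subseteq E^{(i)}$ by construction (or by the first inclusion, proved below), pick any edge $f \in E(H)$ with $e \subseteq f$ -- indeed for $i=0$ we may take $f = e$, and for $i>0$, $e$ is an intersection $e = e' \cap B$ with $e' \in E^{(i-1)}$, so $e \subseteq e'$ and one descends to an actual edge of $H$. Now set $\lambda_2 = \{f\}$ if we need a trivial component, or more simply set $\lambda_2 = \{e_0\}$ for an edge $e_0$ disjoint from $e$ if one exists; cleanest is to take $\lambda_2$ such that $H$ has a single $[\lambda_2]$-component $C$ with $e \subseteq \bigcup C$ -- e.g.\ if $|E(H)|\ge 2$ pick $\lambda_2=\{f'\}$ for some edge $f'$ and note every other edge lies in one component; one then sets $\lambda_1 = \{e\}$ and obtains $B = (\bigcup\lambda_1)\cap(\bigcup C) = e \cap \bigcup C = e$. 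The corner case $|E(H)| = 1$ is trivial to handle separately. Since $\lambda_1 \subseteq E^{(i)}$ has one element and $\lambda_2 \subseteq E(H)$ has one element, $e = B \in \softbagshk^i$ by \Cref{def:softer.bags}.

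For $E^{(i)} \subseteq E^{(i+1)}$: by definition $E^{(i+1)} = E^{(i)} \inttimes \softbagshk^i = \{\, a \cap b \mid a \in E^{(i)}, b \in \softbagshk^i \,\}$. Given $e \in E^{(i)}$, use the previous inclusion to find $b \in \softbagshk^i$ with $e \subseteq b$ -- in fact $e \in \softbagshk^i$ itself works, so we may take $b = e$ -- and then $e = e \cap b \in E^{(i)} \inttimes \softbagshk^i = E^{(i+1)}$. This also retroactively justifies $E(H) = E^{(0)} \subseteq E^{(i)}$ for all $i$ by induction, which was used tacitly above. Finally, for $\softbagshk^i \subseteq \softbagshk^{i+1}$: let $B \in \softbagshk^i$, so $B = (\bigcup\lambda_1)\cap(\bigcup C)$ with $C$ a $[\lambda_2]$-component, $\lambda_1$ a set of at most $k$ elements of $E^{(i)}$, and $\lambda_2$ a set of at most $k$ elements of $E(H)$. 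Since $E^{(i)} \subseteq E^{(i+1)}$, the same $\lambda_1$ is also a set of at most $k$ elements of $E^{(i+1)}$, and $\lambda_2, C$ are unchanged (they live over $E(H)$, which is unaffected). Hence the very same equation exhibits $B$ as an element of $\softbagshk^{i+1}$.

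The main obstacle, such as it is, lies entirely in the base step $E^{(i)} \subseteq \softbagshk^i$: one must produce, for an arbitrary subedge $e$, a pair $(\lambda_1,\lambda_2)$ of the right form whose induced bag equals $e$ exactly (not merely contains it). The trick is to choose $\lambda_1 = \{e\}$ and $\lambda_2$ so that $e$ sits inside a single $[\lambda_2]$-component, which forces the intersection down to $e$; the only thing to watch is the degenerate hypergraph with a single edge, where one argues directly that every subedge is already a union of (at most $k$) edges, or takes $C = V(H)$ with $\lambda_2 = \emptyset$. Everything else is a direct appeal to the definitions and the first inclusion, so no serious calculation is needed.
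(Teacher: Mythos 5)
Your overall strategy coincides with the paper's: prove $E^{(i)} \subseteq \softbagshk^i$ first by exhibiting each $e \in E^{(i)}$ as a bag with $\lambda_1 = \{e\}$, deduce $E^{(i)} \subseteq E^{(i+1)}$ from $e = e \cap e \in E^{(i)} \inttimes \softbagshk^i$, and get $\softbagshk^i \subseteq \softbagshk^{i+1}$ by reading the same $\lambda_1$ as a subset of the larger $E^{(i+1)}$. The second and third steps are exactly right.

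The one genuine problem is your preferred instantiation of $\lambda_2$ in the first step. You claim that for $|E(H)| \geq 2$ one can pick $\lambda_2 = \{f'\}$ for some edge $f'$ and ``note every other edge lies in one component''; this is false in general. For $H$ with edges $f' = \{a,b\}$, $g_1 = \{a,c\}$, $g_2 = \{b,d\}$, the edges $g_1$ and $g_2$ lie in \emph{different} $[\{f'\}]$-components, since every path from $c$ to $d$ passes through $a$ or $b$. Worse, even when a suitable component $C$ exists, you need $e \subseteq \bigcup C$ for the intersection $(\bigcup\lambda_1) \cap (\bigcup C)$ to return $e$ rather than a proper subset, and this can fail whenever vertices of $e$ lie in $\bigcup\lambda_2$ and are not covered by any edge of $C$ (e.g.\ when the edge of $H$ containing $e$ is itself a subset of $\bigcup\lambda_2$). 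The fix is the choice you relegate to the single-edge corner case: take $\lambda_2 = \emptyset$. Then the $[\emptyset]$-components are just the connected components of $H$, the component $C$ containing the edge of which $e$ is a subedge satisfies $e \subseteq \bigcup C$, and $(\bigcup\{e\}) \cap (\bigcup C) = e$ with no case analysis. This is precisely what the paper does, uniformly for all hypergraphs, and with it your argument goes through. (Your parenthetical claim that in a single-edge hypergraph ``every subedge is already a union of at most $k$ edges'' is also wrong for proper subedges, but it becomes moot once $\lambda_2 = \emptyset$ is adopted as the general construction.)
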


\begin{example}
\label{ex:softer}
    \begin{figure}[t]
    \centering
\begin{subfigure}[t]{0.72\textwidth}
    \centering
    \includegraphics[width=0.4\textwidth]{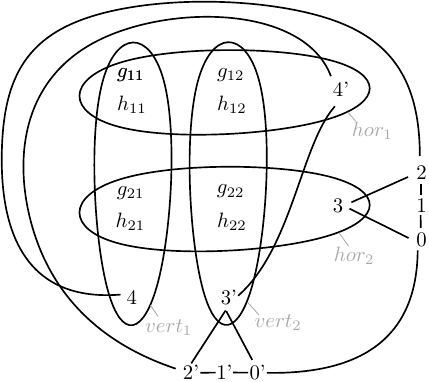}
    \caption{The hypergraph $H'_3$  with $ghw(H'_3)=\softhw^1(H'_3)=3$ and $\softhw(H'_3) = \hw(H'_3)=4$. The image omits the additional edges $\{\{w,v\} \mid w\in G \cup H, v \in V\}$.}
    \label{mw4xt}
\end{subfigure}
\hfill
    
    \vspace{1em}
    \begin{subfigure}[t]{0.47\textwidth}
        \centering
        \begin{tikzpicture}[every tree node/.style={draw,rectangle,minimum width=10em,
    minimum height=2em,level distance=5cm, align=center,scale=.6}]
        \Tree [.\node (1) {
        $\lambda: {hor_1}, hor_2, \{0,0'\}$ \\
        $\chi: G \cup H \cup \{ 3, 0', 0 \}$
        };
                \edge node [auto=left] {} ;
                [.\node (2l) {$\lambda: vert_1, vert_2, \{0',1'\}$ \\
                            $\chi: G \cup H \cup \{ 3', 0', 1'\} $
                };           \edge node {};
                  [.\node (3l) {$\lambda: vert_1, vert_2, \{1',2'\}$ \\
                            $\chi: G \cup H \cup \{ 3', 1', 2' \}$}; 
                    \edge node {};
                    [.\node (4l) {$\lambda: vert_1, vert_2, \{2',4'\}$ \\
                            $\chi: G \cup H \cup \{  3', 2', 4' \}$}; ]
                  ]
                ]
                \edge node [auto=right] {} ;
                [.\node (2r) {$\lambda: hor_1, hor_2, \{0,1\}$ \\
                            $\chi: G \cup H \cup \{3, 0, 1\} $}; 
                \edge node {};
                  [.\node (3r) {$\lambda: hor_1, hor_2, \{1,2\}$ \\
                            $\chi: G \cup H \cup \{3,1,2\} $}; 
                    \edge node  {};
                    [.\node (4r) {$\lambda: hor_1, hor_2, \{4,2\}$ \\
                            $\chi: G \cup H \cup \{4,2\} $}; ]
                   ]
                ]
        ]
    \end{tikzpicture}
        \caption{A generalised hypertree decomposition of $H'_3$ with width 3 and all bags in $\softbags_{H'_3,3}^1$. The root bag $G \cup H \cup \{3,0',0\}$ is not in $\softbags_{H'_3,3}$.}
        \label{decomp.mw4xt}
    \end{subfigure}
    \hfill
    \begin{subfigure}[t]{0.51\textwidth}
        \centering
        \begin{tikzpicture}[every tree node/.style={draw,rectangle,minimum width=8em,
        minimum height=2em,align=center,level distance=5cm,scale=.6}]
                    \Tree [.\node (1) {
        $\lambda: \{g_{11},g_{12},h_{11},h_{12},{\color{red}4'}\}, hor_2, \{0,0'\}$ \\
        $\chi: G \cup H \cup \{ 3, 0', 0 \}$
        };
                \edge node [auto=left] {} ;
                [.\node (2l) {$\lambda: vert_1, vert_2, \{0',1'\}$ \\
                            $\chi: G \cup H \cup \{ 3', 0', 1'\} $
                };           \edge node {};
                  [.\node (3l) {$\lambda: vert_1, vert_2, \{1',2'\}$ \\
                            $\chi: G \cup H \cup \{ 3', 1', 2' \}$}; 
                    \edge node {};
                    [.\node (4l) {$\lambda: vert_1, vert_2, \{2',4'\}$ \\
                            $\chi: G \cup H \cup \{  3', 2', {\color{red}4'} \}$}; ]
                  ]
                ]
                ]
            \draw[dashed,->,bend right=35] (1.west) to node[midway,above]{\color{gray}SCV} (4l.west);

            \coordinate (midpoint) at ($(1.east)!0.3!(4l.east)$);
    
            \node (x) [anchor=west, right=5mm] at (midpoint) {\scriptsize $hor_1 \in E^{(0)}$};
            \node (y) [anchor=south, right=4mm, below=1.5mm] at (x) {\scriptsize $G\cup H \cup \{3',0',1'\} \in \softbags^{0}$};
            \node[anchor=south, below=1.2mm] at (y) {\scriptsize $ \Rightarrow hor_1 \setminus \{4'\}\in E^{(1)}$};

        \end{tikzpicture}
        \caption{Illustration of how the special condition violation (SCV) at the root bag relates to our \softbags hierarchy.}
        \label{ex:softer.scv}
    \end{subfigure}
    \caption{Illustrations for \Cref{ex:softer}}
    \label{iterated_figure}
\end{figure}
To illustrate the more complex dynamics of iterated \softhw, we present a modification\footnote{The original hypergraph from \cite{adlermarshals} lacks the edge $\{3',4'\}$ and is discussed in detail in \Cref{app:softhw}.} of a particular hypergraph by Adler~\cite{adlermarshals}. Our modified hypergraph $H_3'$ is shown in \Cref{mw4xt}. It has vertices $G=\{g_{11},g_{12},g_{21},g_{22}\}$, $H=\{h_{11},h_{12},h_{21},h_{22}\}$, and $V=\{0,1,2,3,4,0',1',2',3',4'\}$. Importantly, there are edges $\{w,v\}$ for every $w\in G\cup H$
and $v\in V$. Intuitively this means that, to achieve low width, a bag in a decomposition always has to cover $G\cup H$ since, 
otherwise, the bag would not split the hypergraph into more than one component. 
This hypergraph satisfies  $ghw(H'_3)=\softhw^1(H'_3)=3$ and $\softhw(H'_3) = \hw(H'_3)=4$.
A GHD of width 3 for $H'_3$ is shown in \Cref{decomp.mw4xt}. 
We observe that every $\lambda$-label contains either the two ``horizontal'' edges $hor_1$ and $hor_2$,
or the two ``vertical'' edges $ver_1$ and $ver_2$ to cover $G \cup H$.
Crucially, the root bag $\chi_r = G \cup H \cup \{3,0',0\}$ is not in $\softbags^0_{H'_3,3}$ (neither are the bags of the two closest descendants on the right-hand side of the root). 
We refer the reader to \cite{adlermarshals} for further discussion of 
why there cannot be a GHD  of width 3 without the discussed bags.
The critical observation for why these bags are not in $\softbags^0_{H'_3,3}$ is that there is no way to separate $4'$ from the rest of the hypergraph with three edges. Hence, in \Cref{bagform}, for $k=3$, any $\lambda_p$ would induce only a single component that contains $4'$, leaving no way to retrieve the required bags from \Cref{decomp.mw4xt}.

The bags in question are however all in $\softbags^1_{H'_3,3}$. It suffices to show that $hor'_1 := hor_1\setminus \{4'\}$ is in $E^{(1)}$, as then all bags on the right-hand side of \Cref{decomp.mw4xt} are unions of at most three edges in $E^{(1)}$.
\Cref{ex:softer.scv} illustrates why $hor'_1$ is indeed in $E^{(1)}$ and how it intuitively connects to  a special condition violation (SCV) in the respective GHD.
Observe that $hor_1$ in the $\lambda$-label of the root contains $4'$ but the bag at the root does not. 
Hence, the special condition is violated, because $4'$ appears in a bag further down in the decomposition.
In fact, this leaf node is the first descendant (in this case, actually the only descendant) of the root with $4'$ occurring in the bag. 
On the other hand, $4'$ is missing from all the bags on the path between the root and this leaf node. 
Hence, intersecting
any bag in $\softbags^0_{H'_3,3}$ with any of the bags along the path allows us to 
eliminate $4'$.
In particular, if we do so by intersecting $hor_1 \in E^{(0)}$ with the bag $G \cup H \cup \{3',0',1'\} \in \softbags^0_{H'_3,3}$ 
(i.e., the right child of the root) we get $hor'_1$. This proves that  $hor'_1$ is in $E^{(1)}$
and it also illustrates more concretely what we intuitively described when stating that 
$E^{(i)}$ contains all the "interesting" subedges with respect to $\softbags^{i-1}_{H,k}$. 
\end{example}

Now that we have seen that $\softbagshk^{i+1}$ extends $\softbagshk^{i}$, a natural next question is, how expensive is it to get from 
$\softbagshk^{i}$ to $\softbagshk^{i+1}$. In particular, by how much does $\softbagshk^{i+1}$ increase compared to $\softbagshk^{i}$.
The following lemma establishes a polynomial bound, if we consider $k$ as fixed.

\begin{lemma}
\label{lem:poly.size.softer}
Let $H$ be a hypergraph and $i \geq 0$ and $k \geq 1$. Then we have:
    \[
    |\softbagshk^{i+1}| \leq |\softbagshk^{i}|^{4k+6}.
    \]
\end{lemma}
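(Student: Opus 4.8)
The plan is to bound $|\softbagshk^{i+1}|$ in terms of $|E^{(i+1)}|$, and then bound $|E^{(i+1)}|$ in terms of $|\softbagshk^{i}|$, chaining the two estimates. First I would note that every element of $\softbagshk^{i+1}$ has the form $B = (\bigcup \lambda_1) \cap (\bigcup C)$, where $\lambda_1$ is a set of at most $k$ elements of $E^{(i+1)}$ and $C$ is a $[\lambda_2]$-component for some set $\lambda_2$ of at most $k$ edges of $E(H)$. The number of choices for $\lambda_1$ is at most $|E^{(i+1)}|^{k}$ (ordered tuples, which over-counts the unordered sets). For the second factor, I would use the standard observation already invoked in the paper (cf.\ the complexity analysis of \Cref{alg:ctd} and the discussion after \Cref{def:softbags}): a set $\lambda_2$ of at most $k$ edges splits $H$ into at most $|V(H)|$ components, and in fact a $[\lambda_2]$-component is uniquely determined by $\lambda_2$ together with one further edge it contains, so the number of pairs $(\lambda_2, C)$ is at most $|E(H)|^{k+1}$. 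Since $E(H) = E^{(0)} \subseteq \softbagshk^{i}$ and $E^{(i+1)} \subseteq \softbagshk^{i+1}$... wait — I need $E^{(i+1)}$ bounded by $\softbagshk^{i}$, not $\softbagshk^{i+1}$; this is the point to be careful about.

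The cleaner route is: by \Cref{def:softer.bags}, $E^{(i+1)} = E^{(i)} \inttimes \softbagshk^{i}$, so $|E^{(i+1)}| \leq |E^{(i)}| \cdot |\softbagshk^{i}|$, and by \Cref{lem:softer.hierarchy} we have $E^{(i)} \subseteq \softbagshk^{i}$, hence $|E^{(i+1)}| \leq |\softbagshk^{i}|^{2}$. Also $|E(H)| = |E^{(0)}| \leq |\softbagshk^{i}|$ by \Cref{lem:softer.hierarchy} again (since $E^{(0)} \subseteq E^{(i)} \subseteq \softbagshk^{i}$). Plugging these in: the number of choices for $\lambda_1$ is at most $|E^{(i+1)}|^{k} \leq |\softbagshk^{i}|^{2k}$, and the number of pairs $(\lambda_2, C)$ is at most $|E(H)|^{k+1} \leq |\softbagshk^{i}|^{k+1}$. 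Multiplying, $|\softbagshk^{i+1}| \leq |\softbagshk^{i}|^{2k} \cdot |\softbagshk^{i}|^{k+1} = |\softbagshk^{i}|^{3k+1}$, which is comfortably below the claimed $|\softbagshk^{i}|^{4k+6}$ (the slack presumably absorbs a slightly coarser counting of unordered sets, $\binom{m}{k}$ versus $m^k$, or of components).

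The main obstacle — really the only non-mechanical point — is making sure the inductive use of \Cref{lem:softer.hierarchy} is in the right direction: we must express $E^{(i+1)}$ and $E(H)$ as subsets of $\softbagshk^{i}$ (the smaller, already-bounded set), not of $\softbagshk^{i+1}$. Since $E^{(i+1)} = E^{(i)} \inttimes \softbagshk^{i}$ is built from ingredients at level $i$, and $E^{(0)} \subseteq E^{(i)} \subseteq \softbagshk^{i}$ by the chain of inclusions in \Cref{lem:softer.hierarchy}, this works out. Everything else is routine counting of tuples of edges and of $[\lambda_2]$-components, for which I would simply cite the polynomial-bound argument already spelled out in Section~\ref{sec:ctds} and after \Cref{def:softbags}. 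I would also remark that we may assume $|\softbagshk^{i}| \geq 2$ (e.g.\ because $\emptyset$ and at least one edge belong to it, or handle the degenerate case separately), so that powers behave monotonically and the inequality $3k+1 \leq 4k+6$ suffices.
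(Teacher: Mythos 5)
Your proposal is correct and follows essentially the same route as the paper's proof: bound $|E^{(i+1)}| \leq |\softbagshk^{i}|^2$ via $E^{(i+1)} = E^{(i)} \inttimes \softbagshk^{i}$ and Lemma~\ref{lem:softer.hierarchy}, then count the choices of $\lambda_1$, $\lambda_2$, and the component. The only difference is bookkeeping — you bound $|E(H)|$ directly by $|\softbagshk^{i}|$ rather than by $|E^{(i+1)}|$, which yields a slightly tighter exponent ($3k+1$ instead of $4k+6$), still consistent with the stated bound.
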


The following result on the complexity of recognizing low $\softhw^i(H)$ 
is an immediate consequence.

\begin{theorem}
\label{thm:poly.time.softer}
Let $i$ and $k$ be fixed positive integers. Then, given a hypergraph $H$,
it can be decided in polynomial time, if $\softhw^i(H) \leq k$ holds.
\end{theorem}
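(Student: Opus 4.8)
The plan is to combine the two lemmas just stated with the polynomial-time tractability of the \ctd problem established in Section~\ref{sec:ctds}. Recall that by \Cref{def:softer.bags}, a soft hypertree decomposition of order $i$ and width $k$ of $H$ is precisely a CTD (in CompNF) for the candidate bag set $\softbagshk^i$. Hence, deciding $\softhw^i(H) \leq k$ amounts to constructing $\softbagshk^i$ and then running \Cref{alg:ctd} on input $H$ and $\mathbf{S} = \softbagshk^i$. Since $k$ is fixed and the \ctd algorithm runs in time polynomial in $|\mathbf{S}|$ and $\|H\|$ (the coarse bound $b^4 \cdot \|H\|$ with $b \leq |\mathbf{S}|\cdot|V(H)|$ was given after \Cref{alg:ctd}), it suffices to show that $\softbagshk^i$ can itself be computed in polynomial time, which in particular requires $|\softbagshk^i|$ to be polynomially bounded in $\|H\|$.

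The first step is to bound $|\softbagshk^0| = |\softbagshk|$. By \Cref{def:softbags}, each bag is determined by a choice of $\lambda_1$ (at most $k$ edges from $E(H)$) and a $[\lambda_2]$-component $C$ of $H$, where $\lambda_2$ is again at most $k$ edges; there are at most $|E(H)|^k$ choices for each of $\lambda_1,\lambda_2$, and at most $|E(H)|$ components per $\lambda_2$, so $|\softbagshk^0| \leq |E(H)|^{2k+1}$, which is polynomial for fixed $k$. Moreover, $\softbagshk^0$ can be enumerated explicitly in polynomial time by iterating over all such tuples. The second step is the inductive one: assuming $|\softbagshk^{i-1}|$ is polynomially bounded, \Cref{lem:poly.size.softer} gives $|\softbagshk^{i}| \leq |\softbagshk^{i-1}|^{4k+6}$, so after $i$ iterations we obtain $|\softbagshk^{i}| \leq |E(H)|^{(2k+1)(4k+6)^i}$, which is still polynomial in $\|H\|$ because $i$ and $k$ are fixed. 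We also need that $\softbagshk^{i}$ can actually be \emph{computed} (not just bounded) in polynomial time: from $\softbagshk^{i-1}$ we first form $E^{(i)} = E^{(i-1)} \inttimes \softbagshk^{i-1}$ by taking all pairwise intersections (polynomially many, each computable in linear time), and then enumerate all bags of the form $(\bigcup\lambda_1)\cap(\bigcup C)$ with $\lambda_1$ at most $k$ elements of $E^{(i)}$ and $C$ a $[\lambda_2]$-component for $\lambda_2$ at most $k$ edges of $E(H)$; there are polynomially many such tuples and each bag is computed in time polynomial in $\|H\|$.

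Putting these pieces together: compute $\softbagshk^0, \softbagshk^1, \dots, \softbagshk^i$ in turn (a constant number of stages, each polynomial-time), then run the \ctd decision procedure of \Cref{alg:ctd} on $H$ and $\softbagshk^i$; it returns ``Accept'' iff a soft hypertree decomposition of order $i$ and width $k$ of $H$ exists, i.e.\ iff $\softhw^i(H)\leq k$. The total running time is polynomial in $\|H\|$ for fixed $i$ and $k$, which establishes the theorem.

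I do not expect any serious obstacle here, since the statement is explicitly flagged in the excerpt as "an immediate consequence" of \Cref{lem:poly.size.softer}. The only point requiring mild care is bookkeeping the compounded exponent $(2k+1)(4k+6)^i$ across the $i$ iterations and confirming that it remains a constant (hence the bound polynomial) when $i,k$ are fixed — but this is routine. A secondary point worth stating cleanly is that the size bound alone does not give tractability; one must also observe that the sets $E^{(i)}$ and $\softbagshk^i$ are effectively enumerable within the same polynomial budget, which follows directly from the constructive definitions of $\inttimes$ and of the bag form in \Cref{def:softer.bags}.
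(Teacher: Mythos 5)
Your proposal is correct and follows essentially the same route as the paper: compute $\softbagshk^i$ in polynomial time using the size bound of \Cref{lem:poly.size.softer} (iterated from the polynomial base case $\softbagshk^0$), then invoke the tractability of the \ctd problem from Section~\ref{sec:ctds}. The extra bookkeeping you supply (the explicit compounded exponent and the observation that the sets are effectively enumerable, not merely bounded) is a harmless and welcome elaboration of what the paper leaves implicit.
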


\begin{proof}
It follows from  \Cref{lem:poly.size.softer},  
that we can compute $\softbagshk^i$ in polynomial time (for fixed $i,k$). 
Hence, by the tractability of the  \ctd problem recalled in Section~\ref{sec:ctds}, 
deciding $\softhw^i(H) \leq k$ is then also feasible in polynomial time.
\end{proof}

Since \Cref{def:softer.bags} defines a width measure for every natural  number $i$, we also wish to analyse the behaviour in the limit as $i$ approaches infinity. To that end we define $\softbagshk^\infty := \bigcup_{i \in \mathbb{N}} \softbagshk^i$ and $\shw^\infty$ as in \Cref{def:softer.bags} for candidate bags $\softbagshk^\infty$.
We show that $\softhw^i(H)$ in fact converges towards $\ghw(H)$. To this end, we first 
show that, for given hypergraph $H$,  $\big(\softbagshk^{i}\big)_{i\geq 0}$ and, therefore, also 
$\big(\softhw^i(H)\big)_{i\geq 0}$, converges towards a fixpoint. In a second step, we then show that, 
for $\softhw^i(H)$, this fixpoint is actually $\ghw(H)$.
  
\newcommand{\lemmaShwInfinity}{
    Let $H$ be a hypergraph, let $n = \max (|V(H)|, |E(H)|)$, and 
    let $k$ be a positive integer. Then there exists an $\alpha \leq 3 n$, such that 
    \[
    \softbagshk^\alpha = \softbagshk^\infty
    \]
}  

\begin{lemma}
\label{lem:ShWInfinity}
\lemmaShwInfinity
\end{lemma}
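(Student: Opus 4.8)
The argument splits into a qualitative part (the sequences stabilise at all) and a quantitative part (they stabilise by step $3n$). I would start with the qualitative part. By \Cref{lem:softer.hierarchy} the sequence $\big(E^{(i)}\big)_{i\ge 0}$ is nondecreasing, and it lives in the finite lattice $2^{2^{V(H)}}$, so it eventually stabilises. The key structural remark is that the whole construction in \Cref{def:softer.bags} is a \emph{deterministic, monotone} process driven by $E^{(i)}$: for fixed $H$ and $k$, the set $\softbagshk^{i}$ is a monotone function of $E^{(i)}$ alone — the only iteration-dependent ingredient in its definition is the choice of $\lambda_1 \subseteq E^{(i)}$, whereas $\lambda_2$ and the $[\lambda_2]$-component $C$ range over data depending only on $E(H)$ — and hence $E^{(i+1)} = E^{(i)} \inttimes \softbagshk^{i}$ is likewise a function of $E^{(i)}$ alone. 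Consequently, if $E^{(\alpha)} = E^{(\alpha+1)}$ for some $\alpha$, then $E^{(\alpha)}$ is a fixpoint of this function, so $E^{(j)} = E^{(\alpha)}$ and $\softbagshk^{j} = \softbagshk^{\alpha}$ for all $j \ge \alpha$; combined with monotonicity this gives $\softbagshk^{\infty} = \bigcup_{i} \softbagshk^{i} = \softbagshk^{\alpha}$. Thus it suffices to bound the first index $\alpha$ at which $E^{(\alpha)} = E^{(\alpha+1)}$, i.e.\ the length of the strictly increasing prefix of $\big(E^{(i)}\big)_{i\ge 0}$.

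For the quantitative part I would exhibit a monotone progress measure, bounded by $3n$, that strictly increases at every step $i$ with $E^{(i)} \subsetneq E^{(i+1)}$. A useful preliminary observation is that every element of $E^{(i)}$ is a subedge, i.e.\ a subset of some edge of $H$, and that $E^{(i+1)}$ already contains (i)~every pairwise intersection of two elements of $E^{(i)}$ — because $E^{(i)} \subseteq \softbagshk^{i}$ by \Cref{lem:softer.hierarchy} — and (ii)~$e' \cap \big(\bigcup C\big)$ for every $e' \in E^{(i)}$ and every $[\lambda_2]$-component $C$ with $|\lambda_2| \le k$ (take the singleton $\lambda_1 = \{e'\}$); only the genuine union-bags $\bigl(\bigcup\lambda_1\bigr) \cap \bigl(\bigcup C\bigr)$ with $|\lambda_1| \ge 2$ can create new shapes (unions of intersections). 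The plan is then to account for progress edge-by-edge: for each $e \in E(H)$, the family $\{\, e' \in E^{(i)} : e' \subseteq e \,\}$ is nonempty (it contains $e$), nondecreasing in $i$, and eventually closed under intersection by~(i), so the vertices ``split off from $e$ so far'', $W_i(e) := e \setminus \bigcap \{\, e' \in E^{(i)} : e' \subseteq e \,\}$, grow monotonically; a suitably normalised per-edge contribution, summed over $E(H)$ and bounded using $\max(|V(H)|,|E(H)|)=n$, yields a measure capped by a small constant multiple of $n$.

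The main obstacle is exactly this quantitative bookkeeping: showing that the measure increases at \emph{every} non-stabilising step — not only at ``union-steps'' — and that the normalisation caps it at $3n$ rather than, say, $|V(H)|\cdot|E(H)|$. What makes the constant small is that intersection-only growth is absorbed within a single iteration (observation~(i)), and can therefore be folded into the accounting of the union-steps, leaving only linearly many ``effective'' rounds; making this precise — in particular, controlling how a newly created subedge of one edge can trigger a new subedge of another edge in the next round — is the heart of the argument. Once the bound $\alpha \le 3n$ on the length of the strictly increasing prefix of $\big(E^{(i)}\big)_{i\ge 0}$ is in place, the qualitative reduction above immediately yields $\softbagshk^{\alpha} = \softbagshk^{\infty}$ for some $\alpha \le 3n$.
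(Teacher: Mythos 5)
Your qualitative reduction is sound: $\softbagshk^{i}$ is determined by $E^{(i)}$ alone, so once $E^{(\alpha)}=E^{(\alpha+1)}$ both sequences are constant from $\alpha$ on, and by \Cref{lem:softer.hierarchy} they are monotone in a finite lattice, so they do stabilise. But the entire content of the lemma is the bound $\alpha\le 3n$, and that part of your argument is not carried out --- you say yourself that making the progress measure work ``is the heart of the argument''. Worse, the measure you propose, $W_i(e)=e\setminus\bigcap\{\,e'\in E^{(i)} : e'\subseteq e\,\}$, does not strictly increase at every non-stabilising step: adding a genuinely new subedge of $e$ to the family need not shrink the intersection (if $\{1\}$, $\{1,2\}$, and $e=\{1,2,3\}$ are already present, adding $\{1,3\}$ is a new element but leaves the intersection equal to $\{1\}$). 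And even if it did increase, $\sum_{e}|W_i(e)|$ is naturally bounded by $|E(H)|\cdot|V(H)|\le n^2$, not $3n$; the ``normalisation'' that would bring this down to a constant multiple of $n$ is precisely the missing piece.

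The paper takes a different, syntactic route. It first proves by induction (using distributivity of $\cup$ over $\cap$) a normal form: every element of $E^{(i)}$ can be written as $e\cap\big(\big[\bigcap_{e_1\in E_1}e_1\cap\bigcap_{C\in\mathcal{C}_1}\bigcup C\big]\cup\cdots\cup\big[\bigcap_{e_m\in E_m}e_m\cap\bigcap_{C\in\mathcal{C}_m}\bigcup C\big]\big)$ with $e\in E(H)$, $E_j\subseteq E(H)$, and each $C$ a $[\lambda]$-component for some $\lambda$ of at most $k$ edges of $E(H)$. The $3n$ bound then falls out of three phases of at most $n$ iterations each: after $n$ iterations every intersection $\bigcap_{e'\in E'}e'$ with $E'\subseteq E(H)$ is available (one new factor per iteration, and $|E(H)|\le n$); unions of more than $n$ such terms are redundant, because the partial unions form a monotone chain of subsets of $e$, which can have at most $|V(H)|\le n$ strict increases --- this gives stabilisation by step $2n$ when components are ignored; and one further block of $n$ iterations absorbs the factors $\bigcup C$, all of which are already available from the first iteration since $\lambda_2$ always ranges over $E(H)$, with the same chain argument capping the useful number of such factors at $n$. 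If you want to rescue your approach, the quantity to count is not ``vertices split off per edge'' but the lengths of these monotone chains of subedges of a fixed $e$, which is where the linear (rather than quadratic) dependence on $n$ comes from.
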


\begin{theorem}
\label{thm:shw.ghw}
For every hypergraph $H$, we have 
$ 
 \softhw^\infty(H) = \ghw(H)
$   
\end{theorem}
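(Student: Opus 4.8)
The plan is to establish the two inequalities $\ghw(H) \leq \softhw^\infty(H)$ and $\softhw^\infty(H) \leq \ghw(H)$ separately. The first direction is essentially free: by \Cref{lem:ShWInfinity}, $\softbagshk^\infty = \softbagshk^\alpha$ for some finite $\alpha$, and every bag appearing in $\softbagshk^i$ (for any $i$) is, by construction, contained in a union of at most $k$ subedges of edges of $H$, hence is contained in a union of at most $k$ genuine edges of $H$. Therefore any candidate tree decomposition for $\softbagshk^\infty$ of width $k$ is a GHD of width $\leq k$ after equipping each node with an appropriate $\lambda$-label, which yields $\ghw(H) \leq \softhw^\infty(H)$. (This is the same argument already used for $\softhw^0$ in the theorem relating $\softhw$, $\hw$, $\ghw$.)

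The substantive direction is $\softhw^\infty(H) \leq \ghw(H)$, i.e.\ showing that from a GHD of width $k$ we can extract a candidate tree decomposition for $\softbagshk^\infty$ of width $k$. First I would put the GHD into a convenient normal form: by known results (e.g.\ the ghw normal form of \cite{JACM} or the component-style normal form already invoked for CTDs), we may assume the GHD $(T,\lambda,\chi)$ is in CompNF, so that for each parent-child pair $u,c$ there is a unique $[\chi(u)]$-component $C_c$ with $\chi(T_c) = \bigcup C_c \cup (\chi(u)\cap\chi(c))$, and moreover $\chi(c) = (\bigcup \lambda(c)) \cap \chi(T_c)$. The key claim is then: for every node $c$ of such a GHD, $\chi(c) \in \softbagshk^i$ for some finite $i$ (indeed $i \leq$ the bound from \Cref{lem:ShWInfinity}, but any finite bound suffices). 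Once this claim holds, $(T,\chi)$ is directly a CTD for $\softbagshk^\infty$ of width $k$, and we are done.

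To prove the claim I would argue by a top-down induction on the depth of $c$ in $T$, tracking the ``defect'' between $\lambda(c)$ and the bag it actually produces. Write $\chi(c) = (\bigcup\lambda(c)) \cap (\bigcup C_p)$ where $C_p = C_c$ is the relevant component of the parent: the only reason $\chi(c)$ might fail to be in $\softbagshk^0$ is that $C_p$ is not a $[\lambda_2]$-component for any $\lambda_2$ consisting of at most $k$ genuine edges — i.e.\ $\lambda(p)$ may be a poor separator, separating fewer vertices than $\chi(p) = \bigcup\lambda(p)$ would. The fix is exactly the $E^{(i)}$ construction: using the induction hypothesis that all ancestor bags lie in $\softbagshk^{i-1}$, one shows (mimicking the SCV argument illustrated in \Cref{ex:softer}) that each edge of $\lambda(c)$ can be intersected with suitable ancestor bags from $\softbagshk^{i-1}$ to eliminate precisely the vertices that lie outside $\bigcup C_p$, so that the truncated edges lie in $E^{(i)}$ and their union over $\lambda(c)$, intersected with a genuine $[\lambda_2]$-component of $H$, reproduces $\chi(c)$; hence $\chi(c) \in \softbagshk^i$. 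The bookkeeping needs care: one must check that the separator vertices removed from an ancestor's $\lambda$-edge are exactly those no longer present in $\chi(c)$, using the connectedness condition of the TD (a vertex in $B(T_c)$ that is not in $\bigcup C_p$ must already be in $\chi(p)\cap\chi(c)$, and the special-condition-style defect can only involve vertices absent from all bags along the path).

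I expect the main obstacle to be precisely this inductive step — pinning down a clean invariant of the form ``after $i$ iterations, every edge of $\lambda(c)$ restricted to $\bigcup C_p$ is available in $E^{(i)}$'' and verifying it survives passing from parent to child. Concretely, the delicate point is that when we descend, the ancestor bags we intersect against live in $\softbagshk^{i-1}$ by the induction hypothesis, but we must guarantee that these intersections kill \emph{all} offending vertices of $\lambda(c)$ in a single iteration rather than requiring unboundedly many; here one leans on CompNF, which forces the ``missing'' vertices to already appear in a bag along the root-to-$c$ path (this is the role of the $[\chi(u)]$-component description of $\chi(T_c)$), so a bounded number of ancestor intersections suffices. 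Everything else — the finiteness via \Cref{lem:ShWInfinity}, the polynomial-size bounds, and the reverse inequality — is routine bookkeeping built on lemmas already available in the excerpt.
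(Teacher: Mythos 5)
Your easy direction ($\ghw \leq \softhw^\infty$) is fine and matches the paper. For the hard direction your high-level strategy is the right one — truncate each $\lambda$-edge of a width-$k$ GHD by iterated intersections until $e \cap \chi_u$ lands in some $E^{(m)}$, so that $\chi_u = \bigcup_{e\in\lambda_u}(e\cap\chi_u)$ becomes a union of $k$ elements of $E^{(m)}$ and hence lies in $\softbagshk^m$ with $\lambda_2=\emptyset$. But your execution has a genuine gap, and it sits exactly where you flag "bookkeeping needs care". The paper's proof rests on a precise identity (Lemma~5.12 of \cite{JACM}): if $e\in\lambda_u$ and $u'$ is a node with $e\subseteq\chi_{u'}$, then along the path $u=u_0,\dots,u_\ell=u'$ one has $e\cap\chi_u = e\cap\bigl(\bigcap_{j=1}^{\ell}\bigcup\lambda_{u_j}\bigr)$. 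This is what guarantees the intersections remove \emph{exactly} the offending vertices and nothing more; without it, the claim that "suitable intersections eliminate precisely the vertices outside $\bigcup C_p$" is an assertion, not a proof. You never supply a substitute for this identity.

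Moreover, your inductive scaffolding points in the wrong direction. You propose a top-down induction intersecting $e$ with \emph{ancestor} bags (which your induction hypothesis places in $\softbagshk^{i-1}$), but the node $u'$ where $e$ is fully covered can lie \emph{below} $u$ — indeed in the paper's own \Cref{ex:softer}, which you cite, the vertex $4'$ causing the special-condition violation at the root is eliminated by intersecting $hor_1$ with the bag of a \emph{descendant}. A top-down induction therefore cannot assume the relevant bags are already available in $\softbagshk^{i-1}$. The paper avoids this entirely: it needs no normal form on the GHD and no induction over the tree, only the observation that every $\bigcup\lambda_{u_j}$ is trivially in $\softbagshk^0$ (take $\lambda_2=\emptyset$), so that $s_j = e\cap\bigcup\lambda_{u_j}\in E^{(1)}$ for every node on the path, and then a simple induction on the path length shows $\bigcap_{j=1}^{i}s_j\in E^{(i)}$ since $E^{(i)} = E^{(i-1)}\inttimes\softbagshk^{i-1}$ absorbs one more intersection per level. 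This also dissolves your worry about killing all offending vertices "in a single iteration": one intersection per iteration suffices, and the number of iterations is bounded by the longest path in $T$, which is all that $\softhw^\infty$ requires.
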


\begin{proof}[Proof Sketch]
The proof starts from a GHD of $H$ and argues that, after sufficiently many iterations 
of defining $E^{(i+1)}$ from $E^{(i)}$, all bags $\chi_u$ of the GHD are available as
candidate bags. To this end, we make use of the following property from  \cite{JACM}, Lemma 5.12: 
Let $e \in \lambda_u$ with $e \not\subseteq \chi_u$ for some node $u$ of the GHD 
and let 
$u'$ be a node with $e \subseteq \chi_{u'}$.
Moreover, let 
$u = u_0, \dots, u_\ell = u'$ be the path from $u$ to $u'$ in the GHD.
Then the following property holds: 
$
    e \cap \chi_u = e \cap \big( \bigcap_{j=1}^\ell \bigcup  \lambda_{u_j} \big)
$. 

The proof of the theorem then comes down to showing that all these 
subedges of edges $e \in E(H)$ are contained in $E^{(p)}$, where
$p$ is the maximal length of paths in the GHD.
\end{proof}

We thus get yet another bound on the number of iterations required for soft-HDs to bring
$\softhw^d$ down to $\ghw$.

\begin{corollary}
    Let $H$ be a hypergraph with $\ghw(H)=k$ and suppose that there exists a  width $k$ GHD of $H$ whose depth
    is bounded by $d'$. Then $\softhw^{d}(H) = \ghw(H)$
        with $d \leq 2d'$ holds.
\end{corollary}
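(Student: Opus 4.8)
The plan is to extract a quantitative version of \Cref{thm:shw.ghw}. Its proof starts from a width-$k$ GHD $(T,\lambda,\chi)$ of $H$ and shows that after $p$ iterations of the construction in \Cref{def:softer.bags} -- where $p$ is the maximal length of a path in $T$ -- every bag $\chi_u$ of the GHD becomes a candidate bag: using \cite{JACM}, Lemma~5.12, each subedge $e \cap \chi_u$ with $e \in \lambda_u$ is shown to lie in $E^{(p)}$, so $\chi_u$ is a union of at most $k$ elements of $E^{(p)}$ and hence belongs to $\softbagshk^{p}$ (taking $\lambda_2 = \emptyset$ in the bag form \Cref{bagform}, so that the unique $[\lambda_2]$-component is all of $V(H)$). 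Consequently $(T,\chi)$ is itself a candidate tree decomposition for $\softbagshk^{p}$ of width $k$, i.e.\ $\softhw^{p}(H) \leq k$. For the corollary it therefore suffices to bound $p$ in terms of the depth $d'$ and to add a matching lower bound.

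First I would bound the path length: any path in a rooted tree passes through the least common ancestor of its two endpoints and thus splits into an upward and a downward segment, each lying on some root-to-leaf path and hence of length at most $d'$; therefore $p \leq 2d'$. Combining this with the first paragraph and the monotonicity $\softbagshk^{p} \subseteq \softbagshk^{2d'}$ from \Cref{lem:softer.hierarchy}, we obtain $\softhw^{2d'}(H) \leq \softhw^{p}(H) \leq k$.

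For the lower bound I would use $\ghw(H) \leq \softhw^{i}(H)$ for every $i$: every element of $\softbagshk^{i}$ is contained in a union of at most $k$ elements of $E^{(i)}$, and every element of $E^{(i)}$ is a subedge of some edge of $H$ (by induction on $i$ via $E^{(i)} = E^{(i-1)} \inttimes \softbagshk^{i-1}$); hence every bag of a candidate tree decomposition for $\softbagshk^{i}$ has edge cover number at most $k$, and picking a size-$\leq k$ edge cover at each node turns such a decomposition into a GHD, so $\softhw^{i}(H) \leq k$ forces $\ghw(H) \leq k$. Applying this with $i = 2d'$ and using $\ghw(H) = k$ yields $k = \ghw(H) \leq \softhw^{2d'}(H) \leq k$, so $\softhw^{2d'}(H) = \ghw(H)$; taking $d = 2d'$ finishes the proof.

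The part that needs care -- rather than being conceptually hard -- is pinning down the exact constant. One must confirm that the proof of \Cref{thm:shw.ghw} really needs only $p$ and not $p+1$ iterations (which would push the bound to $2d'+1$), and that the notion of depth used in the statement is counted consistently with the path lengths appearing there. A further minor technicality is the degenerate treatment of the root bag (and of a disconnected $H$), where one should check that $\lambda_2 = \emptyset$ is admissible in \Cref{def:softer.bags} so that $V(H)$ genuinely occurs as a $[\lambda_2]$-component; none of this changes the structure of the argument, only whether the advertised factor of $2$ is tight.
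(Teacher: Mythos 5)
Your proposal is correct and follows exactly the route the paper intends: the corollary is stated as an immediate consequence of the proof of \Cref{thm:shw.ghw}, which places every bag of the GHD in $\softbagshk^{p}$ for $p$ the maximal path length in $T$, and your observation that $p \leq 2d'$ via the least common ancestor, together with the standing inequality $\ghw(H) \leq \softhw^{i}(H)$, is precisely the missing glue. Your explicit verification of the lower-bound direction and of the admissibility of $\lambda_2 = \emptyset$ fills in details the paper leaves implicit, and your caveat about the depth/path-length counting convention is a fair point but does not affect correctness.
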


In~\cite{adlermarshals} it was shown that the gap between so-called marshal width (a lower bound on \ghw) and $\hw$ can become arbitrarily big. 
By adapting that proof, we can show that also the gap between $\softhw^1$ and $\hw$ can become arbitrarily big.
The details are worked out in Appendix~\ref{app:shwgap}.

\begin{theorem}
\label{thm:shw.gap}
There exists a family of hypergraphs $(H_n)_{n \geq 3}$ satisfying 
$\softhw^1(H_n) + n \leq \hw(H_n)$. 
\end{theorem}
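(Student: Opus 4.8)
The plan is to reuse the hypergraph family of Adler~\cite{adlermarshals} that witnesses an arbitrarily large gap between the marshal width (a lower bound on $\ghw$) and $\hw$, and to strengthen the ``$\ghw$ stays bounded'' half of that argument into a ``$\softhw^1$ stays bounded'' statement. Concretely, for every $n \geq 3$ Adler's recursive construction yields a hypergraph together with (i) a GHD of some width $c$ that does not depend on $n$ (for the base gadget $H_3'$ of \Cref{ex:softer} one has $c = 3$), and (ii) a proof that $\hw$ grows at least linearly, which after tracking the recursion gives $\hw(H_n) \geq c + n$. The first step is to recall this family and, exactly as was done for $H_3'$ in \Cref{ex:softer} where the two-element edge $\{3',4'\}$ was added, to augment each $H_n$ by the auxiliary two-element edges that make every special-condition violation occurring in the fixed-width GHD ``repairable in a single refinement step'', i.e.\ within the step from $E^{(0)}$ to $E^{(1)}$. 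One then has to verify that these additions do not lower $\hw(H_n)$; since the new edges are small and do not open any new escape route for the robber, the lower-bound argument of~\cite{adlermarshals} carries over with only cosmetic changes.

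The core of the proof is the inequality $\softhw^1(H_n) \leq c$, i.e.\ exhibiting for each $n$ a soft hypertree decomposition of order $1$ and width $c$. I would take the fixed-width GHD $(T,\lambda,\chi)$ from the construction and show that each bag already lies in $\softbags_{H_n,c}^1$, using the identity $\chi_u = \bigcup_{e \in \lambda_u} (e \cap \chi_u)$. For an edge $e \in \lambda_u$ with $e \subseteq \chi_u$ we have $e \cap \chi_u = e \in E^{(0)} \subseteq E^{(1)}$. For an edge $e \in \lambda_u$ with $e \not\subseteq \chi_u$, let $u = u_0, \dots, u_\ell = u'$ be the path in $T$ to the first node $u'$ with $e \subseteq \chi_{u'}$; by \cite[Lemma~5.12]{JACM} we have $e \cap \chi_u = e \cap \big( \bigcap_{j=1}^{\ell} \bigcup \lambda_{u_j} \big)$, and by the connectedness condition $e \cap \chi_u \subseteq \chi_{u_1}$. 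The point of the auxiliary edges is to guarantee, for the augmented Adler GHD, that the vertices of $e \setminus \chi_u$ are all absent from a single candidate bag $b \in \softbags_{H_n,c}^0$ lying on this path (e.g.\ $b = \chi_{u_1}$, or $b = \bigcup \lambda_{u_1}$, which is always in $\softbags_{H_n,c}^0$ since a union of at most $c$ edges is obtained from \Cref{bagform} with $\lambda_2 = \emptyset$), so that $e \cap b = e \cap \chi_u$ and hence $e \cap \chi_u \in E^{(0)} \inttimes \softbags_{H_n,c}^0 = E^{(1)}$. Thus every $\chi_u$ is a union of at most $c$ elements of $E^{(1)}$; taking $\lambda_1 = \{ e \cap \chi_u \mid e \in \lambda_u \}$ and $\lambda_2 = \emptyset$ puts $\chi_u$ into the form of \Cref{bagform} for $\softbags_{H_n,c}^1$. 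Hence $(T,\chi)$ is a CTD for $\softbags_{H_n,c}^1$, and $\softhw^1(H_n) \leq c$.

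Combining the two bounds gives $\softhw^1(H_n) + n \leq c + n \leq \hw(H_n)$ for all $n \geq 3$, as required. The main obstacle is the ``one-step repair'' verification in the second paragraph: it is precisely what pins down the shape of Adler's GHD and which auxiliary edges must be added, and it forces one to re-examine, simultaneously, that the $\hw$ lower bound survives those additions. A secondary, bookkeeping-heavy point is to make the linear dependence $\hw(H_n) \geq c + n$ fully explicit, since Adler states her bound for the plain (monotone) marshal game and one has to verify that each level of the recursion really costs the monotone player one additional marshal.
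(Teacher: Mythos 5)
Your high-level strategy matches the paper's: start from Adler's balloon construction $H_{BOG}$ (which has $\ghw = s+1$ and $\hw = s+n+1$), make a small modification by adding two-element edges, and then show that the bags of a width-$(s+1)$ GHD all lie in $\softbags^1$ by trimming each cover edge $e \in \lambda_u$ down to $e \cap \chi_u$ via a single intersection with a $\softbags^0$ candidate bag. You have also correctly located the two places where work is needed (the ``one-step repair'' and the survival of the $\hw$ lower bound). However, the proposal has a genuine gap: the entire technical content of the proof is precisely the step you defer as ``the main obstacle'', namely \emph{which} auxiliary edges to add and \emph{which} candidate bag performs the trimming. The two candidates you suggest do not work: $\chi_{u_1}$ is not guaranteed to lie in $\softbags^0$ (in the actual construction the large bags are only in $\softbags^1$, so using them would be circular), and $\bigcup\lambda_{u_1}$ generally yields $e \cap \bigcup\lambda_{u_1} \supsetneq e \cap \chi_u$, since Lemma~5.12 of \cite{JACM} requires intersecting with \emph{all} $\bigcup\lambda_{u_j}$ along the path and order $1$ only buys you one intersection. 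The paper's resolution is a different kind of bag entirely: it adds a fresh vertex $\star$ with edges $\{\star,g\}$ for every balloon vertex $g\in B$, so that after removing the $s$ cover edges $a_1,\dots,a_s$ the vertex $\star$ sits in a component with vertex set exactly $\{\star\}\cup B$; this makes $\{\star\}\cup B$ a $\softbags^0$ candidate bag, and intersecting the long edges $a_i$ (resp.\ $b_j$) with it strips off precisely the $\alpha_i$ (resp.\ $\beta_j$) parts, producing the subedges $a_i'=\{g_{i1},\dots,g_{is}\}$ needed to assemble $B\cup V(N_1)$ and $B\cup V(N_2)$ from $s$ elements of $E^{(1)}$.

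Two further corrections. First, the $\hw$ lower bound needs no re-examination of the game argument: $H_{BOG}$ is an induced subhypergraph of the modified $H^\star_{BOG}$ and $\hw$ is monotone under taking induced subhypergraphs, so $\hw(H^\star_{BOG}) \geq \hw(H_{BOG}) = s+n+1$ comes for free. Second, your assumption that the GHD width $c$ is independent of $n$ is false for this family --- $c = s+1$ with $s = |V(N_1)|+(n+1)+1$ grows with $n$ --- which is harmless for the statement (only the gap of $n$ matters) but indicates that the base gadget $H_3'$ of \Cref{ex:softer} does not itself recurse into the required family; the actual family is Adler's switch-graph/machinist construction, and one must also explicitly exhibit the width-$(s+1)$ decomposition (the paper does this in \Cref{fig:gapproof}), since Adler only proves the gap for the marshal games and the translation to $\ghw$ versus $\hw$ is not written down in the literature.
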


\section{Constrained Hypertree Decompositions}
\label{sec:constraints}
Although hypertree decompositions and their generalisations have long been a central tool 
in identifying the asymptotic worst-case complexity of CQ evaluation, 
practical applications often demand more than simply a decomposition of low width.
While width is the only relevant factor for the typically considered complexity upper bounds, structural properties of the decomposition can critically influence 
computational efficiency in practice. 
\begin{example}
\label{ex:concover}
    Consider the query $q = R(w,x) \land S(x,y) \land T(y,z) \land U(z,w)$, forming a 4-cycle.
    This query has multiple HDs of minimal width, 
    but many of them are highly problematic for practical query evaluation. Some example computations resulting from various HDs of minimal width are illustrated below in (b)-(d). 
    \begin{figure}[h]
    \centering
    \begin{subfigure}{0.25\textwidth}
        \centering
\begin{tikzpicture}[scale=0.8, transform shape, every edge/.style={draw=none}, every node/.style={circle, draw=none, fill=white, inner sep=2pt}]
            \node (w) at (0,0) {$w$};
            \node (x) at (2,0) {$x$};
            \node (y) at (2,2) {$y$};
            \node (z) at (0,2) {$z$};
            
            \draw (w) -- node[below] {\( R \)} (x)
                  -- node[right] {\( S \)} (y)
                  -- node[above] {\( T \)} (z)
                  -- node[left] {\( U \)} (w);
        \end{tikzpicture}
        \vspace{-0.5em}
        \caption{Graph of $q$}
        \label{fig:four-cycle}
    \end{subfigure}
    \hspace{0.05\textwidth}
    \begin{subfigure}{0.12\textwidth}
        \centering
        \begin{tikzpicture}[scale=0.8, transform shape, node distance=1.5cm, every node/.style={draw, rectangle, rounded corners, fill=white, inner sep=5pt}]
            \node (B1) at (0,0) {$T \times R$};
            \node (B2) [below of=B1] {\(S \times U\)};
            \draw (B1) -- node[right,draw=none] {$\ltimes$} (B2);
        \end{tikzpicture}
        \caption{$D_1$}
        \label{fig:hypertree1}
    \end{subfigure}
    \hspace{0.05\textwidth}
    \begin{subfigure}{0.12\textwidth}
        \centering
        \begin{tikzpicture}[scale=0.8, transform shape, node distance=1.5cm, every node/.style={draw, rectangle, rounded corners, fill=white, inner sep=5pt}]
            \node (C1) at (0,0) {\(S \bowtie T\)};
            \node (C2) [below of=C1] {\( R \bowtie U \)};

            \draw (C1) -- node[right,draw=none] {$\ltimes$}  (C2);
        \end{tikzpicture}
        \caption{$D_2$}
        \label{fig:hypertree2}
    \end{subfigure}
    \hspace{0.05\textwidth}
    \begin{subfigure}{0.17\textwidth}
        \centering
        \begin{tikzpicture}[scale=0.8, transform shape, node distance=1.5cm, every node/.style={draw, rectangle, rounded corners, fill=none, inner sep=5pt}]
            \node (D1) at (0,0) {\(T \times R \)};
            \node (D2) at (-0.7,-1) {\(U \)};
            \node (D3) at (0.7,-1) {\(S \)};

            \draw (D1) -- node[left,draw=none] {$\ltimes$}  (D2);
            \draw (D1) -- node[right,draw=none] {$\ltimes$}  (D3);
        \end{tikzpicture}
        \caption{$D_3$}
        \label{fig:hypertree3}
    \end{subfigure}
    \label{fig:decompositions}
\end{figure}

\noindent
Yannakakis' algorithm for decompositions $D_1$ and $D_3$ requires the computation of a Cartesian product (by covering $w,x,y,z$ with two disjoint edges) of size $|T|\cdot |R|$ and $|S|\cdot |U|$, respectively. Under common practical circumstances, the joins $S \bowtie T$ and $R \bowtie U$ are much more efficient to compute.
\end{example}

\begin{example}
\label{ex:constraints1}
    Consider the conjunctive query $$q = R(v_1,v_2) \land S(v_2,v_4) \land T(v_3,v_4) \land U(v_1,v_3) \land V(v_1,v_5) \land W(v_4,v_6).$$
    Assume furthermore, that we are in a distributed setting with vertical partitioning. In particular, relations $R,U,V$ are on one node, whereas $S,T,W$ are on another. The query hypergraph, together with the partitioning is illustrated in \Cref{ex:four-cycle}.
    This query has multiple simple HDs of minimal width, with little natural reason to prefer one over the other. Some example computations resulting from various HDs of minimal width are illustrated below in (b)-(c). 
    \begin{figure}[h]
    \centering
    \begin{subfigure}[t]{0.30\textwidth}
        \centering
\includegraphics[width=\textwidth]{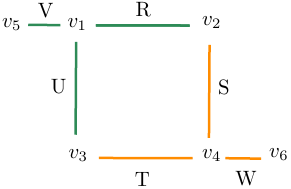}
\caption{The query hypergraph $H$. Edge colours represent the partition of the respective relation.}
\label{ex:four-cycle}
    \end{subfigure}
    \hspace{0.05\textwidth}
    \begin{subfigure}[t]{0.28\textwidth}
        \centering
        \begin{tikzpicture}[scale=0.8, transform shape, node distance=1.5cm, every node/.style={draw, rectangle, rounded corners, fill=white, inner sep=5pt}]
            \node (B1) at (0,0) {${\color{ForestGreen}R} \bowtie {\color{Orange}S}$};
            \node (B2) at (0,-1) {${\color{ForestGreen}U} \bowtie {\color{Orange}T}$};
            \node (C3) at (-0.7,-2) {\(\color{ForestGreen}V \)};
            \node (D3) at (0.7,-2) {\(\color{Orange}W\)};
            \draw (B1) -- node[right,draw=none,fill=none] {\scriptsize $\ltimes$} (B2);
            \draw (B2) -- node[left,draw=none,fill=none] {\scriptsize $\ltimes$} (C3);
            \draw (B2) -- node[right,draw=none,fill=none] {\scriptsize $\ltimes$} (D3);
        \end{tikzpicture}
        \caption{The computations for answering the query according to a natural HD of $H$.}
        \label{fig:hypertree1}
    \end{subfigure}
    \hspace{0.05\textwidth}
    \begin{subfigure}[t]{0.28\textwidth}
        \centering
        \begin{tikzpicture}[scale=0.8, transform shape, node distance=1.5cm, every node/.style={draw, rectangle, rounded corners, fill=white, inner sep=5pt}]
                    \node (C3) at (0,0) {\(\color{ForestGreen}V \)};
            \node (B1) at (0,-1) {${\color{ForestGreen}R} \bowtie {\color{ForestGreen}U}$};
            \node (B2) at (0,-2) {${\color{Orange}T} \bowtie {\color{Orange}S}$};

            \node (D3) at (0,-3) {\(\color{Orange}W\)};

            \draw (C3) -- node[right,draw=none,fill=none] {\scriptsize $\ltimes$}   (B1);
            \draw (B1) -- node[right,draw=none,fill=none] {\scriptsize $\ltimes$} (B2);
            \draw (B2) -- node[right,draw=none,fill=none] {\scriptsize $\ltimes$} (D3);
        \end{tikzpicture}
        \caption{The computations for answering the query according to a different natural HD of $H$.}
        \label{fig:hypertree2}
    \end{subfigure}
    \hfill
    \label{fig:decompositions}
\end{figure}

\noindent
However, in a distributed setting where we want to minimise the communication between different nodes, we might prefer some HDs over others. In our example
\Cref{fig:hypertree2} might be preferable. Here the bottom half of the decomposition is evaluated purely on the orange node, with only the result of $(T \bowtie S) \ltimes W$ being communicated to the other node. On the other hand, in \Cref{fig:hypertree1} two of the joins are across partitions, and the semi-join between them might add further communication depending on where it is best to compute the joins.
Since HDs as in \Cref{fig:hypertree2} allow for simpler distributed query answering, some applications might only want HDs that are constrained to such HDs, where partitions are split up over disjoint subtrees of the decomposition. 
Naturally the full details of a practical scenario would require further details, e.g., whether $T \bowtie S$ is expected to be very large, but this simplified setting already illustrates the fundamental need for decompositions that follow certain constraints.
\end{example}

This issue is one of the simplest cases that highlights the necessity of imposing additional structural constraints on decompositions. By integrating such constraints, we can align decompositions more closely with practical considerations beyond worst-case complexity guarantees. 
We thus initiate the study of {\em constrained} decompositions. 
Specifically, for a constraint $\mathcal{C}$ and width measure $\width$, we study $\mathcal{C}$-$\width(H)$, the least \width over all decompositions that satisfy the constraint $\mathcal{C}$. In the first place, we thus 
study constrained $\shw$. But we emphasise that the results in this section apply to any notion of decomposition
and width that can be computed via CTDs.

To guide the following presentation, we first identify various interesting examples of constraints for a TD $(T,B)$,
that we believe might find use in applications:
\newcommand{\concover}{\ensuremath{\mathsf{ConCov}}\xspace}
\newcommand{\thresconst}{\ensuremath{\mathsf{ShallowCyc}_{d}}\xspace}
\newcommand{\partclust}{\ensuremath{\mathsf{PartClust}}\xspace}
\newcommand{\balconst}{\ensuremath{\mathsf{Balanced}}\xspace}
\begin{description}
    \item[Connected covers] As discussed in \Cref{ex:concover}, applications for query evaluation naturally want to avoid Cartesian products in the reduction from a hypertree decomposition to an acyclic query. This motivates constraint $\concover$, that holds exactly for those CTDs where every bag $B_u$ has an edge cover $\lambda_u$ of size $|\lambda_u| \leq k$, such that the 
    edges in $\lambda_u$ form a connected subhypergraph.
    \item[Shallow Cyclicity]  
    The \emph{cyclicity depth} of a CTD for $H$ is the least $d$ such that the bag $B_u$ of every node $u$ at depth greater than $d$ can be covered by a single edge of $H$. 
    The constraint $\thresconst$ is satisfied by a CTD, 
    if it has cyclicity depth at most $d$. Intuitively, this constraint captures  having a cyclic "core" to the query with acyclic parts attached to it. Such structure with low cyclicity depth can be naturally leveraged for efficient query answering by reducing the relations in the high-width nodes through semi-joins with the attached acyclic parts.
    \item[Partition Clustering] As discussed in \Cref{ex:constraints1}, in distributed scenarios, where relations are partitioned across the network, query evaluation using a decomposition could benefit substantially from being able to evaluate entire subtrees 
    of the CTD at a single partition. 
    We can enforce this through a constraint of the following form: Let $\rho : E(H) \to \mathit{Partitions}$ label each edge of the hypergraph with a partition. The constraint $\partclust$ holds in a CTD $(T,B)$, 
    if there exists a function $f : V(T) \to \mathit{Partitions} $ such, that for every node $u$ in $T$: 
    \begin{enumerate}
        \item $B_u$ has a candidate cover using edges with label $f(u)$.
        \item For every $p \in \mathit{Partitions}$, the set of nodes $u$ with $f(u)=p$ induces a subtree of $T$ that is disjoint from the respective induced subtrees of all other partitions.
    \end{enumerate}
\end{description}

Introducing such constraints can, of course, increase the width. For example, for the 5-cycle $C_5$, it is easy to verify that $\concover$-$\ghw(C_5) = \concover$-$\shw(C_5) = \concover$-$\hw(C_5)=3$ even though $\hw(C_5)=2$. However, 
in practice,  adherence to such constraints can still be beneficial. In the $C_5$ example, the decomposition of width 2  
forces a Cartesian product in the evaluation and is likely to be infeasible even on moderately sized data. Yet using a \concover decomposition of width 3 might even outperform a typical 
query plan of 
two-way joins executed by a standard relational DBMS. Connected decompositions of higher width are already used in certain homomorphism counting applications of hypertree decompositions~\cite{ICML24,ICLR25} for large graphs where computing the cross products for bags is prohibitive.

\subsection{Tractable Constrained Decompositions}
\label{sec:constraint.technical}

Here, we are interested in the question of when it is tractable to find decompositions satisfying certain constraints.
The bottom-up process of \Cref{alg:ctd} iteratively builds tree-decompositions for some induced subhypergraph of $H$ (recall that a satisfied block $(B,C)$ corresponds to a TD for $H[B \cup C]$). We will refer to such tree decompositions as \emph{partial tree decompositions} of $H$.

A \emph{subtree constraint} $\mathcal{C}$ (or, simply, a constraint) 
is a Boolean property of partial tree decompositions 
of a given hypergraph $H$. We write $\mathcal{T} \models \mathcal{C}$ to say that the constraint holds true on the partial tree decomposition $\mathcal{T}$. A tree decomposition $(T,B)$ of $H$ satisfies $\mathcal{C}$ if $T_u \models \mathcal{C}$ for every node the partial tree decomposition induced by $T_u$.

\let\oldnl\nl%
\newcommand{\nonl}{\renewcommand{\nl}{\let\nl\oldnl}}%
{
\setlength{\textfloatsep}{1pt}
\begin{algorithm}[t]
\SetKwData{Left}{left}\SetKwData{This}{this}\SetKwData{Up}{up}
\SetKwFunction{Union}{Union}\SetKwFunction{FindCompress}{FindCompress}
\SetKw{Halt}{Halt}\SetKw{Reject}{Reject}\SetKw{Accept}{Accept}
\SetKw{Continue}{Continue}\SetKw{And}{and}

\DontPrintSemicolon
\SetKwData{N}{N}
\setcounter{AlgoLine}{4} %

\SetKwInput{Output}{output}
\SetKwInput{Input}{input}

\SetKwProg{Fn}{Function}{}{}
\SetKwFunction{HasBase}{HasMarkedBasis}
\nonl $\qquad \vdots$\;

 \Repeat{no blocks changed}{
      \ForEach{$(B,C) \in blocks$}{
        \ForEach{$X \in \mathbf{S} \setminus \{B\}$}{
          \If{$X$ is a basis of $(B,C)$}{
             $D_{new} \gets Decomp(B,C,X)$ \;
             \lIf{$D_{new}\models \mathcal{C}$ and $(basis(B,C) = \bot$ or $D_{new} < Decomp(B,C,basis(B,C))$}{
             $basis(B,C) \gets X$ %
                }
        }
      }
    }
    }
    \nonl $\qquad \vdots$
\caption{$(\mathcal{C}, \leq)$-Candidate Tree Decomposition of $\mathbf{S}$}
\label{alg:ctd.constraint}
\end{algorithm}
} 

However, deciding the existence of a tree decomposition satisfying $\mathcal{C}$ might require the prohibitively expensive enumeration of all possible decompositions 
for a given set of candidate bags.
To establish tractability for a large number of constraints, we additionally consider \emph{total quasiorderings of partial tree decompositions (toptds)}\footnote{Recall that a total quasiordering of $X$ is a reflexive,  transitive, and total relation on $X^2$.}  $\leq$.
A tree decomposition $(T,B)$ is \emph{globally minimal} w.r.t.\ $\leq$, if for every node $u$,  there is no partial tree decomposition $(T_u',B')$ of $H[B(T_u)]$ 
with  $(T_u',B') < (T_u,B)$.  
We note that our notion of minimality is closely related
to the notion of split-monotonicity introduced by Ravid~et~al.~\cite{DBLP:conf/pods/RavidMK19} to ensure that 
the cost of a TD cannot increase if a subtree $T'$ of this TD is replaced by a subtree $T''$ of lower cost.
Our goal here is to reduce the task of finding decompositions that satisfy certain constraints
to the task of finding globally minimal decompositions for an appropriate toptd. To formalise when this is possible, we introduce the following property.
\begin{definition}
    We say that a toptd $\leq$ is \emph{preference complete} for a subtree constraint $\mathcal{C}$ if the following holds. For every hypergraph $H$, if there exists a CTD of $H$ for which $\mathcal{C}$ holds, then $\mathcal{C}$ holds for all globally minimal (w.r.t. $\leq$)  CTDs of $H$.
\end{definition}

\begin{example}
    Consider the constraint $\thresconst$ described above. We observe that the property is in a sense monotone over subtrees of the decomposition, i.e., the shallow cyclicity at node $u$ cannot be lower than the maximum shallow cyclicity of the partial tree decompositions rooted at the children of $u$. Hence, we obtain a decomposition with the least shallow cyclicity if we cover the components below with their respective shallowest partial decompositions. 
    
    If we thus consider the toptd $\leq_{\thresconst}$ that simply orders partial tree decompositions according to their shallow cyclicity, a globally minimum CTD for $\leq_{\thresconst}$  will be a CTD that achieves the least shallow cyclicity. In particular, we get that all the globally minimum CTDs will, by definition, have the same shallow cyclicity. Hence, if any of them satisfies $\thresconst$, then all of them do. Therefore this toptd is preference complete for $\thresconst$.
\end{example}

Subtree constraints -- even with the additional condition of preference completeness -- still include a wide range of constraints. For example, all three example constraints mentioned above are preference complete. Informally, for shallow cyclicity, those partial TDs that become acyclic from lower depth are preferred. For partition clustering, we prefer the root node of the partial TD to be in the same partition as one of the children over introducing a new partition.

However, efficient search for constrained decompositions is not the only use of toptds. Using the same algorithmic framework, one can see them as a way to order decompositions by preference as long as the respective toptd $\leq$ is \emph{strongly monotone}: that is, a partial tree decomposition $(T,B)$ is globally minimal only if for each child $c$ of the root, $(T_c,B)$ is globally minimal.
A typical example for strongly monotone toptd are cost functions for the estimated cost to evaluate a database query corresponding to the respective subtree. The cost of solving a query with a tree decomposition is made up of the costs for the individual subqueries of the child subtrees, plus some estimated cost of combining them, strong monotonicity is a natural simplifying assumption for this setting. 
For instance, consider a quasiordering of partial subtrees $\leq_{cost}$ that orders partial decompositions by such a cost estimate. In combination with the constraint $\concover$, $(\concover, \leq_{cost})$ forms a preference complete subtree constraint, such that a satisfying TD is a globally minimal cost tree decomposition where every bag has a connected edge cover of size at most $k$.
Thus, optimisation of tree decompositions for strongly monotone toptds is simply a special case of preference complete subtree constraints.

Our main goal in introducing the above concepts is the general study of the complexity of incorporating  constraints into the computation of CTDs. We define  the  
$(\mathcal{C}, \leq)$-\ctd problem as the  problem of deciding  for given hypergraph $H$ and set $S$ of candidate bags whether 
there exists a CompNF CTD that satisfies $\mathcal{C}$ and is minimal for $\leq$. 
We want to identify tractable fragments of $(\mathcal{C}, \leq_\mathcal{C})$-\ctd. 
Let us call a constraint and toptd \emph{tractable} if one can decide in polynomial time w.r.t.\ the size of the original hypergraph $H$ and the set $S$ of candidate bags both, 
whether $\mathcal{C}$ holds for a partial tree decomposition of $H$, and whether $(T',B')<_\mathcal{C} (T,B)$ holds 
for partial tree decompositions of $H$.

\begin{theorem}
\label{thm:constraint.ctd}
Let $\mathcal{C}, \leq$ be a tractable constraint and toptd such that $\leq$ is preference complete for $\mathcal{C}$.
Then    $(\mathcal{C}, \leq)$-\ctd $\in \mathsf{PTIME}$.
\end{theorem}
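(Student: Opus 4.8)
The plan is to adapt \Cref{alg:ctd} to the variant shown in \Cref{alg:ctd.constraint}, and argue that it runs in polynomial time and correctly decides $(\mathcal{C},\leq)$-\ctd. The key modification is that instead of tracking merely whether a block $(B,C)$ is satisfied (i.e., whether it has \emph{some} basis), we track for each block the ``best'' basis found so far with respect to $\leq$, subject to the resulting partial decomposition satisfying $\mathcal{C}$. Here $Decomp(B,C,X)$ denotes the canonical partial tree decomposition of $H[B\cup C]$ obtained by taking a root bag $B$ with child $X$, and below $X$ attaching, for each block $(X,Y_i)\leq(B,C)$, the best partial decomposition currently recorded for $(X,Y_i)$ (recursively, this is well defined because $|X\cup Y_i| < |B\cup C|$, so these were finalised in earlier iterations of the repeat-loop). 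The first thing I would do is make this construction precise and observe that it requires only polynomially much information per block, since we only need to store one basis $X$ per block and the associated decomposition is then reconstructible (or its relevant $\leq$- and $\mathcal{C}$-relevant summary is storable in polynomial space).

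\textbf{Termination and polynomial running time.} I would argue that the repeat-loop terminates: within each pass, a block's recorded basis can only change finitely often, and once all blocks of size $< m$ are stable, the blocks of size $m$ stabilise in the next pass; hence after at most $|blocks|$ passes nothing changes. The total number of blocks is bounded by $|\mathbf{S}|\cdot|V(H)|$ as noted in \Cref{sec:ctds}. Each pass iterates over all blocks and all candidate sets $X\in\mathbf{S}$; for each candidate it checks the basis conditions (polynomial, as in the original algorithm), then constructs $D_{new}=Decomp(B,C,X)$, evaluates $D_{new}\models\mathcal{C}$, and compares $D_{new}$ against the current best $Decomp(B,C,basis(B,C))$ via $\leq$. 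By the \emph{tractability} hypothesis on $(\mathcal{C},\leq)$, both the $\mathcal{C}$-check and the $\leq$-comparison run in polynomial time in $\|H\|$ and $|\mathbf{S}|$. Multiplying the (polynomially many) iterations by the (polynomial) per-iteration cost yields the overall polynomial bound. I would also note that $Decomp(B,C,X)$ itself has size polynomial in $\|H\|$, so passing it to the $\mathcal{C}$- and $\leq$-oracles is legitimate.

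\textbf{Correctness.} This is the part I expect to require the most care. Soundness: if the algorithm accepts, the basis recorded for $(\emptyset,V(H))$ yields, by unwinding the $Decomp$ construction exactly as in the soundness argument for \Cref{alg:ctd} recalled in \Cref{sec:ctds}, a genuine CompNF CTD of $H$; and by construction every $Decomp$ step only ever recorded a basis when the resulting partial decomposition satisfied $\mathcal{C}$, and since subtree constraints are properties of \emph{every} induced $T_u$, the final CTD satisfies $\mathcal{C}$. Completeness is the crux. Suppose some CompNF CTD of $H$ satisfies $\mathcal{C}$. By \emph{preference completeness} of $\leq$ for $\mathcal{C}$, every globally minimal (w.r.t.\ $\leq$) CompNF CTD of $H$ also satisfies $\mathcal{C}$. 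So it suffices to show the algorithm computes, for the top block $(\emptyset,V(H))$, a basis whose associated $Decomp$ is a globally minimal CTD. I would prove by induction on $|B\cup C|$ that at the fixpoint, for every satisfiable block $(B,C)$, $Decomp(B,C,basis(B,C))$ is a globally minimal partial tree decomposition of $H[B\cup C]$ with root bag $B$ \emph{among those satisfying $\mathcal{C}$} — and, when $\leq$ is preference complete, globally minimal among all such partial decompositions. The induction step uses that any partial CompNF decomposition of $H[B\cup C]$ with root $B$ decomposes, by \Cref{def:CompNF}, as a root $B$, a child bag $X$, and a collection of partial decompositions of the $H[X\cup Y_i]$; minimality of the whole forces (via the ordering properties — here is exactly where transitivity/totality of the quasiorder and the relationship to split-monotonicity of Ravid et al.\ enter) minimality of each child piece, which by the induction hypothesis the algorithm has already found; and the algorithm's outer loop over all candidate roots $X$ ensures the best choice of $X$ is taken. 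The subtle point to get right is the interaction between ``satisfies $\mathcal{C}$'' and ``globally minimal'': one must ensure that restricting to $\mathcal{C}$-satisfying partial decompositions at every level does not cause the algorithm to miss a globally minimal CTD — and this is precisely what preference completeness buys, since it guarantees the unconstrained globally minimal CTDs already satisfy $\mathcal{C}$, so the $\mathcal{C}$-filter in \Cref{alg:ctd.constraint} is harmless on them. I would close by noting that if no CTD satisfies $\mathcal{C}$, the top block never gets a basis and the algorithm correctly rejects.
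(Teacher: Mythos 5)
Your proposal follows essentially the same route as the paper: modify the repeat-loop of \Cref{alg:ctd} into \Cref{alg:ctd.constraint}, use dynamic programming to record for each block a basis inducing a globally minimal partial decomposition, and invoke preference completeness to argue the $\mathcal{C}$-filter cannot exclude all globally minimal CTDs, with the polynomial bound inherited from \Cref{alg:ctd} plus the tractability assumptions. The paper only sketches this and declares correctness ``straightforward to verify''; your bottom-up induction on $|B\cup C|$ is a reasonable elaboration of exactly the argument the paper has in mind.
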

\begin{proof}[Proof Sketch]
    We obtain a polynomial time algorithm for $(\mathcal{C}, \leq)$-\ctd by modifying the main loop of \Cref{alg:ctd} as shown in \Cref{alg:ctd.constraint} (everything outside of the repeat loop remains unchanged).
For a block $(S,C)$ with basis $X$ in \Cref{alg:ctd.constraint}, the \textit{basis} property of every block induces a (unique) tree decomposition for $S \cup C$, denoted as $Decomp(S,C,X)$. Where \Cref{alg:ctd} used dynamic programming to simply check for a possible way to satisfy the root block $(\emptyset,V(H))$. 
We instead use dynamic programming to find the preferred way, that is a basis that induces a globally minimal partial tree decomposition, to satisfy blocks. It is straightforward to verify the correctness of \Cref{alg:ctd.constraint}. The polynomial time upper bound follows directly from the bound for \Cref{alg:ctd} and our tractability assumptions on $(\mathcal{C}, \leq_\mathcal{C})$.
\end{proof}

\nop{***************************
\begin{remark}
    To emphasise the conceptual importance of  constrained decompositions we have opted for a more high-level presentation over various generalisations that would introduce undue technicality. For example, the total orderings in our definition of  preference subtree constraints are readily generalisable to partial orderings with "small" anti-chains without affecting tractability. We aim to expand on this and further such details in future work.
\end{remark}
***************************}

Our analysis here brings us back to one of the initially raised benefits of soft hypertree width: algorithmic flexibility. Our analysis, and \Cref{thm:constraint.ctd} in particular, applies to any setting where a width measure can be effectively expressed in terms of CompNF candidate tree decompositions.

\begin{corollary}
\label{cor:softhw.constraints}
    Let $\mathcal{C}$ be a subtree constraint,  let $\leq$ be a preference complete toptd for $\mathcal{C}$, and suppose that $\mathcal{C},\leq$ are tractable. Let $k, i$ be non-negative integers. Then deciding $\mathcal{C}$-$\softhw^i(H) \leq k$ is feasible in polynomial time.
\end{corollary}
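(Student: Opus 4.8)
The plan is to obtain the corollary by applying the generic constrained-CTD result \Cref{thm:constraint.ctd} to the candidate bag set $\softbagshk^i$, and then checking that the composition of the two steps remains polynomial. First I would record the reformulation. By \Cref{def:softer.bags}, a soft hypertree decomposition of order $i$ and width $k$ of $H$ is exactly a candidate tree decomposition for $\mathbf{S} := \softbagshk^i$, which — following the convention of Section~\ref{sec:ctds} — we take in CompNF; and since every bag in $\softbagshk^i$ is contained in a union of at most $k$ edges of $H$ (every element of $E^{(i)}$ being a subedge of some edge of $H$, by an easy induction on $i$), such a CTD is in particular a width-$\leq k$ GHD of $H$. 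Moreover $\softbags_{H,k'}^i \subseteq \softbagshk^i$ for every $k' \leq k$, so it suffices to consider width exactly $k$. Hence $\mathcal{C}$-$\softhw^i(H) \leq k$ holds if and only if there is a CompNF CTD for $\softbagshk^i$ that satisfies $\mathcal{C}$. Since $\leq$ is preference complete for $\mathcal{C}$, the existence of such a CTD is equivalent to every globally minimal (w.r.t.\ $\leq$) CTD satisfying $\mathcal{C}$; that is, $\mathcal{C}$-$\softhw^i(H) \leq k$ holds exactly when the instance $(H, \softbagshk^i)$ of the $(\mathcal{C},\leq)$-\ctd problem is a yes-instance.

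Next I would bound the cost of building and solving this instance. By \Cref{lem:poly.size.softer} together with a trivial induction on $i$ — the base case being that $\softbagshk^0$ has size polynomial in $|H|$ directly by \Cref{def:softbags} — for fixed $i$ and $k$ the set $\softbagshk^i$ has size polynomial in $|H|$ and can be computed from $H$ in polynomial time; this is precisely the argument behind \Cref{thm:poly.time.softer}. Therefore, in the application of \Cref{thm:constraint.ctd} to $(H, \softbagshk^i)$, the candidate set $\mathbf{S} = \softbagshk^i$ is itself of size polynomial in $|H|$, so that ``polynomial in the size of $H$ and of $\mathbf{S}$'' collapses to ``polynomial in the size of $H$''. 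As $\mathcal{C}$ and $\leq$ are tractable by assumption and $\leq$ is preference complete for $\mathcal{C}$, \Cref{thm:constraint.ctd} decides the instance $(H, \softbagshk^i)$ in time polynomial in $|H| + |\softbagshk^i|$, hence polynomial in $|H|$. Chaining the construction of $\softbagshk^i$ with this decision procedure yields the claimed polynomial-time algorithm for $\mathcal{C}$-$\softhw^i(H) \leq k$.

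I do not expect a genuine obstacle here; the corollary is essentially a composition of two results already in hand. The one bookkeeping point to get right is that the tractability hypotheses on $\mathcal{C}$ and $\leq$ are stated relative to both $|H|$ and the size of the candidate set, so one must invoke \Cref{lem:poly.size.softer} to rule out a super-polynomial blow-up hidden inside $|\softbagshk^i|$. One should also observe that every partial tree decomposition produced while running \Cref{alg:ctd.constraint} on input $\softbagshk^i$ has all of its bags in $\softbagshk^i \subseteq 2^{V(H)}$ and is therefore a partial tree decomposition of $H$ in the sense of Section~\ref{sec:constraint.technical}, so the assumptions of tractability and preference completeness on $(\mathcal{C},\leq)$ apply to it verbatim. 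With these remarks in place, the corollary is immediate from \Cref{thm:poly.time.softer,thm:constraint.ctd}.
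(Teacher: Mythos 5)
Your proposal is correct and follows exactly the route the paper intends: reduce $\mathcal{C}$-$\softhw^i(H)\leq k$ to the $(\mathcal{C},\leq)$-\ctd problem on the candidate set $\softbagshk^i$, use \Cref{lem:poly.size.softer} (as in \Cref{thm:poly.time.softer}) to ensure this set is polynomial-size and polynomial-time computable for fixed $i,k$, and then invoke \Cref{thm:constraint.ctd}. The paper leaves this composition implicit, and your bookkeeping remarks (preference completeness turning existence into a check on globally minimal CTDs, and the candidate-set size entering the tractability bound) fill in precisely the right details.
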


Using CTDs,  
Gottlob et al. \cite{JACM,DBLP:conf/mfcs/GottlobLPR20} identified large fragments for which checking $\ghw \leq k$  or 
$\fhw \leq k$ is tractable. Hence, results analogous to \Cref{cor:softhw.constraints} follow immediately also for 
tractable fragments of 
generalised and fractional hypertree decompositions.

\subsection{Constraints in Other Approaches for Computing  Decompositions}
\label{sec:constraint.opt}

Bag-level constraints like \concover are easy to enforce using existing combinatorial algorithms such as \detk~\cite{detk} and \newdetk~\cite{DBLP:journals/jea/FischlGLP21}, which construct an HD top-down by combining up to $k$ edges into $\lambda$-labels.
Enforcing the \concover is trivial in this case. 	However, enforcing more global constraints like \partclust while maintaining tractability remains unclear, as does integrating a cost function.
These algorithms critically rely on caching for pairs of a 
bag $B$ and vertex set $C$, 
whether there is an HD of $H[C]$ rooted at $B$. With constraints that apply to a larger portion of the decomposition,
the caching mechanism no longer works. Other algorithms are also unsuitable: \logk~\cite{tods}, the fastest HD algorithm in practice, applies a divide-and-conquer strategy, splitting the hypergraph until a base case is reached, but never analyzes larger sections of the decomposition. 
The HtdSMT solver by Schidler and 
Szeider~\cite{DBLP:conf/alenex/SchidlerS20,DBLP:journals/ai/SchidlerS23} minimises HD width via SMT solving. It is unclear how to state constraints over these encodings. Furthermore, due to the reliance on SAT/SMT solvers, the method is not well suited for a theoretical analysis of tractable constrained decomposition methods. 

The algorithm that comes closest to our algorithm for constructing a constrained decomposition 
is the \optk algorithm from \cite{DBLP:journals/jcss/ScarcelloGL07}, which constructs a ``weighted HD’’ of minimal cost up to a given width.
The cost of an HD is defined by a function that assigns a cost to each node and each edge in the HD.
The natural cost function assigns the cost of the join computation of the relations in $\lambda_u$ to each node $u$
and the cost of the (semi-)join between the bags at a node and its parent node to the corresponding edge. 
We also consider this type of cost function (and the goal to minimise the cost) as an important special 
case of a preference relation in our framework. However, our CTD-based construction of constrained decomposition 
allows for a greater variety of preferences and constraints,  and the simplicity of 
Algorithm~\ref{alg:ctd.constraint} facilitates a straightforward complexity analysis. Moreover, 
\optk has been specifically designed for HDs. This is in contrast to our framework, which guarantees 
tractability for any combination of decompositions with tractable, 
preference complete subtree constraints 
as long as the set of candidate sets is
polynomially bounded. As mentioned above, this is,  for instance, the case for the tractable fragments of 
generalised and fractional HDs identified in  \cite{JACM,DBLP:conf/mfcs/GottlobLPR20}. 
 
\section{Experiments}
\label{sec:exp}
While the focus of this paper is primarily on the theory of tractable decompositions of hypergraphs, 
the ultimate motivation of this work is drawn from applications to database query evaluation. We therefore include a focused experimental analysis of the practical effect of constraints and optimisation on candidate tree decompositions. 
The aim of our experiments is to gain insights
into the effects of constraints and costs on candidate tree decompositions in practical settings. We perform experiments with cyclic queries over standard benchmarks (TPC-DS~\cite{DBLP:conf/sigmod/PossSKL02}, LSQB~\cite{DBLP:conf/sigmod/MhedhbiLKWS21} as well as a queries over the Hetionet Biomedical Knowledge Graph~\cite{hetionet} from a recent cardinality estimation benchmark~\cite{DBLP:journals/pvldb/BirlerKN24}.
In all experiments, we first compute candidate tree decompositions as in \Cref{alg:ctd.constraint} for the candidate bags as in \Cref{def:softbags}, i.e., we compute constrained soft hypertree decompositions. We then build on a recent line of research on rewriting the execution of Yannakakis' algorithm in standard SQL~\cite{DBLP:journals/corr/abs-2303-02723,bohm2024rewrite}, to execute Yannakakis' algorithm for these decompositions on standard relational DBMSs. 
For full details of our prototype implementation and experimental setup refer to \Cref{app:implementation} and \Cref{app:expdetails}. Our implementation is freely available at \url{https://github.com/cem-okulmus/softhw-pods25}.
We note that a related analysis specifically for hypertree width was performed in the past by 
Scarcello~et~al.~\cite{DBLP:journals/jcss/ScarcelloGL07}. 

\begin{figure}[b]

    \begin{tabular}{m{4.2cm}m{3.8cm}m{5cm}}

  \begin{tikzpicture} [transform shape,scale=0.5,every node/.style={scale=1.6}]]
\begin{axis}[ 
height=7.3cm,
  ylabel={Evaluation times (sec)},
  xlabel=Cost -- Actual Cardinalities,
  xtick=\empty,
      ymin=0,
  ]

\addplot+ [only marks,mark=x,color=OliveGreen,very thick,mark size=5pt] table [ x=cost, y=runtime, col sep=comma] {experiments/psql_tpcds_q1_avg3_idealcost_sortedTDs.csv};

\addplot+ [no markers,color=Gray, very thick, dashed] table[col  sep=comma,
    y={create col/linear regression={x=cost,y=runtime}}]{experiments/psql_tpcds_q1_avg3_idealcost_sortedTDs.csv};
\end{axis}
\end{tikzpicture}

&
  \begin{tikzpicture} [transform shape,scale=0.5,every node/.style={scale=1.6}]]
\begin{axis}[ 
height=7.3cm,
   xlabel=Cost -- DBMS Estimates,
  xtick=\empty,
      ymin=0,
  ]

\addplot+ [only marks,mark=x,color=RoyalBlue,very thick,mark size=5pt]  table [x=cost_ideal, y=runtime, col sep=comma] {experiments/psql_tpcds_q1_avg3_idealcost_sortedTDs.csv};

\addplot+ [no markers,color=Gray, very thick, dashed] table[col  sep=comma,x=cost_ideal,
    y={create col/linear regression={y=runtime}}]{experiments/psql_tpcds_q1_avg3_idealcost_sortedTDs.csv};

\end{axis}
\end{tikzpicture}

&

\begin{tikzpicture}[transform shape,scale=0.5,every node/.style={scale=1.6}]]

\pgfplotstableread[col sep=comma]{experiments/psql_tpcds_q1_avg3_idealcost_sortedTDs.csv}\datatable
\pgfplotstablesort[sort key=runtime, sort cmp=float <]\sortedtable{\datatable}

\begin{axis}[
height=7.3cm,
ybar=0.0cm,
bar width=14pt,
    ymin=0,
    ymax=8,
xmin=0,
xmax=10,
    enlarge x limits=0.0,
    xlabel={TDs ordered by runtime},
    xtick=\empty
]
\addlegendimage{line legend,black,thick,dashed}
\addlegendentry{Baseline};

\addplot [very thick, bar shift=0pt, fill=Green,restrict expr to domain ={\thisrow{runtime}}{0:4}] table[x expr=\coordindex+1,y=runtime] {\sortedtable};

\addplot+ [very thick, bar shift=0pt, fill=Goldenrod,restrict expr to domain ={\thisrow{runtime}}{4:50}] table[x expr=\coordindex+1,y=runtime] {\sortedtable};

\coordinate (A) at ( 0,4.35);
\coordinate (O1) at (rel axis cs:0,0);
\coordinate (O2) at (rel axis cs:1,0);

\draw [black,thick,dashed] (A -| O1) -- (A -| O2);

\end{axis}
\end{tikzpicture}   

\end{tabular}

\vspace{-7mm}

\caption{ Performance over a $\concover$-$\softhw$ 2 TPC-DS query using PostgreSQL as a backend.
}
     \label{fig:tpcds.main}
\end{figure}
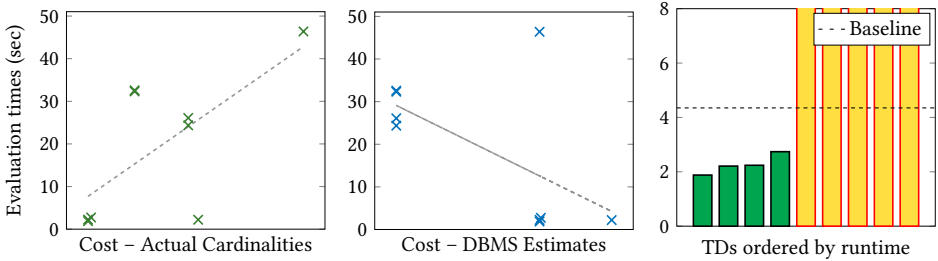

\begin{figure}[t]
\begin{minipage}[t]{0.3\textwidth}

\centering
  \begin{tikzpicture} [transform shape,scale=0.5,every node/.style={scale=1.6}]]
\begin{axis}[ 
ylabel={Evaluation times (sec)},
  xlabel=Cost ,
  xtick=\empty,
      ymin=0,
      title={Hetionet Q1}
  ]

\addplot [only marks,mark=x,color=Sepia,very thick,mark size=5pt] table [x=cost_ideal, y=runtime, col sep=comma] {experiments/psql_hetio_q1_avg3_idealcost_sortedTDs.csv};

\addplot+ [no markers,color=Gray, very thick, dashed] table[col  sep=comma, x=cost_ideal,
    y={create col/linear regression={y=runtime}}]{experiments/psql_hetio_q1_avg3_idealcost_sortedTDs.csv};
\end{axis}

\node at (3.2, 4)  {Baseline: 32.5 sec};

\end{tikzpicture}  
\end{minipage}
\begin{minipage}[t]{0.3\textwidth}
\centering
  \begin{tikzpicture} [transform shape,scale=0.5,every node/.style={scale=1.6}]
\begin{axis}[ 
  xlabel=Cost ,
  xtick=\empty,
      ymin=0,
      title={Hetionet Q2}
  ]

\addplot [only marks,mark=x,color=Sepia,very thick,mark size=5pt] table [x=costideal, y=runtime, col sep=comma] {experiments/psql_hetio_q2_avg3_idealcost_sortedTDs_top10.csv};

\addplot+ [no markers,color=Gray, very thick, dashed] table[col  sep=comma, x=costideal,
    y={create col/linear regression={y=runtime}}]{experiments/psql_hetio_q2_avg3_idealcost_sortedTDs_top10.csv};
\end{axis}

\node at (3.2, 1.8)  {Baseline: 9.83 sec};

\end{tikzpicture}  
\end{minipage}
\hspace{-0.2cm}
\begin{minipage}[t]{0.3\textwidth}
\centering
\begin{tikzpicture}[transform shape,scale=0.5,every node/.style={scale=1.6}]
    \begin{axis}[
        ybar,                      %
        bar width=30pt,            %
        nodes near coords,
        ylabel={Avg. evaluation times (sec)},
        ylabel style={
            at={(1.1,0)},          %
            anchor=west            %
        },
        symbolic x coords={Q1, Q2},  %
        xtick=data,   
        ymin = 0,
        enlarge x limits=0.8,      %
    ]
        \addplot+ [very thick] coordinates {(Q1, 22.3) (Q2, 2.9)};
        \addplot+ [very thick] coordinates {(Q1,221.7) (Q2,91.9)};

        \legend{\scriptsize \concover TDs, \scriptsize All TDs}
    \end{axis}
\end{tikzpicture} 
\end{minipage}
\caption{ Performance over two $\concover$-$\softhw$ 2 Hetionet queries using PostgreSQL as a backend.
}     \label{fig:hetio.exp.main}
\end{figure}
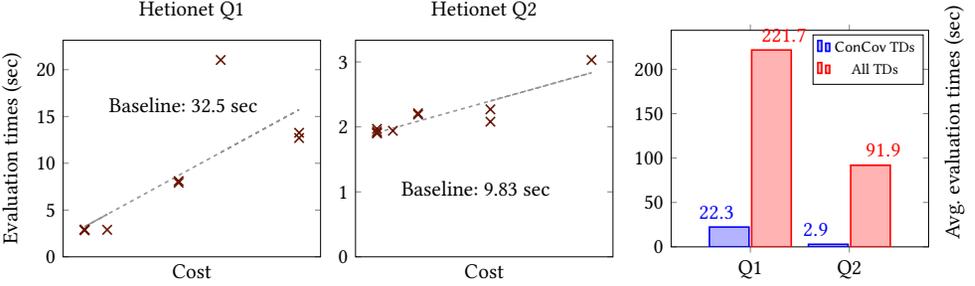

\paragraph{Results}

We study the effect of optimising candidate tree decompositions that adhere to the \concover constraint over two cost functions. The first cost function is based on estimated costs of joins and semi-joins by the DBMS (PostgreSQL) itself, the second is derived from the actual cardinalities of relations and joins (see~\Cref{sec:cost.functions.app}) for details).
The results of these experiments are summarised in \Cref{fig:tpcds.main}. 
These findings indicate that while certain decompositions can cut the execution time by more than half, others can be nearly ten times slower. This stark variation underscores the critical need for informed selection strategies when deploying decompositions for real-world database workloads.
A second dimension of interest is the efficacy of the cost function. A priori, we expect the costs derived directly from the DBMS cost estimates to provide the stronger correlation between cost and time. However, we observe that the  cost estimates of the DBMS are sometimes very unreliable, especially when it comes to cyclic queries\footnote{This is not particularly surprising and remains a widely studied topic of database research (see e.g.,~\cite{cebench}).}. Clearly,this makes it even harder to find good decompositions, as a cost function would ideally be based on 
good estimates. The comparison of cost functions in \Cref{fig:tpcds.main} highlights this issue. There, we use as
cost function the actual cardinalities as a proxy for good DBMS estimates. We see that the cost 
thus assigned to decompositions indeed neatly corresponds to query performance, whereas the cost 
using DBMS estimates inversely correlate to query performance.

To expand our analysis, we also tested two graph queries over Hetionet. In \Cref{fig:hetio.exp.main}. we report on the 10 cheapest decompositions of width 2. On these queries, both cost functions perform very similarly; we report costs based on DBMS estimates here. We still see a noticeable difference between  decompositions, but more importantly, all of them are multiple times faster than the standard execution of the query in PostgreSQL. It turns out that connected covers alone are critical. In the right-most chart of \Cref{fig:hetio.exp.main}, we show the average time of executing the queries for 10 randomly chosen decompositions of  width 2 
with and without the \concover constraint enforced. We see that the constraint alone is already sufficient to achieve significant improvements over standard execution in  relational DBMSs.

Our implementation computes candidate tree decompositions closely following \Cref{alg:ctd.constraint}. We find that the set of candidate bags in real world queries is very small, especially compared to the theoretical bounds on the set $\softbagshk$. The TPC-DS query from \Cref{fig:tpcds.main} has only 9 elements in $\softbags_{H,2}$ (one of which does not satisfy \concover). The two queries in \Cref{fig:hetio.exp.main} both have 25 candidate bags, 16 of which satisfy \concover. Accordingly, it takes only a few milliseconds to enumerate all decompositions ranked by cost in these examples. %

\section{Conclusion}
In this work, we have introduced the concept of soft hypertree decompositions (soft HDs) and the associated measure of soft hypertree width (\(\softhw\)). Despite avoiding the special condition of HDs, we retained the tractability of deciding whether a given CQ has width at most k. At the same time, \softhw is never greater than \hw and it 
may allow for strictly smaller widths. Most importantly, it provides more algorithmic flexibility. Building on the framework of candidate tree decompositions (CTDs), we have demonstrated how to incorporate diverse constraints and preferences into the decomposition process, enabling more specialised decompositions that take application-specific concerns beyond width into account. 

On a purely theoretical level, we additionally introduce a hierarchy of refinements of $\softhw$ that still yields tractable width measures for every fixed step on the hierarchy. In particular, we show that \(\ghw\) emerges as the limit of this hierarchy. 
A yet unexplored facet of this hierarchy is that  \Cref{lem:ShWInfinity} can possibly be strengthened to bound $\alpha$ more tightly with respect to structural properties of hypergraphs. Moreover, by \Cref{thm:shwghw} and previous results~\cite{DBLP:journals/ejc/AdlerGG07} we have that 
\[\ghw(H) = \softhw^\infty(H) \leq \cdots\leq  \softhw^{i+1}(H) \leq \softhw^i(H) \leq \cdots \leq \softhw^0(H) \leq \hw(H) \leq 3\cdot \ghw(H) + 1.
\]
The natural question then is whether these bounds can be tightened for steps in the hierarchy. %

\begin{acks}
This work was supported 
by the Vienna Science and 
Technology Fund (WWTF) 
[10.47379/VRG18013, 
10.47379/ICT2201]. 
Georg Gottlob  acknowledges  support from the PNRR  FAIR project  Future AI Research (PE00000013), Spoke 9 - Green-aware AI, under the NRRP MUR program funded by the NextGenerationEU initiative.
 The work of Cem Okulmus is supported by the Wallenberg AI, Autonomous Systems and Software
Program (WASP) funded by the Knut and Alice Wallenberg Foundation. 
\end{acks}

\bibliographystyle{acm}
\bibliography{refs}

\clearpage
\appendix

\section{Further Details on Section \ref{sec:softhw}}

 \subsection{Institutional Robber and Marshals Games}
 \label{app:game}

Many important hypergraph width measures admit characterisations in form of games. Robertson and Seymour characterised treewidth in terms of Robber and Cops Games as well as in terms of a version of the game that is restricted to so-called \emph{monotone} moves~\cite{seymour1993graph}.  Gottlob et al. \cite{GLS} introduced Robber and Marshals Games and showed that their monotone version characterised hypertree width. Adler~\cite{adlermarshals} showed that without the restriction to monotone moves, the Robber and Marshals Game provides a lower-bound for generalised hypertree width.
In this section we further extend Robber and Marshals Games to \emph{Institutional Robber and Marshals Games (IRMGs)} and relate soft hypertree width to IRMGs. Our presentation loosely follows the presentation of Gottlob et al.~\cite{DBLP:conf/pods/GottlobLS01}.

For a positive integer $k$ and hypergraph $H$, the \emph{$k$-Institutional Robber and Marshals Game ($k$IRMG)} on $H$ is played by three players $\pone, \ptwo$, and  $\pthree$. Player \pone plays $k$ administrators, \ptwo plays $k$ marshals, and \pthree plays the robber. The administrators and marshals move on the edges of $H$, and together attempt to catch the robber. The robber moves on vertices of $H$, trying to evade the marshals.
Players \pone and \ptwo aim to catch the robber, by having marshals on the edge that the robber moved to. However, marshals are only allowed to catch the robber when they are in an (edge) component designated by the administrators. That is, the difference to the classic Robber and Marshals Games comes from this additional interplay between administrators and marshals.

More formally, a game position is a 4-tuple $p=(A, C, M, r)$, where $A,M \subseteq E(H)$ are sets of at most $k$ administrators and marshals, respectively), $C$ is an $[A]$-edge component and $r \in V(H)$ is the position of the robber. The \emph{effectively marshalled space} is the set $\eta_p := \left(\bigcup C \right)\cap \left(\bigcup M \right)$. We call $p$ a \emph{capture position} if $r \in \eta_p$. If $p$ is a capture position, then it has \emph{escape space} $\emptyset$. Otherwise, the escape space of $p$ is the $[\eta_p]$-component that contains $r$.

The initial position is $(\emptyset, E(H), \emptyset, v)$, where $v$ is an arbitrary vertex. In every turn, the position $p=(A, C, M, r)$ is updated to a new position $(A', C', M', r')$ as follows:
\begin{enumerate}
    \item \pone picks up to $k$ edges $A'$ to place administrators on and designates an $[A']$-edge component $C'$ as the administrated component.
    \item \ptwo picks up to $k$ edges $M'$ to place marshals on.
    \item \pthree moves the robber from $r$ to any $[\eta_p \cap \bigcup M'\cap  \bigcup C']$-connected vertex $r'$.
\end{enumerate}
Formally it is simpler to think of \pone and \ptwo as one player that makes both of their moves, we refer to this combined player as $\pone+\ptwo$.
Players \pone and \ptwo win together when a capture position is reached (and thus also the combined player $\pone+\ptwo$ wins in this situation). Player \pthree wins the game if a capture position is never reached. 

It is clear from the definition of \pthree{}'s possible moves that the only relevant matter with respect to whether the robber can win, is the escape space the robber is in. It will therefore be convenient to represent positions compactly simply as $(A,C,M,\varepsilon)$ where $\varepsilon$ is an escape space.
A \emph{strategy} (for $\pone+\ptwo$) is a function $\sigma$ that takes a compact position $p=(A,C,M,\varepsilon)$ as input and returns new sets $(A',C',M') with $ $A',M'\subseteq E(H)$ representing the placement of the $k$ administrators and the $k$ marshals, respectively; and $C'$ is an $[A']$-edge component. 

A \emph{game tree} $T$ of strategy $\sigma$ is a rooted tree where the nodes are labelled by compact positions. The label of the root is $(\emptyset, E(H), \emptyset, V(H))$.
Let $p=(A,C,M,\varepsilon)$ be the label of a non-leaf node $n$ of $T$. Let  $(A',C',M') =\sigma(A,C,M,\varepsilon)$. 
For every $[\bigcup M' \cap \bigcup C']$-component $\varepsilon'$ is $[\eta_p]$-connected to $\varepsilon$, $n$ has a child with label $(A',C',M',\varepsilon')$. If no such component exists, then $n$ has a single child with label $(A',C',M', \emptyset)$.
All nodes labelled with capture configurations are leafs.
We call a strategy \emph{monotone}, if in its game tree, the escape space of a node label is always a subset of the escape space of the parent label.
A \emph{winning strategy} (for \pone+\ptwo) is a strategy for which the game tree is finite.

We are now ready to define the \emph{institutional robber and marshal width} $\irmw(H)$ of a hypergraph as the least $k$ such that there is a winning strategy in the $k$ IRMG. Analogously, the  \emph{monotone institutional robber and marshal width} $mon\text{-}\irmw(H)$ of a hypergraph as the least $k$ such that there is a monotone winning strategy in the $k$ IRMG.

\begin{theorem}
    For every hypergraph $H$ 
    $$mon\text{-}\irmw(H) \leq \softhw(H).$$
\end{theorem}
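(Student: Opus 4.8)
The plan is to start from a soft hypertree decomposition $(T,\chi)$ of width $k$ — that is, a CTD for $\softbagshk$ — and extract from it a monotone winning strategy for $\pone+\ptwo$ in the $k$IRMG on $H$. We may assume, using the CompNF property that the \ctd problem is always solved within, that $(T,\chi)$ is rooted and in component normal form, so that for every node $u$ with child $c$ there is exactly one $[\chi(u)]$-component $C_c$ with $\chi(T_c) = \bigcup C_c \cup (\chi(u)\cap\chi(c))$. Every bag $\chi(u)$ has the form $\big(\bigcup\lambda_1^u\big)\cap\big(\bigcup C^u\big)$ for some sets $\lambda_1^u,\lambda_2^u$ of at most $k$ edges, with $C^u$ a $[\lambda_2^u]$-component; record such a witnessing pair $(\lambda_1^u,\lambda_2^u)$ together with $C^u$ for each node. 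This is exactly the data the game needs: $\lambda_2^u$ (and its component $C^u$) will drive the administrator move, and $\lambda_1^u$ will drive the marshal move, so that the effectively marshalled space at $u$ equals $\chi(u)$.

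**The strategy.** I would define $\sigma$ by walking down $T$ in sync with the robber's escape space. Inductively, maintain the invariant that when the play is at node $u$, the robber's escape space $\varepsilon$ is contained in one of the $[\chi(u)]$-components, say the component $C_c$ associated with a particular child $c$ of $u$ (or $\varepsilon=\emptyset$, in which case we have already won). On $\pone+\ptwo$'s turn at node $u$ with escape space $\varepsilon\subseteq \bigcup C_c\setminus\chi(u)$: play administrators on $\lambda_2^{c}$ with administrated component the $[\lambda_2^{c}]$-component $C^{c}$ that contains $\varepsilon$, and play marshals on $\lambda_1^{c}$. Then the effectively marshalled space is $\eta = \big(\bigcup C^{c}\big)\cap\big(\bigcup\lambda_1^{c}\big)=\chi(c)$. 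The robber must now move within a $[\,\eta\cap\bigcup\lambda_1^{c}\cap\bigcup C^{c}\,]$-component, i.e.\ within a $[\chi(c)]$-component, staying $[\chi(c)]$-connected to $\varepsilon$. Since the old escape space $\varepsilon$ was inside the single $[\chi(u)]$-component $C_c$ and $\chi(c)\subseteq \bigcup C_c \cup \chi(u)$, connectedness of $T$ and the CompNF property force the new escape space to lie inside $\bigcup C_c$, and hence inside exactly one $[\chi(c)]$-component, which by CompNF corresponds to a unique child $c'$ of $c$; this re-establishes the invariant at node $c$. The base of the induction is the root: start at the root $r$ of $T$, set $\varepsilon = V(H)$; the first move plays $\lambda_1^r$ as marshals, reducing the escape space into one $[\chi(r)]$-component, i.e.\ to a child of $r$.

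**Termination and monotonicity.** Each round strictly descends one level in the finite tree $T$ (the escape space passes from a $[\chi(u)]$-component to a strictly smaller $[\chi(c)]$-component sitting inside it), so after at most $\mathrm{depth}(T)$ rounds the robber is confined to a leaf bag and every incident vertex is in the marshalled space, giving a capture position; hence the game tree of $\sigma$ is finite and $\sigma$ is winning. Moreover the escape space only ever shrinks along any root-to-leaf branch — $\varepsilon_{c'}\subseteq \bigcup C_c \setminus \chi(c) \subseteq \bigcup C_c \subseteq \varepsilon_u \cup \chi(u)$, and one checks $\varepsilon_{c'}$ is disjoint from $\chi(u)$ — so $\sigma$ is monotone. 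Since $\pone$ uses at most $k$ administrators and $\ptwo$ at most $k$ marshals throughout, we get a monotone winning strategy in the $k$IRMG, i.e.\ $\monirmw(H)\le k = \softhw(H)$.

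**Main obstacle.** The delicate point is the bookkeeping that the robber's move keeps it inside a single $[\chi(c)]$-component that is actually realised as a subtree-component $C_{c'}$ of $T$ at node $c$ — this is exactly where the CompNF property of the candidate tree decomposition must be invoked, and where one must be careful that $\chi(c)$, being an arbitrary element of $\softbagshk$ rather than a union of edges, still behaves as a separator compatible with the tree structure. A secondary subtlety is verifying that the effectively marshalled space is precisely $\chi(c)$ and not merely a superset, since the game allows the robber to move through any vertex not in $\eta_p\cap\bigcup M'\cap\bigcup C'$: here one uses that $\chi(c)=\big(\bigcup\lambda_1^c\big)\cap\big(\bigcup C^c\big)$ exactly, by the definition of $\softbagshk$, so no slack arises. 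Once these two points are handled, the rest is the routine downward induction on $T$ sketched above.
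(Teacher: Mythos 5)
Your proposal is correct and follows essentially the same route as the paper's proof: play $\lambda_2$ (with its component) as the administrators and $\lambda_1$ as the marshals so that the effectively marshalled space equals the bag $\chi(u)$, then descend the CompNF tree in sync with the robber's escape space, using CompNF to match components to children and to establish monotonicity. The additional bookkeeping you flag as the main obstacle is exactly the point the paper also handles, namely that the robber cannot leave the component $C_c$ without crossing the interface $\chi(u)\cap\chi(c)$, which the move rule forbids.
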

\begin{proof}[Proof Sketch]
    Consider a CompNF tree decomposition  $(T,B)$ with bags from $\softbagshk$. Let $r$ be the root of the decomposition. We show that the decomposition induces a (monotone) winning strategy for \pone+\ptwo in the $k$ IRMG.

    In their first move, the administrators and marshals play $A,C,M$ such that $A=\lambda_2$, $M=\lambda_1$, $C=C$  in the expression $\Cref{bagform}$ that gives $B = B_r$. That is we have $\eta_p=B_r$ at the corresponding positions $p$. The robber will then be in some maximal $[\eta_p]$-connected set of vertices $\varepsilon$.

    For each possible escape space $\varepsilon$ of the robber, there is a child node $c$ of the root of the tree decomposition that corresponds to a $[B_r]$-component $C$ such that $\bigcup C \supseteq \varepsilon$. By CompNF there is a one-to-one mapping of such components to children (in the root). Since the possible escape spaces are the maximal $[B_r]$-connected sets of vertices, they are necessarily included in exactly one such  component.
    That is the positions after the first move are of the form $p=(A,C,M,\varepsilon)$ where $B_r =\eta_p$ and $\varepsilon$ is subset of exactly one $B(T_c)$ where $c$ is a child of $r$. 

    In terms of strategy, the administrators and marshals then continue to play according to the corresponding child node $c$ of the position. Again choosing their positions according to the construction of the bag $B_c$ by \Cref{bagform}. 
    Play in this form continues as above. The only additional consideration beyond the root node is, that there might be additional $[B_u]$-components that do not match a subtree of $u$, namely those that occur only above $u$ in the decomposition. However, these are not reachable from the corresponding escape space since this would require the robber to move through the interface between $u$ and its parent, which is exactly what is not allowed in the definition of \pthree's moves.

    It also follows that the strategy is monotone. If the escape space were to gain some vertex $v$ from some position $p$ to successor $p'$, then this would violate the assumption of CompNF (or connectedness) on the original tree decomposition: recall from above that the escape space is the corresponding escape space of the  component associated with the TD node. So if $v$ is not in the escape space for $p$, then it is either not in the associated component $C_p$, or it is in $B_p$ itself. The second would mean we are actually in a capture position as $\eta_p=B_p$ (and thus have no children). Hence, $v \not in C_p$. As discussed above, the robber cannot escape this component when we play according to our strategy, and thus they can never reach $v$.
\end{proof}

For the hypergraph in \Cref{adler1}, it has been shown that there is a winning strategy for 2 marshals in the Robber and Marshals game (not the institutional variant), but a monotone strategy requires 3 marshals to win (corresponding to $\hw$ 3 of the hypergraph).
It is then somewhat surprising that 2 marshals are enough for a  monotone winning strategy of the IRMG. The intuition here is difficult, but on a technical level, the administrators let each marshal focus on a specific part of an edge, rather than always blocking the full edge. The non-monotonicity of the 2 marshal strategy for \Cref{adler1} comes from the fact that the marshals have to block too much initially, and then when moving to the next position, they have to give up part of what was blocked before in order to progress. In the institutional variant, the direction by the administration helps (possibly ironically) to avoid blocking too much initially.

Concretely, we illustrate this in \Cref{fig:game.shw} where we show a full game tree for the example from \Cref{combined_figure}. We see that the escape space never increases along child relationships. In contrast, playing only the corresponding Robber and Marshals game with 2 marshals would not be monotone. The initial move of the marshals (guided by the decomposition in \Cref{combined_figure}) would also play the covering edges $\{2,3,b\}, \{6,7,a\}$ of the root bag for the marshals. But the escape space then is only $\{4,5\}$. In the next move in that branch, the marshals would occupy $\{1,2,a\},\{5,6,b\}$. But this lets the robber reach $3$, and the escape space becomes $\{3,4\}$, breaking monotonicity.

\begin{figure}[t]
    \centering
    
    \begin{subfigure}[t]{0.6\textwidth}
        \centering
        \includegraphics[width=0.6\textwidth]{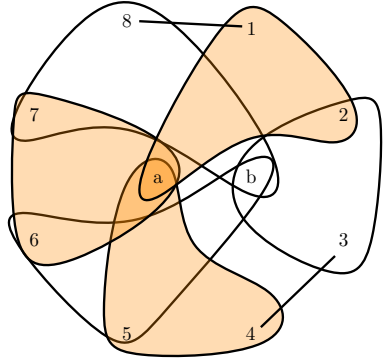}
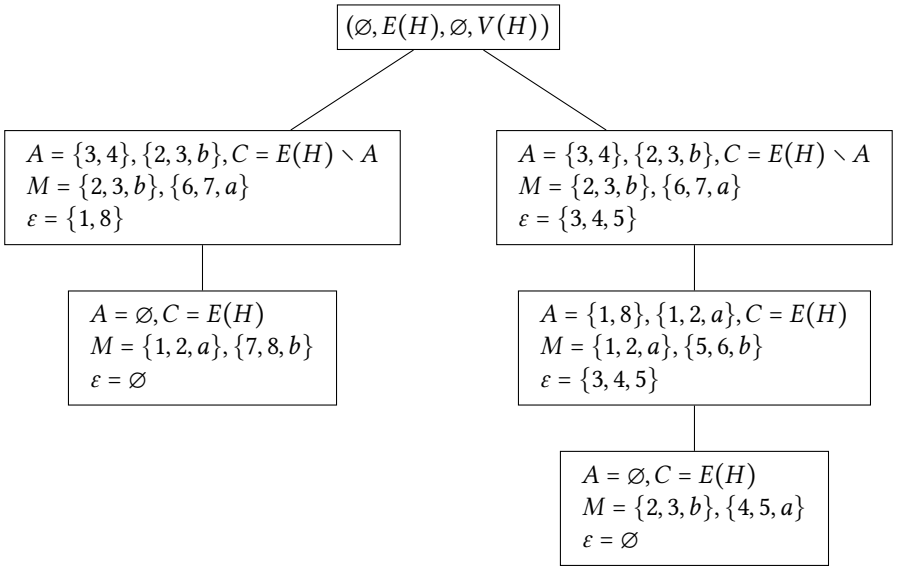
        \caption{Hypergraph $H_2$ with $\ghw(H_2)=\softhw(H_2)=2$ and $\hw(H_2)=3$. Some edges are coloured for visual clarity.}
        \label{adler1.appendix}
    \end{subfigure}%

    \vspace{1em}
    \begin{subfigure}[t]{0.7\textwidth}
        \centering
        \begin{tikzpicture}[level distance=6em, every tree node/.style={draw,rectangle,
        align=center}]
            \Tree 
            [.\node {($\emptyset, E(H),\emptyset,V(H))$} ;
            \edge {};
                [.\node  {$\begin{array}{l}
                     A = \{3,4\},\{2,3,b\}, C=E(H)\setminus A\\
                     M =\{2,3,b\}, \{6,7,a\}  \\
                     \varepsilon = \{1,8\}
                \end{array}$};
                    \edge {};
                    [.\node 
                    {$\begin{array}{l}
                        A = \emptyset, C=E(H)\\
                        M =\{1,2,a\}, \{7,8,b\}  \\
                        \varepsilon = \emptyset
                    \end{array}$};
                    ]
                ]
                \edge {};
                 [.\node  {$\begin{array}{l}
                     A = \{3,4\},\{2,3,b\}, C=E(H)\setminus A\\
                     M =\{2,3,b\}, \{6,7,a\}  \\
                     \varepsilon = \{3,4,5\}
                     \end{array}$};
                     [
                     .\node 
                     {$\begin{array}{l}
                        A = \{1,8\},\{1,2,a\}, C=E(H)\\
                        M =\{1,2,a\}, \{5,6,b\}  \\
                        \varepsilon = \{3,4,5\}
                    \end{array}$};
                        [
                         .\node 
                         {$\begin{array}{l}
                            A = \emptyset, C=E(H)\\
                            M =\{2,3,b\}, \{4,5,a\}  \\
                            \varepsilon = \emptyset
                        \end{array}$};
                         ]
                     ]
                ]
            ]
        \end{tikzpicture}
        \caption{The game tree corresponding to the decomposition in \Cref{adler1}.}
        \label{fig:game.shw}
    \end{subfigure}
    \caption{Illustration of why monotone strategies for the IRMG can be as powerful as monotone strategies for the Robber and Marshals Game.}
    \label{gamestuff}
\end{figure} 
\subsection{SHW vs. HW}
\label{app:softhw}

\begin{figure}[t]
    \centering
    \includegraphics[width=0.4\textwidth]{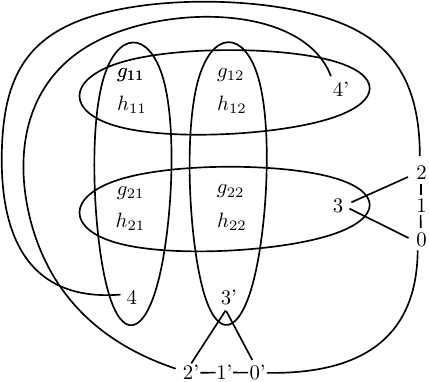}
    \caption{The hypergraph $H_3$ (some edges omitted) with $ghw(H_3)=\softhw(H_3)=3$ and $\hw(H_3)=4$ (adapted from~\cite{adlermarshals}).}
    \label{mw4}
\end{figure}

We now present an example of a hypergraph with $\softhw= 3$  and $\hw =4$.
To this end, We consider the hypergraph $H_3$ adapted from~\cite{adlermarshals}, defined as follows. Let $G=\{g_{11},g_{12}, g_{21}, g_{22}\}$, $H=\{h_{11},h_{12}, h_{21}, h_{22}\}$, and $V=\{0,1,2,3,4,0',1',2',3',4'\}$. The vertices of $H_3$ are the set $H \cup G \cup V$.
The edges of $H_3$ are
\begin{align*}
    E(H_3) = & \{\{w,v\} \mid w \in G \cup H, v \in V\}\,  \cup \{\{2,4\}, \{2', 4'\}\} \, \cup \\
    & \{0,0'\} \cup \{\{0,1\}, \{1,2\}, \{0,3\}, \{2,3\} \} \, \cup \\
     & \{\{0',1'\}, \{1',2'\}, \{0',3'\}, \{2',3'\} \} \, \cup \\
    &   \{\{g_{11}, g_{12}, h_{11}, h_{12}, 4'\}, \{g_{21}, g_{22}, h_{21}, h_{22}, 3\}, \\
    &  \phantom{\{} \{g_{11}, g_{21}, h_{11}, h_{21}, 4\}, \{g_{12}, g_{22}, h_{12}, h_{22}, 3'\} \}
\end{align*}
The hypergraph is shown in \Cref{mw4}  with the edges $\{\{w,v\} \mid w\in G \cup H, v \in V\}$ omitted. 
We have $\ghw(H_3)=3$ and $\hw(H_3)=4$. We now show that indeed also $\softhw(H_3)=3$.
A witnessing decomposition is given in~\Cref{fig:shw3ex}. Note that, strictly speaking, only $\mw(H_3) = 3$ is shown 
in~\cite{adlermarshals} and, in general, $\mw$ is only a lower bound on $\ghw$. However, it is easy to verify that 
$\ghw(H_3)=3$ holds by inspecting the winning strategy for 3 marshals in~\cite{adlermarshals}. Moreover, the $\ghw$-result 
is,
of course, implicit when we show $\softhw(H_3)=3$ next. Actually, the bags in the decomposition given in~\Cref{fig:shw3ex}
are precisely the $\chi$-labels one would choose in a GHD of width $3$. And the $\lambda$-labels are obtained from 
the (non-monotone) winning strategy for 3 marshals in~\cite{adlermarshals}.

To prove  $\softhw(H_3)=3$, it remains to verify that all the bags in~\Cref{fig:shw3ex}
are contained in $\softbags_{H_3,3}$. 
We see that $G \cup H$ are in all the bags and we, therefore, focus on the remaining part of the bags in our discussion.
For the root, this is $\{3,0',0\}$. Here,  the natural cover  $\lambda_1$ consists of the two large horizontal edges in~\Cref{mw4} together with the edge $\{0,0'\}$. The union $\bigcup \lambda_1$ contains the additional vertex $4'$. 
As $\lambda_2$ of~\Cref{bagform} we use the same two large edges plus $\{4', 2'\}$. As above, there is only one 
$[\lambda_2]$-edge component, which contains all vertices but $4'$. Note that, in contrast to the previous example, $4'$ in fact has high degree, but all of the edges that touch $4'$ are inside the separator. 

Except for the bag $ G\cup H \cup \{2,4 \}$, the bags always miss either vertex $4$ or $4'$ from the ``natural'' covers,
and the arguments are analogous to the ones above for the root bag.
For the remaining bag, we consider the cover $\lambda_1$ consisting of the two vertical large edges plus 
the additional edge $\{2,4\}$.
Thus, we have the problematic vertex $3'$, which is contained in $\bigcup \lambda_1$ 
but not in the bag.
Here, we observe a more complex scenario than before. As $\lambda_2$ take the two large horizontal edges plus $\{0',1'\}$. 
This splits $H_3$
into two $[\lambda_2]$-components: one with vertices $G \cup H \cup\{0',0,1,2,3,4\}$ and another with vertices $G \cup H \cup \{1',2',3',4'\}$. The intersection of $\bigcup \lambda_1$ with the first component will produce the desired bag.

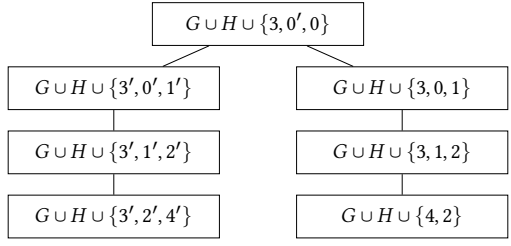
\begin{figure}[h]
    \centering
    \begin{tikzpicture}[every tree node/.style={draw,rectangle,minimum width=10em,
    minimum height=2em,align=center},scale=.8]
        \Tree [.\node (1) {$G \cup H \cup \{ 3, 0', 0 \}$};
                \edge node [auto=left] {} ;
                [.\node (2l) {$G \cup H \cup \{ 3', 0', 1'\} $};           \edge node {};
                  [.\node (3l) {$G \cup H \cup \{ 3', 1', 2' \}$}; 
                    \edge node {};
                    [.\node (4l) {$G \cup H \cup \{  3', 2', 4' \}$}; ]
                  ]
                ]
                \edge node [auto=right] {} ;
                [.\node (2r) {$G \cup H \cup \{3, 0, 1\} $}; 
                \edge node {};
                  [.\node (3r) {$G \cup H \cup \{3,1,2\} $}; 
                    \edge node  {};
                    [.\node (4r) {$G \cup H \cup \{4,2\} $}; ]
                   ]
                ]
        ]
    \end{tikzpicture}
    \caption{A soft hypertree decomposition of $H_3$ with width 3.}
    \label{fig:shw3ex}
\end{figure}

\section{Further Details for Section \ref{sec:softer}}

\begin{proof}[Proof of \Cref{lem:softer.hierarchy}]
We first prove $E^{(i)} \subseteq \softbagshk^i$. Indeed, let $e'$ be an arbitrary edge in  $E^{(i)}$. 
Now pick $\lambda_1=\{e'\}$ and $\lambda_2 = \emptyset$. Hence, in particular, 
$C = E(H)$ itself is the only $[\lambda_2]$-component in $E(H)$. We thus get $e' = B$ with 
$B = \left( \bigcup \lambda_1 \right)\cap  \left (\bigcup C \right)$ and, therefore,  $e' \in \softbagshk^i$.

From this, the inclusion $E^{(i)} \subseteq E^{(i+1)}$ follows easily. Indeed, by \Cref{def:softer.bags}, we have 
$E^{(i+1)} = E^{(i)} \inttimes \softbagshk^i$. Now let $e'$ be an arbitrary edge in $E^{(i)}$. 
By $E^{(i)} \subseteq \softbagshk^i$, we conclude that  $e' \in \softbagshk^i$ and therefore, 
also 
$e' = e' \cap e' \in E^{(i)} \inttimes \softbagshk^i = E^{(i+1)}$ holds.

Finally, consider an arbitrary element $B \in \softbagshk^i$, i.e., $B$ is of the form 
$B = \left( \bigcup \lambda_1 \right)\cap  \left (\bigcup C \right)$, 
where  $C$ is a $[\lambda_2]$-component of $H$, $\lambda_1$ is a set of at most $k$ elements of $E^{(i)}$, 
and $\lambda_2$ is a set of at most $k$ elements of $E(H)$. 
From $E^{(i)} \subseteq E^{(i+1)}$, it follows that 
$\lambda_1$ can also be considered as a set of at most $k$ elements of $E^{(i+1)}$.
Hence, $B$ is also contained in $\softbagshk^{i+1}$.
\end{proof}

\begin{proof}[Proof of \Cref{lem:poly.size.softer}]
By Lemma~\ref{lem:softer.hierarchy}, $E^{(i)} \subseteq \softbagshk^i$ holds
and, by \Cref{def:softer.bags},
we have  $E^{(i+1)} = E^{(i)} \inttimes \softbagshk^{i}$. 
Hence, $E^{(i+1)} = E^{(i)} \inttimes \softbagshk^{i} \subseteq 
\softbagshk^{i} \inttimes \softbagshk^{i}$ and, therefore, 
$|E^{(i+1)}| \leq |\softbagshk^{i}|^2$.
According to \Cref{def:softer.bags}, $\softbagshk^{i+1}$ contains all edges of the form 
$B = \left( \bigcup \lambda_1 \right)\cap  \left (\bigcup C \right)$, 
where  $C$ is a $[\lambda_2]$-component of $H$, $\lambda_1$ is a set of at most $k$ elements of $E^{(i+1)}$, 
and $\lambda_2$ is a set of at most $k$ elements of $E(H)$. Hence, there are at most $|E^{(i+1)}|^{k+1}$ possible choices for $\lambda_1$,
at most $|E(H)|^{k+1}$ many possible choices for $\lambda_2$, and at most $|E(H)|$ possible choices of a $[\lambda_2]$-component for given 
$\lambda_2$. In total, using the monotonicity of $E^{(i)}$ and, hence, the relationship $E(H) \subseteq E^{(i+1)}$, 
we thus get $|\softbagshk^{i+1}| \leq |E^{(i+1)}|^{2k+3}$. Together with $|E^{(i+1)}| \leq |\softbagshk^{i}|^2$, 
we get the desired inequality $|\softbagshk^{i+1}| \leq (|\softbagshk^{i}|)^{4k+6}$.
\end{proof}

\begin{proof}[Proof of Lemma~\ref{lem:ShWInfinity}]
To avoid expressions of the form $\bigcap \bigcup$, we introduce the following notation: for a set $C$ of edges, we write $V(C)$ to denote the set of 
all vertices contained in $C$, i.e., 
$V(C) = \bigcup C$. 

The proof proceeds in three steps.

\smallskip
\noindent
{\em Claim A.} For every $i \geq 0$, the elements in $E^{(i)}$ can be represented in the form 
$$e \cap \Big( \big[\bigcap_{e_1 \in E_1} e_1 \cap \bigcap_{C_1 \in \mathcal{C}_1} V(C_1) \big]   
\cup  \dots \cup
\big[\bigcap_{e_m \in E_m} e_m \cap \bigcap_{C_m \in \mathcal{C}_m} V(C_m) \big] \Big)
$$
for some $m \geq 0$, where $e \in E(H)$, $E_1, \dots, E_m \subseteq E(H)$, and $\mathcal{C}$ is a set of sets of edges, such that 
every element $C \in \mathcal{C}$ is a $[\lambda]$-component, where $\lambda$ is a set of at most $k$ edges from $E(H)$. 

The claim is easily proved by induction on $i$. For the induction begin, we have $E^{(0)} = E(H)$ and 
every element $e \in E(H)$
clearly admits such a representation, namely by setting $m = 0$.
For the induction step, we use the induction hypothesis for the elements in  $E^{(i)}$, expand
the definition of the elements in $\softbagshk^{i}$ and  $E^{(i+1)}$ according to \Cref{def:softer.bags}, and then simply
apply distributivity of $\cup$ and $\cap$.

The above claim implicitly confirms  that all elements contained in $E^{(i)}$ are either edges in $E(H)$ or subedges thereof.
Let  us now ignore for a while the sets $\mathcal{C}$ of sets of components in the above representation of elements in $E^{(i)}$.
The next claims essentially state that after $n$ iterations, for the sets $E_j$ in the above representation, 
we may use any non-empty subset $E(H)$ and, after another $n$ iterations, 
no new elements (assuming all sets $\mathcal{C}_j$ to be empty)
can be  produced.

\medskip
\noindent
{\em Claim B.} Let $i \geq 0$. Then $E^{(i)}$ contains {\em all} possible vertex sets $S$ that can be represented as 
$S = \bigcap_{e' \in E'} e'$ for some $E' \subseteq E(H)$ with $|E'| \leq i$.

Again, the claim is easily proved by induction on $i$. In the definition of $\softbagshk^{i}$, 
we always choose $\lambda_1$
as a singleton subset of $E(H)$ and $\lambda_2 = \emptyset$. Hence, in particular, 
after $n$ iterations, we have all possible sets of the form $S = \bigcap_{e' \in E'} e'$ for all possible subsets
$E' \subseteq E(H)$ at our disposal. This fact will be made use of in the proof of the next claim.

\medskip
\noindent
{\em Claim C.} Suppose that in the definition of  $\softbagshk^i$ in \Cref{def:softer.bags}, we only allow $\lambda_2 = \emptyset$, i.e., 
the only component we can thus construct is $C = E(H)$ and, therefore, $V(C) = V(H)$ holds.
Then, for every $i \geq 2n$, we have 
$E^{(i)} = E^{(i+1)}$.

To prove the claim, we inspect the above representation of elements in $E^{(i)}$, but assuming $V(C) = V(H)$ for every $C$.
That is, all elements are of the form 
$$e \cap \Big( \big[\bigcap_{e_1 \in E_1} e_1  \big]   
\cup  \dots \cup
\big[\bigcap_{e_m \in E_m} e_m  \big] \Big),
$$
for some $m \geq 0$. By Claim B, we know that from $i  \geq n$ on, we may choose as  set $E_j$ any subset of $E(H)$.
It is easy to show, by induction on $i$, that we can produce any element of the form 
$e \cap \Big( \big[\bigcap_{e_1 \in E_1} e_1  \big]   
\cup  \dots \cup
\big[\bigcap_{e_i \in E_i} e_i  \big]\Big)$ in  $E^{(n+i)}$. 
Then, to prove Claim C, it suffices to show that choosing $i$ bigger than $n$ does not allow us to produce new subedgdes of $e$.
To the contrary, assume that there exists  $n' > n$ and a set 
$e' = e \cap \Big( \big[\bigcap_{e'_1 \in E'_1} e'_1  \big]   
\cup  \dots \cup
\big[\bigcap_{e'_{n'} \in E'_{n'}} e'_{n'}  \big] \Big)$ that cannot be represented as
$e' = e \cap \Big( \big[\bigcap_{e_1 \in E_1} e_1  \big]   
\cup  \dots \cup
\big[\bigcap_{e_{n} \in E_{n}} e_{n}  \big]\Big)$. Moreover, assume that $n'$ is minimal with this property.
Then we derive a contradiction by inspecting
the sequence $s_1 = e \cap \Big( \big[\bigcap_{e'_1 \in E'_1} e'_1  \big] \Big)$,
$s_2 = e \cap \Big( \big[\bigcap_{e'_1 \in E'_1} e'_1  \big]   
\cup  
\big[\bigcap_{e'_{2} \in E'_{2}} e'_{2}  \big] \Big)$, \dots, 
$s_{n'} =  e \cap \Big( \big[\bigcap_{e'_1 \in E'_1} e'_1  \big]   
\cup  \dots \cup
\big[\bigcap_{e'_{n'} \in E'_{n'}} e'_{n'}  \big] \Big)$.
Clearly, this sequence is monotonically increasing and all elements $s_j$ are subedges of $e$. 
By $n' > n \geq |V(H)|$, the sequence cannot be strictly increasing. Hence, there 
exists $j$, with $s_j = s_{j-1}$. But then we can leave  $\big[\bigcap_{e'_j \in E'_j} e'_j  \big]$
out from the representation   $e' = e \cap \Big( \big[\bigcap_{e'_1 \in E'_1} e'_1  \big]   
\cup  \dots \cup
\big[\bigcap_{e'_{n} \in E'_{n}} e'_{n}  \big]\Big)$ and still get $e'$. This contradicts the 
minimality of $n'$.

\medskip

\noindent
{\em Completion of the proof of the lemma.} From Claim C it follows that after $2n$ iterations, any new elements 
can only be obtained via the intersections with sets of the form $\bigcap_{C \in \mathcal{C}} V(C)$, where
$\mathcal{C}$ is a set of components. Recall that, according to  \Cref{def:softer.bags}, components are only 
built w.r.t.\ sets $\lambda_2$ of sets of edges from $E(H)$. 
Hence, all possible components $C$  are available already in the first iteration. 
Hence, analogously to Claim B, we can show that 
after $n$ successive iterations, we can choose any sets $\mathcal{C}_j$ with $|\mathcal{C}_j| \leq i$ for constructing 
elements of the form 
$$e \cap \Big( \big[\bigcap_{e_1 \in E_1} e_1 \cap \bigcap_{C_1 \in \mathcal{C}_1} V(C_1) \big]   
\cup  \dots \cup
\big[\bigcap_{e_m \in E_m} e_m \cap \bigcap_{C_m \in \mathcal{C}_m} V(C_m) \big] \Big),
$$
in  $E^{(2n + i)}$. In particular, in  $E^{(3n)}$,
we get any element of the above form 
for any possible choice of  $\mathcal{C}_j$ with $|\mathcal{C}_j| \leq n$. Finally, analogously to Claim C, we
can show that taking sets $\mathcal{C}_j$ with $|\mathcal{C}_j| >  n$ cannot contribute any new element. 
Again, we make use of the fact that $n \geq |V(H)|$ and it only makes sense to add an element $C$ to
$\mathcal{C}_j$ if the intersection with $V(C)$ allows us to delete another vertex from $e$. But we cannot
delete more than $n$ vertices. 
\end{proof}

\subsection{Proof of Theorem~\ref{thm:shw.ghw}}
\label{thm:shwghw}

\begin{proof}
Let $\mathcal{G} = (T,\lambda, \chi)$ be a GHD of width $\ghw(H)$ of hypergraph $H$.
For a node $u$ in $T$, we write $\lambda_u$ and $\chi_u$ to denote the $\lambda$-label and $\chi$-label
at $u$. It suffices to show that, for some $m \geq 0$, the following property holds: 
for every node $u$ in $T$ and every $e \in E(H)$ with $e \in \lambda_u$,
the subedge $e \cap \chi_u$ is contained in $E^{(m)}$. Clearly, if this is the case, then 
every bag $\chi_u$ is contained in  $\softbagshk^{m}$, since we can define  $\lambda_1$ by replacing
every $e$ in $\lambda$ by $e \cap \chi_u$ and setting $\lambda_2 = \emptyset$. Then 
$\mathcal{G}' = (T, \chi)$ is a candidate TD of $\softbagshk^{m}$, and, therefore, 
we get $\ghw(H) = \softhw^m(H)$.

To obtain such an $m$ with $(e \cap \chi_u) \in E^{(m)}$
for every node $u$ in $T$ and edge $e \in \lambda_u$, 
we recall the following result from \cite{JACM}, Lemma 5.12:
Let $e \in \lambda_u$ with $e \not\subseteq \chi_u$ for some node $u$ and let 
$u'$ be a node with $e \subseteq \chi_{u'}$ (such a node $u'$ must exist in a GHD). 
Moreover, let 
$u = u_0, \dots, u_\ell = u'$ be the path from $u$ to $u'$ in $T$.
Then the following property holds: 
$
    e \cap \chi_u = e \cap \big( \bigcap_{j=1}^\ell \bigcup  \lambda_{u_j} \big)
$. 

\noindent 
Clearly, for every $j$, the bag $\bigcup  \lambda_{u_j}$ is in 
$\softbagshk$ and, therefore, the subedge
$s_j = e \cap \bigcup  \lambda_{u_j}$ of $e$ is in $E^{(1)}$.
We claim that, for every $i \geq 1$, it holds that 
$\big( \bigcap_{j = 1}^i s_j \big)$ is contained in $E^{(i)}$.

\smallskip
\noindent
{\em Proof of the claim.} We proceed by induction on $i$. For $i=1$, the claim reduces to 
the property that 
$s_1 = e \cap \bigcup  \lambda_{u_1}$ is in $E^{(1)}$, which we have
just established. 
Now suppose that the claim holds for arbitrary $i \geq 1$.  
We have to show that $\big( \bigcap_{j = 1}^{i+1} s_j \big)$ is contained in $E^{(i+1)}$.
On the one hand, we have that 
every $s_j = e \cap \bigcup  \lambda_{u_j}$ is in $E^{(1)}$ and, therefore, also in  
$E^{(i)+1}$ and in $\softbagshk^{i+1}$. 
On the other hand, by the induction hypothesis,
$\big( \bigcap_{j = 1}^i s_j \big)$ is contained in $E^{(i)}$ and, therefore, 
also in $\softbagshk^i$. Hence, by  \Cref{def:softer.bags}, 
$\big( \bigcap_{j = 1}^{i+1} s_j \big)$ is contained in $E^{(i+1)}$.

\smallskip
We conclude that, for every node $u$ in $T$ and every edge $e \in \lambda_u$, 
the subedge 
$e \cap \chi_u$ = $e \cap \big( \bigcap_{j=1}^\ell \bigcup  \lambda_{u_j} \big)$
of $e$ is contained in $\softbagshk^p$, where $p$ is the maximal length of paths in $T$. 
That is, we have  $\chi_u = \bigcup \lambda'_u$, where $\lambda'_u$ is obtained by 
replacing every edge $e \in \lambda_u$ by the subedge $e \cap \chi_u$. 
Hence, $(T, \chi)$ is a candidate TD of $\softbagshk^p$ and, therefore,
$\softhw^p(H) = \ghw(H)$
 \end{proof}

\subsection{Proof of Theorem~\ref{thm:shw.gap}}
\label{app:shwgap}

\begin{proof}
Let $(H,\alpha,\beta)$ be the switch graph from Example $2$ of \cite{adlermarshals}.
For convenience we recall the definition here.
Let $n \geq 2$. Let $N_1$ be a punctured hypergraph with $\mathrm{mw}(N_1) = \mathrm{mon\text{-}mw}(N_1) > n$. Let $N_2$ be a disjoint copy of $N_1$. Let $(H, \alpha, \beta)$ be the switch graph with

\[
V(H) := V(N_1) \cupdot V(N_2) \cupdot \{m'\} \cupdot \{e_{1i} \mid i = 1, \ldots, n+1\} \cupdot \{e_{2i} \mid i = 1, \ldots, n+1\}.
\]

and 

\begin{align}
E(H) := &E(N_1) \cupdot E(N_2) \cupdot \{\{e_{1i}\} \mid i = 1, \ldots, n+1\} \cupdot \{\{e_{2i}\} \mid i = 1, \ldots, n+1\} \cupdot \{\{m\}\}
\\
&\cupdot \{\{e_{1i}, n_1\} \mid i = 1, \ldots, n+1, n_1 \in V(N_1)\} \cupdot \{\{e_{2i}, n_2\} \mid i = 1, \ldots, n+1m n_2 \in V(N_2)\}
\\
&\cupdot \{\{m', n\} \mid n \in V(N_1) \cup V(N_2)\}.
\end{align}
and 
\[
\alpha = \{\{e_{11}\}, \ldots, \{e_{1, n+1}\}, \{m'\}\} \cup V(N_2),
\]
\[
\beta = \{\{e_{21}\}, \ldots, \{e_{2, n+1}\}, \{m'\}\} \cup V(N_1).
\]
Let $s := |V(N_1)| + (n+1)+1 = |\alpha| = |\beta|$.

Let $H_{BOG}$ be the hypergraphs with $n$-machinists associated to $(H,\alpha,\beta)$ as in the proof of Theorem 4.2 in \cite{adlermarshals}. Let $E(N_1)$ and $E(N_2)$ be the sets of edges that are fully contained within $V(N_1)$ and $V(N_2)$, respectively. Furthermore, let $E^+(N_1) = E(N_1) \cup \{\{m',n\}\mid n\in N_1\}$, and analogously for $E^+(N_2)$. It is known that $\ghw(H_{BOG})=s+1$ and $\hw(H_{BOG})=s+n+1$~\cite{adlermarshals}.

\textbf{We make a critical modification to the construction $H_{BOG}$}. Add the new vertex $\star$, and the edges $\{\star,g\}$ for every $g\in B$. That is, $\star$ is adjacent exactly to the balloon vertices.
We denote the result of this construction as $H_{BOG}^\star$. We will argue the following. 
\begin{theorem} 
    \[
\softhw^1(H^\star_{BOG}) = \ghw(H_{BOG}) \leq \hw(H_{BOG})-n = \hw(H_{BOG}^\star)
\]
\end{theorem}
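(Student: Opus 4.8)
Write $s+1 = \ghw(H_{BOG})$ and $s+n+1 = \hw(H_{BOG})$ for the values established by Adler~\cite{adlermarshals}. The middle inequality $\ghw(H_{BOG}) \le \hw(H_{BOG}) - n$ is then just the identity $s+1 = (s+n+1)-n$ and requires nothing further. I claim that both outer equalities follow once we establish the single identity $\hw(H^\star_{BOG}) = s+1$. Indeed, combining \Cref{lem:softer.hierarchy} (which gives $\softbagshk^0 \subseteq \softbagshk^1$, hence $\softhw^1 \le \softhw^0$) with the already-proved relations $\ghw \le \softhw^0 = \softhw \le \hw$, we get the chain $\ghw(H) \le \softhw^1(H) \le \softhw^0(H) \le \hw(H)$ for every hypergraph $H$; moreover, every candidate bag in $\softbagshk^1$ is contained in a union of at most $k$ genuine edges of $H$, so any CTD for $\softbagshk^1$ is a GHD of width $\le k$, which re-confirms $\ghw(H) \le \softhw^1(H)$. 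Hence, given $\hw(H^\star_{BOG}) = s+1$, we immediately obtain $\softhw^1(H^\star_{BOG}) \le \hw(H^\star_{BOG}) = s+1$ and $\hw(H^\star_{BOG}) = (s+n+1)-n = \hw(H_{BOG}) - n$. It remains to prove the lower bound $\softhw^1(H^\star_{BOG}) \ge s+1$ and the identity $\hw(H^\star_{BOG}) = s+1$, the latter splitting into a lower and an upper bound.

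For the lower bounds I would use monotonicity of $\ghw$ under induced subhypergraphs: given a GHD $(T,\lambda,\chi)$ of a hypergraph $G$ and $U \subseteq V(G)$, replacing every bag $\chi_u$ by $\chi_u \cap U$ and every $e \in \lambda_u$ by $e \cap U$ (discarding empty sets) yields a GHD of $G[U]$ of no larger width. Applying this with $G = H^\star_{BOG}$ and $U = V(H_{BOG})$, the hypergraph $H^\star_{BOG}[V(H_{BOG})]$ is exactly $H_{BOG}$ together with the singleton edges $\{g\}$ for $g \in B$; since the balloon vertices are non-isolated in $H_{BOG}$, these singletons are automatically covered by every GHD of $H_{BOG}$ and do not change $\ghw$. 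Thus $\ghw(H^\star_{BOG}) \ge \ghw(H_{BOG}) = s+1$, and by the chain above this gives both $\softhw^1(H^\star_{BOG}) \ge s+1$ and $\hw(H^\star_{BOG}) \ge s+1$.

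The remaining ingredient, and the main obstacle, is the upper bound $\hw(H^\star_{BOG}) \le s+1$, i.e.\ the construction of an honest hypertree decomposition of $H^\star_{BOG}$ of width $s+1$. The starting point is Adler's width-$(s+1)$ GHD of $H_{BOG}$: its $\chi$-labels are the natural bags of a width-$(s+1)$ decomposition, but they cannot be organised into an HD of $H_{BOG}$ of the same width precisely because, along certain root-to-leaf paths through the balloon gadgets, some edge $e \in \lambda_u$ must stick out of $\chi_u$ while the protruding vertices reappear in a descendant bag, violating the special condition. The critical feature of the modification is that $\star$ is adjacent to exactly the balloon vertices, so for every $g \in B$ the small edge $\{\star,g\}$ is available; the plan is to add $\star$ to the appropriate bags and re-choose the $\lambda$-labels using these $\{\star,g\}$-edges so that every edge that previously protruded from its bag gets replaced by a cover whose only extra vertex is $\star$, and $\star$ --- unlike the old protruding balloon vertices --- may legitimately be carried down the offending subtree without violating the special condition or connectedness. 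Making this precise requires a case analysis of exactly how the special condition fails in Adler's decomposition and a check that each replacement keeps $|\lambda_u|$ at $s+1$, followed by the global verification that the resulting labelling is a genuine HD. An equivalent route is game-theoretic: exhibit a \emph{monotone} winning strategy for $s+1$ marshals in the Robber and Marshals game on $H^\star_{BOG}$ --- impossible on $H_{BOG}$, which needs $s+n+1$ marshals --- with $\star$ serving to ``link'' the otherwise independent balloons so the marshals no longer have to chase the robber between them non-monotonically. I expect this construction (or the corresponding strategy) to be the only genuinely laborious part; everything else is bookkeeping with the inequalities recalled above.
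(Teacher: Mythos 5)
Your proposal rests on a reading of the final equality that cannot be what the theorem means, and the step you identify as "the main obstacle" --- proving $\hw(H^\star_{BOG}) = s+1$ --- is not merely hard but false. By monotonicity of $\hw$ under induced subhypergraphs (the very tool you invoke for your lower bounds, and the one the paper uses), $H_{BOG}$ being an induced subhypergraph of $H^\star_{BOG}$ gives $\hw(H^\star_{BOG}) \geq \hw(H_{BOG}) = s+n+1$, not $s+1$. This is the whole point of the construction: the star vertex leaves $\hw$ high while pulling $\softhw^1$ down to $s+1$, which is what yields the gap $\softhw^1(H^\star_{BOG}) + n \leq \hw(H^\star_{BOG})$ needed for Theorem~\ref{thm:shw.gap}. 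If your claim $\hw(H^\star_{BOG})=s+1$ were provable, the family would witness no gap at all and Theorem~\ref{thm:shw.gap} would collapse. (The displayed statement is admittedly misleading --- the last term should read $\hw(H^\star_{BOG}) - n$, as the paper's own proof and the downstream use make clear --- but the sanity check against Theorem~\ref{thm:shw.gap}, or against the monotonicity argument you yourself use, should have flagged the literal reading as untenable.) In particular, your planned HD construction "adding $\star$ to the appropriate bags" and your alternative monotone Robber-and-Marshals strategy for $s+1$ marshals on $H^\star_{BOG}$ are both doomed.

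Because of this, the one inequality that actually carries content, $\softhw^1(H^\star_{BOG}) \leq s+1$, is left unproved: you derive it from $\softhw^1 \leq \hw$ applied to $H^\star_{BOG}$ together with the false identity, so nothing remains once that identity is removed. The paper's proof instead exhibits an explicit CompNF tree decomposition of $H^\star_{BOG}$ of width $s+1$ all of whose bags lie in $\softbags^1_{H^\star_{BOG},s+1}$. The role of $\star$ is precisely to make $\{\star\}\cup B$ realisable as a candidate bag in $\softbags^0_{H^\star_{BOG},s+1}$ (separating $\star$ from everything else with the $s$ edges $a_1,\dots,a_s$), so that the subedges $a_i' = a_i \cap (\{\star\}\cup B)$ and $b_j' = b_j \cap (\{\star\}\cup B)$ land in $E^{(1)}$; these subedges then cover the bags $B \cup V(N_1)$ and $B \cup V(N_2)$ with only $s$ elements of $E^{(1)}$, which is what a plain GHD of $H_{BOG}$ can do but no width-$(s+1)$ soft HD of order $0$ can. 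None of this machinery --- the decomposition of \Cref{fig:gapproof}, the membership checks for its bags, or the verification of connectedness through the eyelet edges --- appears in your proposal, and it cannot be recovered from the generic chain $\ghw \leq \softhw^1 \leq \softhw^0 \leq \hw$, which is the only part of your argument that is correct.
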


The inequality is an implicit consequence of the construction for Theorem 4.2 in \cite{adlermarshals} (and can be validated also through our argument). Furthermore $H_{BOG}$ is an induced subhypergraph of $H^\star_{BOG}$ (induced on all vertices but $\star$). It is easy to see that hypertreewidth is monotone over induced subhypergraphs, thus $\hw(H_{BOG}) \leq \hw(H^\star_{BOG})$.
To show the equality we explicitly give a (CompNF) tree decomposition for $H^\star_{BOG}$ using only bags from $\softbags^1_{H^\star_{BOG},s+1}$ in \Cref{fig:gapproof}. We verify this  through a number of independent claims.

\bigskip
{\em Claim.} The TD from \Cref{fig:gapproof} is a valid tree decomposition.

\smallskip
\noindent
It is straightforward to check that every edge of $H^\star_{BOG}$ is covered. To cover the eyelet edges some care is required to maintain connectedness. Recall that the eyelet edges are 
for every $g\in B, h\in E(H)$ and $\ell \in [m]$ (where $m$ is an integer, unrelated to the vertex $m'$ in the switch graph.) there are edges $\{g,p_\ell(g,h)\}$ and $h \cup \{p_\ell(g,h)\}$. Observe that the edges $E(H)$ can be partitioned into $E^+(N_1)$, $ E^+(N_2)$ as well as the sets of edges connecting every $e_{1i}$ to all vertices in $N_1$ and $e_{2i}$ to all vertices in $N_2$. With this partition in mind one can then observe that the decomposition in \Cref{fig:gapproof} satisfies the connectedness condition.
\hfill$\blacktriangle $

\bigskip
{\em Claim.} $\{\star\} \cup B\in \softbags^0_{H^\star_{BOG},s+1}$.

\smallskip
\noindent
Take as $\lambda_2$ in \Cref{bagform} the edges $\{a_1,\dots,a_s\}$. We have $\bigcup \lambda_2 \supset B$ (cf.,~\cite{adlermarshals}). Then, $\star$ is not $[\lambda_2]$-connected to any other vertex. Hence there is a $[\lambda_2]$-component $C$ where $\bigcup C$ equals $\star$ plus all adjacent vertices to $\star$. By definition then $\bigcup C = \{star\}\cup B$.
\hfill $\blacktriangle$

\bigskip
{\em Claim.} For every $i$, the subedges $a'_i = \{g_{i1},\dots,g_{is}\}$ and  $b'_j = \{g_{1j},\dots,g_{sj}\}$ are in $E^{(1)}$.

\smallskip
\noindent
We argue for $a'_i$, the case for $b'_j$ is symmetric. Recall from the construction of $H_{BOG}$ that there is an edge $a_i = \{g_{i1},\dots,g_{is}\} \cup \alpha_i$ for every $i\in[s]$. Thus $a_i \in E^{(0)}$, and from the previous claim $\{\star\} \cup B \in \softbags^0_{H^\star_{BOG},s+1}$. Hence $a'_i \in E^{(0)} \inttimes \softbags^0_{H^\star_{BOG},s+1} = E^{(1)}$
\hfill$\blacktriangle $

\bigskip
{\em Claim.} For every $i \in [n+1]$,  $B \cup V(N_2)$ is the union of $s$ edges in $E^{(1)}$. The same holds also for $B\cup V(N_1)$.

\smallskip
\noindent
The respective edges are $X = \{a'_1,\dots,a'_{n+2},a_{n+3},\dots,a_s\}$. That is we take the subedges of $a_i$ for the first $n+2$ indexes, i.e., those indexes where $\alpha_i=e_{1i}$ or $\alpha_i=m$. The $\alpha$ for the rest of the indexes make up exactly the set $V(N_1)$ by definition. In summary we have $\bigcup X = \left( \bigcup_{i=1}^s a_i\right) \setminus \{e_{11},\dots,e_{1,n+1},m'\} = B \cup V(N_2)$. For the case with $V(N_1)$ make the same construction for the $b_j$ edges.
\hfill$\blacktriangle $

\bigskip
{\em Claim.} The TD from \Cref{fig:gapproof} uses only bags from $\softbags_{H^\star_{BOG},s+1}$.

\smallskip
\noindent
There are 4 kinds of bags that are not simply unions of $s+1$ (even 2) edges. All are either of the form $X = V(N_2)\cup B \cup \{v\}$ or, $Y = V(N_1)\cup B \cup \{v\}$.
From the previous claim it then follows immediately that these are in $\softbags_{H^\star_{BOG},s+1}$.
\hfill$\blacktriangle $

\begin{figure}
\tikzset{edge from parent/.style={draw, edge from parent path=
    {(\tikzparentnode) -- (\tikzchildnode)}}
   ,level distance={3em},sibling distance={.2in}}
  \begin{tikzpicture}[font=\small,sibling distance=0.2cm, level distance=6em, every tree node/.style={draw,rectangle,minimum width=8em,
    minimum height=2em,align=center},scale=1]
        \Tree [.\node (alpha) {$V(N_2) \cup B \cup \{m'\}$};
                \edge node [auto=right] {\scriptsize $i \in [n]$} ;
                [.\node (alphaei) {$V(N_2) \cup B \cup \{e_{2i}\} $}; 
                \edge node [auto=right] 
                {\tiny $\begin{array}{c}
                        \ell \in [m] \\
                        g \in B \\
                        h \in E(H) \\
                        \text{with } e_{2i} \in h
                        \end{array}$} ;
                    [.\node (alphaeieyelet) {$g,p_\ell(g,h),h$}; 
                    ]
                ]
                \edge node [auto=center] {\tiny
                $\begin{array}{c}
                        \ell \in [m] \\
                        g \in B \\
                        h \in E^+(N_2) 
                        \end{array}$} ;
                [.\node (n2eyelets) {$g,p_\ell(g,h),h$}; 
                ]
                \edge node [] {} ;
                [.\node (beta) {$V(N_1) \cup B \cup \{m'\} $}; 
                    [.\node (betaei) {$V(N_1) \cup B \cup \{e_{1i}\} $}; 
                    \edge node [auto=left] {\tiny
                    $\begin{array}{c}
                            \ell \in [m] \\
                            g \in B \\
                            h \in E(H) \\
                            \text{with } e_{1i} \in h
                            \end{array}$} ;
                        [.\node (betaeieyelet) {$g,p_\ell(g,h),h$}; 
                        ]
                    ]
                    \edge node [auto=center] {\tiny $\begin{array}{c}
                            \ell \in [m] \\
                            g \in B \\
                            h \in E^+(N_1)
                            \end{array}$} ;
                    [.\node (n1eyelets) {$g,p_\ell(g,h),h$}; 
                    ]
                ]
                [.\node [minimum width=1em] (star) {$B \cup \{\star\}$};
                ]
        ]
    \end{tikzpicture}
    \caption{A soft hypertree decomposition of width $s+1$ for $H^\star_{BOG}$. We note that the terms $\ell\in [m]$ refer to the number $m$ of eyelet vertices from the construction of $H_{BOG}$ in \cite{adlermarshals}. For consistency we minimised changes in the naming from the reference, despite the slightly confusing similarity to $m'$ (also called $m$ in \cite{adlermarshals}), a vertex in the switch graphs that the construction is applied to.}
    \label{fig:gapproof}
\end{figure}
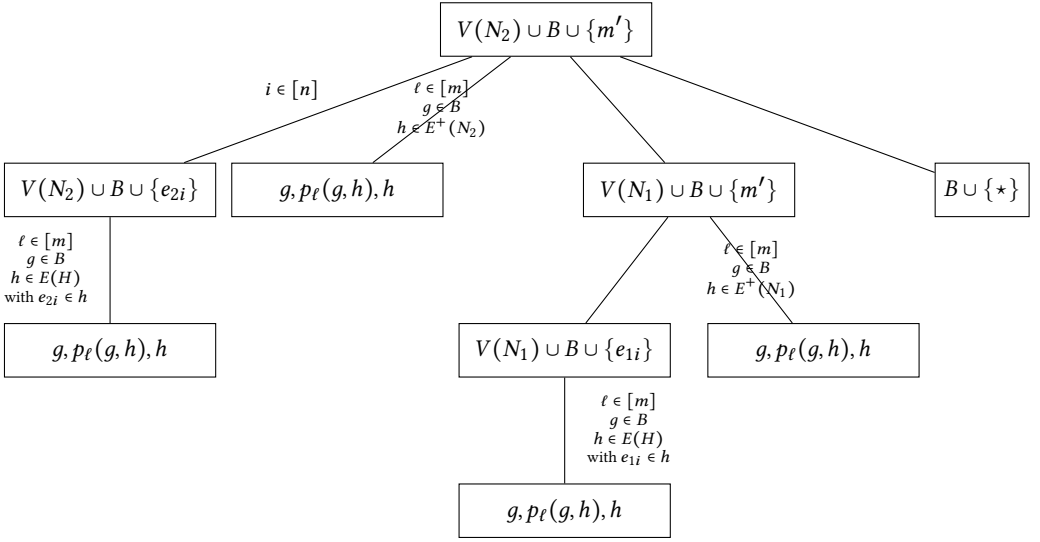
\end{proof}
The corresponding results in \cite{adlermarshals} are stated in terms of games, in particular it is shown that the gap between marshal width $mw$ and monotone marshal width $mon$-$mw$ on $H_{BOG}$ is at least $n$. It has been considered folklore that this implies also that the gap between $\ghw$ and $\hw$ is at least $n$ on these hypergraphs (and thus arbitrarily high in general). But to the best of our knowledge, no proof of this is explicitly written down in the literature. The problem being, that while $mon$-$mw(H) = \hw(H)$, the marshal width itself is not necessarily equal to $\ghw$ but only a lower bound.
We note that \Cref{fig:gapproof} is also a generalised hypertree decomposition of width $s+1$, and also serves to fill in this small gap in the literature.

 \section{Further Details on the Implementation}
 \label{app:implementation}

This section is meant to give full technical details on the tools that we used and implemented to perform the experiments. It will also detail the cost functions, which extract statistics from the database to assign a cost estimate to a CTD, where a lower value indicates lower estimated runtime. 
As was mentioned in \Cref{sec:exp}, our tools are based on existing work published in~\cite{bohm2024rewrite, DBLP:journals/corr/abs-2303-02723}. Our source code is available 
publicly:  \url{https://github.com/cem-okulmus/softhw-pods25}.

\subsection{Implementation Overview}
\label{app:implementation-overview}

Our system evaluates SQL queries over a database, but does so by way of first finding an optimal decomposition, then using the decomposition to produce a \emph{rewriting} - a series of SQL queries which together produce the semantically equivalent result as the input query -- and then runs this rewriting on the target DBMS. This procedure is meant to guide the DBMS by exploiting the structure present in the optimal decomposition. 
A graphical overview of the system is given in Figure \ref{fig:rewriting-system}. For simplicity, we assume here that only Postgres is used.

The system consists of two major components -- a Python library providing an interface to the user, which handles the search for the best decompositions, and a Scala component which connects to the DBMS and makes use of Apache Calcite to parse the SQL query 
and extract its hypergraph structure, as well as extracing from the DBMS node cost information (which we will detail further below) and generate the rewritings.

The Python library -- referred to as \textit{QueryRewriterPython} -- acts as a proxy to the DBMS, and is thus initialised like a standard DBMS connection. The library is capable of returning not just the optimal rewriting, but can provide the top-$n$ best rewritings, as reported on in \Cref{app:expdetails}.
As seen in Figure \ref{fig:rewriting-system}, on start up QueryRewriterPython starts  \textit{QueryRewriterScala} as a sub-process.
QueryRewriterPython communicates with it via RPC calls using Py4J. QueryRewriterScala obtains the JDBC connection details from QueryRewriterPython, and connects to the DBMS using Apache Calcite, in order to be able to access the schema information required later.
The input SQL query is then passed to QueryRewriterScala. Using Apache Calcite, and the schema information from the database, the input SQL query is parsed and converted into a logical query plan. After applying simple optimiser rules in order to obtain a convenient representation, the join structure of the plan is extracted and used to construct a hypergraph. Subtrees in the logical representation of the query, which are the inputs to  joins, such as table scans followed by projection and filters, are kept track of, and form the hyperedges of this hypergraph. Later, when creating the rewriting, the corresponding SQL queries  for these subtrees is used in the generation of the leaf node VIEW expressions.
After retrieving the hypergraph, the QueryRewriterPython enumerates the possible covers, i.e., hypertree nodes, whose size is based on the width parameter $k$. The list of nodes is sent to the QueryRewriterScala, where the costs of each of these nodes are estimated. We will detail the two used cost functions in the next subsection.
Once all the cost functions have been computed, the QueryRewriterPython  follows an algorithm in the vein of Algorithm \ref{alg:ctd.constraint}, in order to find the best or the top-$n$ best decompositions.
Finally, the decompositions are passed to the QueryRewriterScala in order to generate the rewritings.

\begin{figure}
    \centering
    \includegraphics[width=\linewidth]{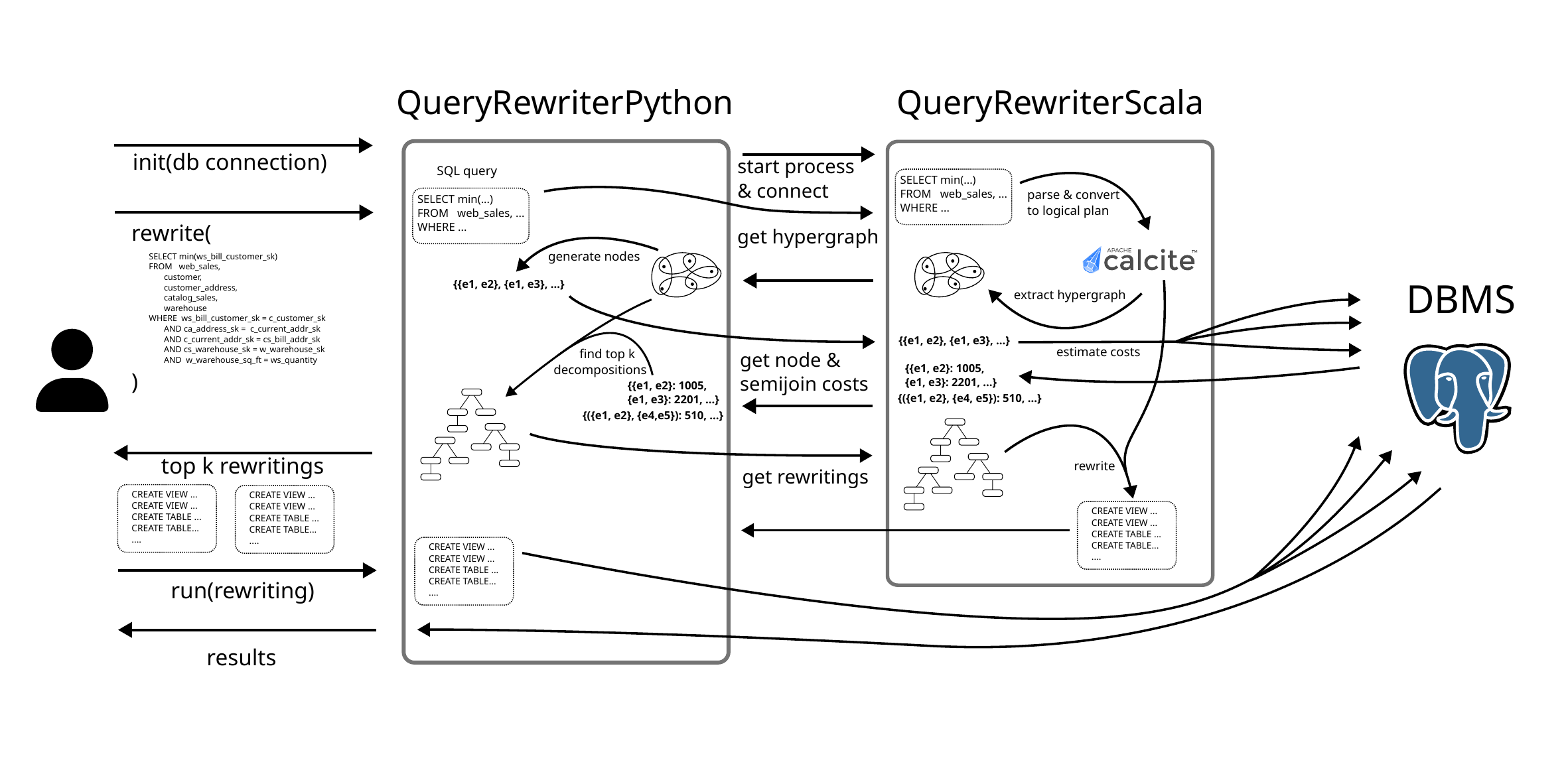}
    \caption{Overview of the components of the query rewriting system}
    \label{fig:rewriting-system}
\end{figure}

\subsection{Cost Functions for Section \ref{sec:exp}}
\label{sec:cost.functions.app}

In this section we provide full details for the cost functions implemented for our optimisation experiments in \Cref{sec:exp}.

\subsubsection{A Cost Function Based on DBMS-Estimates}

To estimate the costs of the decompositions, we make use of Postgres' cost estimates, as they are internally used by the system for finding a good query plan. Postgres estimates the costs of a query plan in abstract units based on assumptions about lower-level costs in the system, such as disk I/O, or CPU operations. By making use of EXPLAIN statements, it is possible to retrieve the total cost of a plan estimated by the DBMS.

When estimating the costs of bags in the decomposition, we construct the join query corresponding to the bag, and let Postgres estimate the cost of this query in an EXPLAIN statement.

\begin{equation}
    \mathit{cost}(u) := \begin{cases}
        C(\pi_{B_u}(R_1 \bowtie R_2 \bowtie \cdots \bowtie R_n)) & \text{if } n > 1 \\
        0 & \text{if } n=1
    \end{cases}
    \label{cost.bag.pg}
\end{equation}
where $C(q)$ corresponds to the costs estimated by Postgres for the query $q$.

To estimate the costs of a subtree, we also consider the costs of computing the bottom-up semijoins. Because Postgres includes the costs of scanning the relations which are semijoined, and any joins inside the bags, in the total costs, we have to subtract these costs from the semijoin-costs. Due to noisy estimates, we have to avoid the total cost becoming negative, and thus set a minimum cost.

\begin{equation}
    \mathit{cost}(T_p) := \mathit{cost}(p) + \sum_{i=1}^k\left( \mathit{cost}(T_{c_i})  + min(C(J_p \ltimes J_{c_i}) - C(J_p) - C(J_{c_i}), 1) \right)
\end{equation}

\subsubsection{A Cost Function Based on Actual Cardinalities}
\label{ssub:cost-card}
Because we observe problematic unreliability of cost estimates from the DBMS, we also consider an idealised cost function, that is omniscient about bag sizes and uses these to estimate cost of individual operations. This has the weakness of not taking physical information, such as whether relations are already in memory (or even fit wholly in memory) or specific implementation behaviour (which we surprisingly observe to be highly relevant in our experiments). Nonetheless, we find it instructive to compare these costs to the estimate based cost function to correct for cases where estimates are wildly inaccurate.

First, the cost for bags. We will assume we know the cardinality of the bag (simulating a good query planner) and combine this with the size of the relations that make up the join. The idea is simple, to compute the join we need to at least scan every relation that makes up the join (that is, linear effort in the size of the relations). And then we also need to create the new relation, which takes effort linear in the size of the join result. We thus get for a node $u$ of a decomposition, whose bag is created by cover $R_1,\dots,R_n$:
\begin{equation}
    \mathit{cost}(u) := \begin{cases}
        |J_u| + |\sum_{i=1}^n |R_i|  \cdot \log|R_i|& \text{if } n > 1 \\
        0 & \text{if } n=1
    \end{cases}
    \label{cost.bag}
\end{equation}
where $J_u = \pi_{B_u}(R_1 \bowtie R_2 \bowtie \cdots \bowtie R_n)$, i.e., table for the node.

The cost for a subtree then builds on this together with some estimate of how effective the semi-joins from the subtrees will be. We will estimate this very simply, by seeing how many non primary-key attributes of the parent appear in a subtree. The point being that if something is a FK in the parent, we assume that every semi-join will do nothing. If something is not a PK/FK relationship in this direction, then we assume some multiplicative reduction in tuples for each such attribute.
For every node $p$ in a rooted decomposition we thus define $\mathit{ReduceAttrs}(p)$ as the set containing every attribute $A$ in $B_p$, for which $A$ appears in a subtree rooted at a child of $p$, where it does not come from a relation where $A$ is the primary key.

With this in hand we then estimate the size of the node after the up-phase semi-joins have reached it:
\begin{equation}
    \mathit{ReducedSz(u)}:=  \begin{cases}
                0 & \text{if } ReducedSz(c)=0 \text{ for any child $c$ of $u$}  \\
                \frac{|J_u|}{1+|\mathit{ReduceAttrs(u)}|} & \text{otherwise}
    \end{cases}
\end{equation}
Importantly, if relation $C$ is empty, PostgreSQL never scans $U$ in the semi-join $U\ltimes C$, to capture this define the following for nodes $u$.
$$ScanCost(u) := \begin{cases}
    0 & \min_{c\in children(u)} ReducedSz(c)=0 \\
    |J_u| \cdot \log|J_u| & \text{otherwise}
\end{cases}
$$
The cost for a subtree $T_p$ rooted at $p$, with children $c_1, \dots, c_k$ is then as follows:
\begin{equation}
    \mathit{cost}(T_p) := \mathit{cost}(p) + ScanCost(p) + \sum_{i=1}^k\left( \mathit{cost}(T_{c_i})  + \mathit{ReducedSz(c_i)}\cdot \log \mathit{ReducedSz(c_i)}\right)
\end{equation}
which corresponds to computing the bag for $p$, computing the subtrees for each child, and then finally also using $ReducedSz(c_i)$ as a proxy for the cost of the semi-join $p \ltimes c_i$. From experiments, it seems like PostgreSQL actually does the initial loop over the right-hand side of the semi-join. So when $c_i$ is empty, it stops instantly, but when $p$ is empty it does a full scan of $c_i$. This is better reflected by simply taking the reduced child size as the cost.

 \section{Further Details on the Experimental Evaluation}
 \label{app:expdetails}

In this section, we provide further details on our experimental evaluation. More specifically, 
we 
present further details on the experimental setup experimental results in Section~\ref{app:experimental-setup}, and
details on the SQL queries used in our experiments in Section~\ref{app:SQL-queries}.

\subsection{Experimental Setup}
\label{app:experimental-setup}

Our experiments use the well-known and publicly available open-source database system PostgreSQL as our target DBMS.

\paragraph{Experimental Setup.} The tests were run on a test server with an AMD EPYC-Milan Processor with 16 cores, run at a max. 2GHz clock speed. It had 128 GB RAM and the test data and database was stored and running off of a 500GB SSD.  The operating system was Ubuntu 22.04.2 LTS, running Linux kernel with version
5.15.0-75-generic.
We executed the tests using a JupyterLab Notebook, which we make available 
in \url{https://github.com/cem-okulmus/softhw-pods25}.
This also includes the source code of the Scala library that handles the tasks of parsing the input query, producing the hypergraph and extracting costs from the target DBMS. We also include all experimental data collected in this repository.

\paragraph{Benchmarks.} For the benchmarks we consider three datasets. There is TPC-DS, a benchmark created for decision support systems\footnote{\url{https://www.tpc.org/tpcds/}}. For our experiments we use the scaling factor 10. Furthermore, we also consider the Hetionet database, where we specifically focus on queries already presented in \cite{DBLP:journals/pvldb/BirlerKN24}. There is no scaling factors for Hetionet, the entire dataset was used as-is.
Finally, we also feature the ``Labelled Subgraph Query Benchmark'', or LSQB for short, which was first presented in~\cite{DBLP:conf/sigmod/MhedhbiLKWS21}. As with TPC-DS, we also used the scaling factor 10.

\paragraph{Queries.} Since the focus is on queries that exhibit high join costs, we opted to manually construct such challenging yet still practical examples, due to the fact that many benchmarks, such as TPC-DS opt to focus on cases where joins are always along primary or foreign keys, and in case of foreign keys only to join them with the respect table they are a key of. This means that joins do not produce significantly more rows than are present in the involved tables: to see this, we just observe that a join over a key can, by the uniqueness condition, only find exactly one match (and due ``presence condition'', must always succeed). While this scenario makes sense when users have carefully prepared databases with a clear schema, we believe and indeed users reports suggest, that queries with a large number of tables with very high intermediate join sizes occur in practice.  So to get any useful insights into our frameworks applicability, we forgo the default provided queries and pick a number of handcrafted ``tough'' join queries, putting more focus on their evaluation complexity and less on their semantics. 
A notable exception here are the queries from Hetionet, which we directly sourced together with data from \cite{DBLP:journals/pvldb/BirlerKN24}, which are used as-is, since they already exhibit complex, cyclic join queries.

This is not meant to be a thorough, systematic analysis, which would involve a very large number of such queries, tested across multiple systems and in various scenarios. That goes beyond the scope of this paper, which only aims to show the \emph{practical potential} of our framework. Implementing it in a system which enables reliable, robust performance across all kinds of queries is left as future work, once its potential has been proven.

\lstset{upquote=true}

\lstdefinestyle{mystyle}{
    commentstyle=\color{codegreen},
    keywordstyle=\color{codepurple},
    numberstyle=\numberstyle,
    stringstyle=\color{codepurple},
    basicstyle=\footnotesize\ttfamily,
    breakatwhitespace=false,
    breaklines=true,
    captionpos=b,
    keepspaces=true,
    numbers=left,
    numbersep=10pt,
    showspaces=false,
    showstringspaces=false,
    showtabs=false,
}
\lstdefinestyle{mystyle2}{
    commentstyle=\color{codegreen},
    keywordstyle=\color{codepurple},
    numberstyle=\numberstyle,
    stringstyle=\color{codepurple},
    basicstyle=\scriptsize\ttfamily,
    breakatwhitespace=false,
    breaklines=true,
    captionpos=b,
    keepspaces=true,
    numbers=left,
    numbersep=10pt,
    showspaces=false,
    showstringspaces=false,
    showtabs=false,
}
\lstset{style=mystyle}

\newcommand\numberstyle[1]{%
    \footnotesize
    \color{codegray}%
    \ttfamily
    \ifnum#1<10 0\fi#1 |%
}

\definecolor{codegreen}{rgb}{0,0.6,0}
\definecolor{codegray}{rgb}{0.5,0.5,0.5}
\definecolor{codepurple}{HTML}{C42043}
\definecolor{backcolour}{HTML}{F2F2F2}
\definecolor{bookColor}{cmyk}{0,0,0,0.90}  

\newpage

\subsection{Used SQL Queries}
\label{app:SQL-queries}

\begin{lstlisting}[ caption={Query $ q^\text{ds} $ on TPC-DS},
                    language=SQL,
                    deletekeywords={IDENTITY},
                    deletekeywords={[2]INT},
                    morekeywords={clustered},
                    framesep=8pt,
                    xleftmargin=40pt,
                    framexleftmargin=40pt,
                    frame=tb,
                    framerule=0pt ]
SELECT MIN(ws_bill_customer_sk)
FROM   web_sales, 
       customer, 
       customer_address,
       catalog_sales,
       warehouse
WHERE  ws_bill_customer_sk = c_customer_sk 
       AND ca_address_sk =  c_current_addr_sk 
       AND c_current_addr_sk = cs_bill_addr_sk
       AND cs_warehouse_sk = w_warehouse_sk
       AND  w_warehouse_sq_ft = ws_quantity
\end{lstlisting}

\begin{lstlisting}[style=mystyle2, caption={Query $ q^\text{hto} $ on Hetionet},
                    language=SQL,
                    deletekeywords={IDENTITY},
                    deletekeywords={[2]INT},
                    morekeywords={clustered},
                    framesep=8pt,
                    xleftmargin=40pt,
                    framexleftmargin=40pt,
                    frame=tb,
                    framerule=0pt ]
SELECT MIN(hetio45173_0.s)
FROM   hetio45173 AS hetio45173_0, hetio45173 AS hetio45173_1, 
       hetio45160 AS hetio45160_2, hetio45160 AS hetio45160_3, 
       hetio45160 AS hetio45160_4, hetio45159 AS hetio45159_5, 
       hetio45159 AS hetio45159_6 
WHERE  hetio45173_0.s = hetio45173_1.s AND hetio45173_0.d = hetio45160_2.s AND 
       hetio45173_1.d = hetio45160_3.s AND hetio45160_2.d = hetio45160_3.d AND 
       hetio45160_3.d = hetio45160_4.s AND hetio45160_4.s = hetio45159_5.s AND 
       hetio45160_4.d = hetio45159_6.s AND hetio45159_5.d = hetio45159_6.d
\end{lstlisting}

\begin{lstlisting}[style=mystyle2,  caption={Query $ q_2^\text{hto} $ on Hetionet},
                    language=SQL,
                    deletekeywords={IDENTITY},
                    deletekeywords={[2]INT},
                    morekeywords={clustered},
                    framesep=8pt,
                    xleftmargin=40pt,
                    framexleftmargin=40pt,
                    frame=tb,
                    framerule=0pt ]
SELECT  MAX(hetio45160.d) 
FROM    hetio45173 AS hetio45173_0, hetio45173 AS hetio45173_1, hetio45173 AS
        hetio45173_2, hetio45173 AS hetio45173_3, hetio45160, hetio45176 AS
        hetio45176_5, hetio45176 AS hetio45176_6 
WHERE   hetio45173_0.s = hetio45173_1.s AND hetio45173_0.d = hetio45173_2.s AND 
        hetio45173_1.d = hetio45173_3.s AND hetio45173_2.d = hetio45173_3.d AND 
        hetio45173_3.d = hetio45160.s AND hetio45160.s = hetio45176_5.s AND 
        hetio45160.d = hetio45176_6.s AND hetio45176_5.d = hetio45176_6.d
\end{lstlisting}

\begin{lstlisting}[style=mystyle2, caption={Query $q_3^\text{hto} $ on Hetionet},
                    language=SQL,
                    deletekeywords={IDENTITY},
                    deletekeywords={[2]INT},
                    morekeywords={clustered},
                    framesep=8pt,
                    xleftmargin=40pt,
                    framexleftmargin=40pt,
                    frame=tb,
                    framerule=0pt ]
SELECT  MIN(hetio45173_2.d)
FROM    hetio45173 AS hetio45173_0, hetio45173 AS hetio45173_1, hetio45173 AS 
        hetio45173_2, hetio45173 AS hetio45173_3 
WHERE   hetio45173_0.s = hetio45173_1.s AND hetio45173_0.d = hetio45173_2.s 
        AND hetio45173_1.d = hetio45173_3.d AND hetio45173_2.d = hetio45173_3.s
\end{lstlisting}

\begin{lstlisting}[style=mystyle2,  caption={Query $q_4^\text{hto} $ on Hetionet},
                    language=SQL,
                    deletekeywords={IDENTITY},
                    deletekeywords={[2]INT},
                    morekeywords={clustered},
                    framesep=8pt,
                    xleftmargin=40pt,
                    framexleftmargin=40pt,
                    frame=tb,
                    framerule=0pt ]
SELECT  MIN(hetio45160_0.s) 
FROM    hetio45160 AS hetio45160_0, hetio45160 AS hetio45160_1, 
        hetio45177, hetio45160 AS hetio45160_3, hetio45159 AS
        hetio45159_4, hetio45159 AS hetio45159_5 
WHERE   hetio45160_0.s = hetio45160_1.s AND hetio45160_0.d = hetio45177.s 
        AND hetio45160_1.d = hetio45177.d AND hetio45177.d = hetio45160_3.s 
        AND hetio45160_3.s = hetio45159_4.s AND hetio45160_3.d = hetio45159_5.s 
        AND hetio45159_4.d = hetio45159_5.d
\end{lstlisting}

\begin{lstlisting}[ caption={Query $q^\text{lb} $ on LSQB},
                    language=SQL,
                    deletekeywords={IDENTITY},
                    deletekeywords={[2]INT},
                    morekeywords={clustered},
                    framesep=8pt,
                    xleftmargin=40pt,
                    framexleftmargin=40pt,
                    frame=tb,
                    framerule=0pt ]
SELECT MIN(pkp1.Person1Id)
FROM City AS CityA
JOIN City AS CityB
  ON CityB.isPartOf_CountryId = CityA.isPartOf_CountryId
JOIN City AS CityC
  ON CityC.isPartOf_CountryId = CityA.isPartOf_CountryId
JOIN Person AS PersonA
  ON PersonA.isLocatedIn_CityId = CityA.CityId
JOIN Person AS PersonB
  ON PersonB.isLocatedIn_CityId = CityB.CityId
JOIN Person_knows_Person AS pkp1
  ON pkp1.Person1Id = PersonA.PersonId
 AND pkp1.Person2Id = PersonB.PersonId
\end{lstlisting}

In \Cref{tab:stats_queries} we provide a number of statistics on these queries, such as their $\shw$ and other measures, as explained in the caption. 

\begin{table}
    \centering
    \begin{tabular}{lccccc }
    \toprule
        Query & $\concover$-$\mathit{shw}$ & $|H|  $ & $\softbagshk$ & $\concover$-$\softbagshk$  & Time to produce   \\
           &  &  &  &  & top-$10$ best TDs \\
    \midrule
         $q^\text{ds} $ & 2 & 5  & \phantom{0}9  & \phantom{0}8   & \phantom{0}7.67 ms \\
         $q^\text{hto} $ & 2 & 7 & 25 & 16 & 27.87 ms \\
         $q_2^\text{hto}$ & 2  & 7  & 25 & 16 & 26.58 ms \\
         $q_3^\text{hto}$ & 2 & 4 & \phantom{0}9 & \phantom{0}8 & \phantom{0}3.26 ms \\
         $q_4^\text{hto}$ & 2 & 6  & 17  &  12  &23.26 ms \\
         $q^\text{lb}$ & 3 &  6 & 17 &  15  & 26.42 ms\\
    \bottomrule
    \end{tabular}
    \caption{For each of the 6 queries, we list its connected SoftHW, the size of its hypergraph, the size of its soft set and the size of its connected  soft set. In addition we report on the time it took produce the top-$10$ best TDs, according to the cost measure from \cref{ssub:cost-card}.}
    \label{tab:stats_queries}
\end{table}

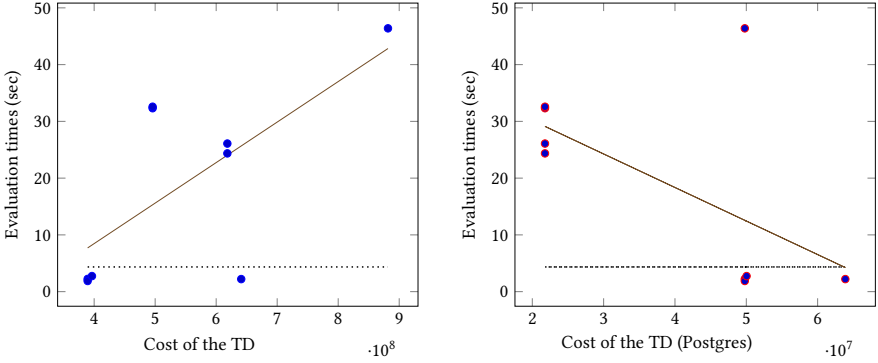
\begin{figure}[t]
  \centering
\begin{minipage}[t]{6cm}
\vspace{0pt}  \centering
  \begin{tikzpicture} [transform shape,scale=0.7]
\begin{axis}[ 
  ylabel={Evaluation times (sec)},
  ]

\addplot table [only marks, mark = square, x=cost, y=runtime, col sep=comma] {experiments/psql_tpcds_q1_avg3_idealcost_sortedTDs.csv};

\addplot [dotted, mark=none, color=black, thick] table [ x=cost, y=baseline, col sep=comma] {experiments/psql_tpcds_q1_avg3_idealcost_sortedTDs.csv};

\addplot+ [no markers] table[col  sep=comma, x=cost,
    y={create col/linear regression={y=runtime}}]{experiments/psql_tpcds_q1_avg3_idealcost_sortedTDs.csv};
\end{axis}
\node at (2.7, -0.7)  {Cost of the TD};
\end{tikzpicture}  
\end{minipage}
\begin{minipage}[t]{6cm}
\vspace{0pt}  \centering
  \begin{tikzpicture} [transform shape,scale=0.7]
\begin{axis}[ 
  ylabel={Evaluation times (sec)},
  ]

\addplot+ [color=red] table [only marks, mark = square, x=cost_ideal, y=runtime, col sep=comma] {experiments/psql_tpcds_q1_avg3_idealcost_sortedTDs.csv};

\addplot [dotted, no markers, color=black, thick] table [ x=cost_ideal, y=baseline, col sep=comma] {experiments/psql_tpcds_q1_avg3_idealcost_sortedTDs.csv};

\addplot+ [no markers] table[col  sep=comma,x=cost_ideal,
    y={create col/linear regression={y=runtime}}]{experiments/psql_tpcds_q1_avg3_idealcost_sortedTDs.csv};

\end{axis}
\node at (2.7, -0.7)  {Cost of the TD (Postgres)};
\end{tikzpicture}  
\end{minipage}

\caption{ Performance over $q_{\text{ds}}$ over the TPC-DS benchmark using PostgreSQL as a backend.
}
\label{fig:postgres_tpcds}
\end{figure}

\begin{figure}[t]
  \centering
\vspace{0pt}  \centering

\begin{minipage}[t]{6cm}
\vspace{0pt}  \centering
  \begin{tikzpicture} [transform shape,scale=0.7]
\begin{axis}[ 
  ylabel={Evaluation times (sec)},
  ]

\addplot table [only marks, mark = square, x=cost, y=runtime, col sep=comma] {experiments/psql_hetio_q1_avg3_idealcost_sortedTDs.csv};

\addplot+ [no markers] table[col  sep=comma,x=cost,
    y={create col/linear regression={y=runtime}}]{experiments/psql_hetio_q1_avg3_idealcost_sortedTDs.csv};
\end{axis}
\node at (2.7, -0.7)  {Cost of the TD};

\node at (3.2, 3.8)  {Baseline: 32.50 sec};
\end{tikzpicture}  
\end{minipage}
\begin{minipage}[t]{6cm}
\vspace{0pt}  \centering
  \begin{tikzpicture} [transform shape,scale=0.7]
\begin{axis}[ 
  ylabel={Evaluation times (sec)},
  ]

\addplot+ [color=red] table [only marks, mark = square, x=cost_ideal, y=runtime, col sep=comma] {experiments/psql_hetio_q1_avg3_idealcost_sortedTDs.csv};

\addplot+ [no markers] table[col  sep=comma,x=cost_ideal,
    y={create col/linear regression={y=runtime}}]{experiments/psql_hetio_q1_avg3_idealcost_sortedTDs.csv};

\end{axis}
\node at (2.7, -0.7)  {Cost of the TD (Postgres)};

\end{tikzpicture}  
\end{minipage}

\caption{ Performance over $q_{\text{hto}}$ over the Hetionet using PostgreSQL as a backend.
}
\label{fig:postgres_hetio}
\end{figure}

\begin{figure}[t]
  \centering
\vspace{0pt}  \centering

\begin{minipage}[t]{6cm}
\vspace{0pt}  \centering
  \begin{tikzpicture} [transform shape,scale=0.7]
\begin{axis}[ 
  ylabel={Evaluation times (sec)},
  ]

\addplot table [only marks, mark = square, x=cost, y=runtime, col sep=comma] {experiments/psql_hetio_q2_avg3_idealcost_sortedTDs_top10.csv};

\addplot+ [no markers] table[col  sep=comma,x=cost,
    y={create col/linear regression={y=runtime}}]{experiments/psql_hetio_q2_avg3_idealcost_sortedTDs_top10.csv};
\end{axis}
\node at (2.7, -0.7)  {Cost of the TD};

\node at (3.2, 3.8)  {Baseline: 9.83 sec};
\end{tikzpicture}  
\end{minipage}
\begin{minipage}[t]{6cm}
\vspace{0pt}  \centering
  \begin{tikzpicture} [transform shape,scale=0.7]
\begin{axis}[ 
  ylabel={Evaluation times (sec)},
  ]

\addplot+ [color=red] table [only marks, mark = square, x=costideal, y=runtime, col sep=comma] {experiments/psql_hetio_q2_avg3_idealcost_sortedTDs_top10.csv};

\addplot+ [no markers] table[col  sep=comma,x=costideal,
    y={create col/linear regression={y=runtime}}]{experiments/psql_hetio_q2_avg3_idealcost_sortedTDs_top10.csv};

\end{axis}
\node at (2.7, -0.7)  {Cost of the TD (Postgres)};

\end{tikzpicture}  
\end{minipage}

\caption{ Performance over $q2_{\text{hto}}$ over the Hetionet using PostgreSQL as a backend.
}
\label{fig:postgres_hetio2}
\end{figure}

\begin{figure}[t]
  \centering
\vspace{0pt}  \centering

\begin{minipage}[t]{6cm}
\vspace{0pt}  \centering
  \begin{tikzpicture} [transform shape,scale=0.7]
\begin{axis}[ 
  ylabel={Evaluation times (sec)},
  ]

\addplot table [only marks, mark = square, x=cost, y=runtime, col sep=comma] {experiments/psql_hetio_q3_avg3_idealcost_k3_sortedTDs.csv};

\addplot [dotted, mark=none, color=black, thick] table [ x=cost, y=baseline, col sep=comma] {experiments/psql_hetio_q3_avg3_idealcost_k3_sortedTDs.csv};

\addplot+ [no markers] table[col  sep=comma,x=cost,
    y={create col/linear regression={y=runtime}}]{experiments/psql_hetio_q3_avg3_idealcost_k3_sortedTDs.csv};
\end{axis}
\node at (2.7, -0.7)  {Cost of the TD};
\end{tikzpicture}  
\end{minipage}
\begin{minipage}[t]{6cm}
\vspace{0pt}  \centering
  \begin{tikzpicture} [transform shape,scale=0.7]
\begin{axis}[ 
  ylabel={Evaluation times (sec)},
  ]

\addplot+ [color=red] table [only marks, mark = square, x=cost_ideal, y=runtime, col sep=comma] {experiments/psql_hetio_q3_avg3_idealcost_k3_sortedTDs.csv};

\addplot [dotted, no markers, color=black, thick] table [ x=cost_ideal, y=baseline, col sep=comma] {experiments/psql_hetio_q3_avg3_idealcost_k3_sortedTDs.csv};

\addplot+ [no markers] table[col  sep=comma, x=cost_ideal,
    y={create col/linear regression={y=runtime}}]{experiments//psql_hetio_q3_avg3_idealcost_k3_sortedTDs.csv};

\end{axis}
\node at (2.7, -0.7)  {Cost of the TD (Postgres)};
\end{tikzpicture}  
\end{minipage}

\caption{ Performance over $q3_{\text{hto}}$ over the Hetionet using PostgreSQL as a backend.
}
\label{fig:postgres_hetio3}
\end{figure}

\begin{figure}[t]
  \centering
\vspace{0pt}  \centering

\begin{minipage}[t]{6cm}
\vspace{0pt}  \centering
  \begin{tikzpicture} [transform shape,scale=0.7]
\begin{axis}[ 
  ylabel={Evaluation times (sec)},
  ]

\addplot table [only marks, mark = square, x=cost, y=runtime, col sep=comma] {experiments/psql_hetio_q4_avg3_idealcost_k2_sortedTDs.csv};

\addplot [dotted, mark=none, color=black, thick] table [ x=cost, y=baseline, col sep=comma] {experiments/psql_hetio_q4_avg3_idealcost_k2_sortedTDs.csv};

\addplot+ [no markers] table[col  sep=comma,x=cost,
    y={create col/linear regression={y=runtime}}]{experiments/psql_hetio_q4_avg3_idealcost_k2_sortedTDs.csv};
\end{axis}
\node at (2.7, -0.7)  {Cost of the TD};
\end{tikzpicture}  
\end{minipage}
\begin{minipage}[t]{6cm}
\vspace{0pt}  \centering
  \begin{tikzpicture} [transform shape,scale=0.7]
\begin{axis}[ 
  ylabel={Evaluation times (sec)},
  ]

\addplot+ [color=red] table [only marks, mark = square, x=cost_ideal, y=runtime, col sep=comma] {experiments/psql_hetio_q4_avg3_idealcost_k2_sortedTDs.csv};

\addplot [dotted, no markers, color=black, thick] table [ x=cost_ideal, y=baseline, col sep=comma] {experiments/psql_hetio_q4_avg3_idealcost_k2_sortedTDs.csv};

\addplot+ [no markers] table[col  sep=comma,x=cost_ideal,
    y={create col/linear regression={y=runtime}}]{experiments/psql_hetio_q4_avg3_idealcost_k2_sortedTDs.csv};

\end{axis}
\node at (2.7, -0.7)  {Cost of the TD (Postgres)};
\end{tikzpicture}  
\end{minipage}

\caption{ Performance over $q4_{\text{hto}}$ over the Hetionet using PostgreSQL as a backend.
}
\label{fig:postgres_hetio4}
\end{figure}

\begin{figure}[t]
  \centering
\vspace{0pt}  \centering

\begin{minipage}[t]{6cm}
\vspace{0pt}  \centering
  \begin{tikzpicture} [transform shape,scale=0.7]
\begin{axis}[ 
  ylabel={Evaluation times (sec)},
  ]

\addplot table [only marks, mark = square, x=cost, y=runtime, col sep=comma] {experiments/psql_lsqb_q1_avg3_idealcost_sortedTDs.csv};

\addplot [dotted, mark=none, color=black, thick] table [ x=cost, y=baseline, col sep=comma] {experiments/psql_lsqb_q1_avg3_idealcost_sortedTDs.csv};

\addplot+ [no markers] table[col  sep=comma,x=cost,
    y={create col/linear regression={y=runtime}}]{experiments/psql_lsqb_q1_avg3_idealcost_sortedTDs.csv};
\end{axis}
\node at (2.7, -0.7)  {Cost of the TD};
\end{tikzpicture}  
\end{minipage}
\begin{minipage}[t]{6cm}
\vspace{0pt}  \centering
  \begin{tikzpicture} [transform shape,scale=0.7]
\begin{axis}[ 
  ylabel={Evaluation times (sec)},
  ]

\addplot+ [color=red] table [only marks, mark = square, x=cost_ideal, y=runtime, col sep=comma] {experiments/psql_lsqb_q1_avg3_idealcost_sortedTDs.csv};

\addplot [dotted, no markers, color=black, thick] table [ x=cost_ideal, y=baseline, col sep=comma] {experiments/psql_lsqb_q1_avg3_idealcost_sortedTDs.csv};

\addplot+ [no markers] table[col  sep=comma,x=cost_ideal,
    y={create col/linear regression={y=runtime}}]{experiments/psql_lsqb_q1_avg3_idealcost_sortedTDs.csv};

\end{axis}
\node at (2.7, -0.7)  {Cost of the TD (Postgres)};
\end{tikzpicture}  
\end{minipage}

\caption{ Performance over $q_{\text{lb}}$ over the LSQB Benchmark using PostgreSQL as a backend.
}
\label{fig:postgres_lsqb}
\end{figure}

\end{document}